\providecommand{\U}[1]{\protect\rule{.1in}{.1in}}
\newtheorem{theorem}{Theorem}
\newtheorem{definition}[theorem]{Definition}
\begin{document}

\title{Minimal-memory, non-catastrophic, polynomial-depth quantum convolutional encoders}
\author{Monireh Houshmand, Saied Hosseini-Khayat, and Mark M. Wilde \thanks{Monireh
Houshmand and Saied Hosseini-Khayat are with the Department of Electrical
Engineering, Ferdowsi University of Mashhad, Mashhad, Iran. Mark M. Wilde is
with the School of Computer Science, McGill University, Montreal, Qu\'{e}bec,
Canada H3A 2A7. (E-mails: monireh.houshmand@gmail.com; shk@ieee.org; mark.wilde@mcgill.ca)}}
\maketitle

\begin{abstract}
Quantum convolutional coding is a technique for encoding a stream of quantum
information before transmitting it over a noisy quantum channel.
Two important goals in the design of quantum convolutional encoders are to
minimize the memory required by them and to avoid the catastrophic propagation
of errors. In a previous paper, we determined minimal-memory,
non-catastrophic, polynomial-depth encoders for a few exemplary quantum
convolutional codes. In this paper, we elucidate a general technique for
finding an encoder of an arbitrary quantum convolutional code such that the
encoder possesses these desirable properties. We also provide an elementary
proof that these encoders are non-recursive. Finally, we apply our
technique to many quantum convolutional codes from the literature.

\end{abstract}
\begin{IEEEkeywords}quantum convolutional codes, minimal memory, catastrophicity,
memory commutativity matrix
\end{IEEEkeywords}

\section{Introduction}

A quantum convolutional code is a particular type of quantum error-correcting
code~\cite{PhysRevA.52.R2493,thesis97gottesman,PhysRevLett.77.793}\ that is
well-suited for the regime of quantum
communication~\cite{PhysRevLett.91.177902,ieee2007forney,GR06}. In this
regime, we assume that a sender and receiver have free access to local,
noiseless quantum computers, and the only source of noise is due to a quantum
communication channel connecting the sender to the receiver. The advantage of
the convolutional approach to quantum error correction is that the repeated
application of the same unitary operation encodes a stream of quantum
information, and the complexity of the decoding algorithm is linear in the
length of the qubit stream~\cite{PTO09}. Many researchers have generated a
notable literature on this topic, addressing various issues such as code
constructions~\cite{cwit2007aly,arx2007aly}, encoders and
decoders~\cite{GR06,GR06b,GR07}, and alternate paradigms with entanglement
assistance~\cite{WB07,WB08,WB09}\ or with gauge qubits and classical
bits~\cite{WB08a}. Perhaps more importantly for the quantum communication
paradigm, quantum convolutional codes are the constituents of a quantum serial
turbo code~\cite{PTO09}, and these codes are among the highest performing
codes in both the standard~\cite{PTO09} and entanglement-assisted
settings~\cite{WH10b}.

One of the most important parameters for a quantum convolutional encoder is
the size of its memory, defined as the number of qubits that are fed from its
output into the next round of encoding. A quantum convolutional encoder with a
large memory is generally more difficult to implement because it requires
coherent control of a large number of qubits. Furthermore, the complexity of
the decoding algorithm for a quantum convolutional code is linear in the
length of the qubit stream, but it is exponential in the size of the
memory~\cite{PTO09}. The decoding algorithm will thus have more delay for a
larger memory, and this could potentially lead to further errors in the more
practical setting where there is local noise at the receiving end. Therefore,
an interesting and legitimate question is to determine the minimal number of
memory qubits required to implement a given quantum convolutional code.

Another property that any good quantum convolutional \textit{decoder} should
possess is non-catastrophicity. As the name suggests, the consequences of
decoding with a catastrophic decoder are disastrous---it can propagate some
uncorrected errors infinitely throughout the decoded information qubit stream
and the receiver will not know that this is happening. We should clarify that
catastrophicity is a property of the decoder because the only errors that
occur in the communication paradigm are those due to the channel, and thus the
decoder (and not the encoder) has the potential to propagate uncorrected
errors. Though, we could say just as well that catastrophicity is a property
of an encoder if the decoder is the exact inverse of the encoder (as is the
case in Ref.~\cite{PTO09}). Either way, since the property of
non-catastrophicity is essential and having a minimal memory is highly
desirable, we should demand for our encoders and decoders to be both
minimal-memory and non-catastrophic.

The minimal-memory/non-catastrophic question is essentially understood for the
case of irreversible encoders for classical convolutional
codes~\cite{F70,book1999conv} by making use of ideas in linear system theory.
Though, these results at the surface do not appear to address the case of
reversible classical encoders, which would be more relevant for answering the
minimal-memory/non-catastrophic question in the quantum case.

In Refs.~\cite{GR06,GR06b,GR07}, Grassl and R\"{o}tteler proposed an algorithm
to construct non-catastrophic quantum circuits for encoding quantum
convolutional codes. Their encoders there do not have a convolutional
structure, and their work did not address how much quantum memory their
encoders would require for implementation. In follow-up work, we found a
minimal-memory realization of a Grassl-R\"{o}tteler encoder by performing a
longest path search through a \textquotedblleft commutativity
graph\textquotedblright\ that corresponds to the encoder~\cite{HHW10,HH11}.
Our approach was generally sub-optimal because there exist many encoders for a
given convolutional code---starting from a Grassl-R\"{o}tteler encoder and
finding the minimal-memory representation for it does not necessarily lead to
a minimal-memory encoder among all possible representations of the code. Also,
the complexity of the Grassl-R\"{o}tteler algorithm for computing an encoding circuit could
be exponential in general, resulting in an encoding circuit with exponential
depth~\cite{GR06,GR06b}.

The purpose of the present paper is to elucidate the technique of
Ref.~\cite{WHK11} in full detail. The encoders resulting from our technique
are convolutional and possess the aforementioned desirable properties
simultaneously---they are minimal-memory, non-catastrophic, and have an
$O(n^{2})$ depth, where $n$ is the frame size of code. In addition, we prove
that the resulting encoders are non-recursive. Ref.~\cite{PTO09} already proved
that all non-catastrophic encoders are non-recursive, but our proof of this fact for the encoders
studied here is arguably much simpler than the proof of Theorem~1 in Ref.~\cite{PTO09}.
Interestingly, the
essence of our technique for determining an encoder is commutation relations, which often are lurking
behind many fundamental questions in quantum information theory. The
commutation relations that are relevant for our technique are those for the
Pauli operators acting on the memory qubits. An upshot of our technique for
minimizing memory is that it is similar to one in
Refs.~\cite{arx2008wildeOEA,PhysRevA.79.062322}\ for finding the minimal
number of entangled bits required in an entanglement-assisted quantum
error-correcting code~\cite{BDH06}. This result is perhaps unsurprising in
hindsight, given that an encoder generally entangles information qubits and
ancilla qubits with the memory qubits before sending encoded qubits out over
the channel.

This paper is organized as follows. For the sake of completeness, we begin by
reviewing the definition of a quantum convolutional code. We then review our
technique from Ref.~\cite{WHK11}\ for determining a quantum convolutional
encoder for a given set of stabilizer generators, and we prove a theorem
concerning the consistency of these generators with commutation relations of
the encoder. Section~\ref{sec:mem-comm} introduces the idea of a memory
commutativity matrix that is rooted in ideas from Ref.~\cite{WHK11}.
Section~\ref{sec:state-diagram}\ reviews the state diagram for a quantum
convolutional encoder~\cite{PTO09,V71,book1999conv,McE02}, and the section
after it reviews catastrophicity. All of the above sections feature a
\textquotedblleft running example\textquotedblright\ that is helpful in
illustrating the main concepts. Section~\ref{sec:main-theorem} details our
main results, which are sufficient conditions for any quantum convolutional
encoder to be both minimal-memory and non-catastrophic. These sufficient
conditions apply to the memory commutativity matrix of the quantum
convolutional encoder. Section~\ref{sec:non-rec} then proves that the
encoders studied in Section~\ref{sec:main-theorem} are non-recursive. Finally, we conclude in Section~\ref{sec:conclusion}%
\ with a summary and a list of open questions, and
the appendix gives many examples of quantum convolutional
codes from Refs.~\cite{ieee2007forney,GR07} for which we can find
minimal-memory, non-catastrophic encoders.

\section{Quantum Convolutional Codes}

In this section, we recall some standard facts and then review the definition
of a quantum convolutional code.
A Pauli sequence is a countably-infinite tensor product of Pauli matrices:%
\[
\mathbf{A}=\bigotimes\limits_{i=0}^{\infty}A_{i},
\]
where each operator $A_{i}$ in the sequence is an element of the Pauli
group~$\Pi\equiv\left\{  I,X,Y,Z\right\}  $. Let $\Pi^{\mathbb{Z}^{+}}$ denote
the set of all Pauli sequences. A Pauli sequence is finite-weight if only
finitely many operators~$A_{i}$ in the sequence are equal to $X$, $Y$, or $Z$,
and it is an infinite-weight sequence otherwise.

\begin{definition}
[Quantum Convolutional Code]\label{def:QCC}A rate-$k/n$ quantum convolutional
code admits a representation with a basic set $\mathcal{G}_{0}$ of $n-k$
generators and all of their $n$-qubit shifts:%
\[
\mathcal{G}_{0}\equiv\left\{  \mathbf{G}_{i}\in\Pi^{\mathbb{Z}^{+}}:1\leq
i\leq n-k\right\}  .
\]
In order to form a quantum convolutional code, these generators should commute
with themselves and all of the $n$-qubit shifts of themselves and the other generators.
\end{definition}

Equivalently, a rate-$k/n$ quantum convolutional code is specified by $n-k$
generators $h_{1}$, $h_{2}$, $\ldots$, $h_{n-k}$, where%
\begin{equation}%
\begin{array}
[c]{cc}%
h_{1} & =\\
h_{2} & =\\
\vdots & \\
h_{n-k} & =
\end{array}%
\begin{array}
[c]{c}%
h_{1,1}\\
h_{2,1}\\
\vdots\\
h_{n-k,1}%
\end{array}
\begin{array}
[c]{c}%
\vert\\
\vert\\
\vdots\\
\vert%
\end{array}
\begin{array}
[c]{c}%
h_{1,2}\\
h_{2,2}\\
\vdots\\
h_{n-k,2}%
\end{array}
\begin{array}
[c]{c}%
\vert\\
\vert\\
\vdots\\
\vert%
\end{array}
\begin{array}
[c]{c}%
\cdots\\
\cdots\\
\ \\
\cdots
\end{array}
\begin{array}
[c]{c}%
\vert\\
\vert\\
\vdots\\
\vert%
\end{array}
\begin{array}
[c]{c}%
h_{1,l_{1}}\\
h_{2,l_{2}}\\
\vdots\\
h_{n-k,l_{n-k}}%
\end{array}
. \label{eq:general-stab-generators}%
\end{equation} 
Each entry $h_{i,j}$ is an $n$-qubit Pauli operator and $l_{i}$ is the degree
of generator$~h_{i}$ (in general, the degrees $l_{i}$ can be different from
each other). We obtain the other generators of the code by shifting the above
generators to the right by multiples of $n$ qubits. (In the above, note that
the entries $h_{1,l_{1}}$, $h_{2,l_{2}}$, \ldots, $h_{n-k,l_{n-k}}$ are not required
to be in the same column, but we have written it in the above way for convenience.)

We select the first quantum convolutional code from Figure~1 of Ref.~\cite{GR07}\ as our running
example for this paper. This code has
the following two generators:%
\begin{equation}%
\begin{array}
[c]{cc}%
h_{1} & =\\
h_{2} & =
\end{array}%
\begin{array}
[c]{c}%
XXXX\\
ZZZZ
\end{array}
\left\vert
\begin{array}
[c]{c}%
XXIX\\
ZZIZ
\end{array}
\right\vert \left.
\begin{array}
[c]{c}%
IXII\\
IZII
\end{array}
\right\vert
\begin{array}
[c]{c}%
IIXX\\
IIZZ
\end{array}
, \label{eq:running-example}%
\end{equation}
with $n=4$ and $n-k=2$, implying that the code encodes $k=2$ information
qubits for every four physical qubits. Observe that the above generators
commute with each other and with the generators resulting from\ all possible
four-qubit shifts of the above generators.

\section{The Proposed Encoding Algorithm}

\begin{figure}
[ptb]
\begin{center}
\includegraphics[
natheight=4.333600in,
natwidth=5.586700in,
height=3.2007in,
width=3.2396in
]%
{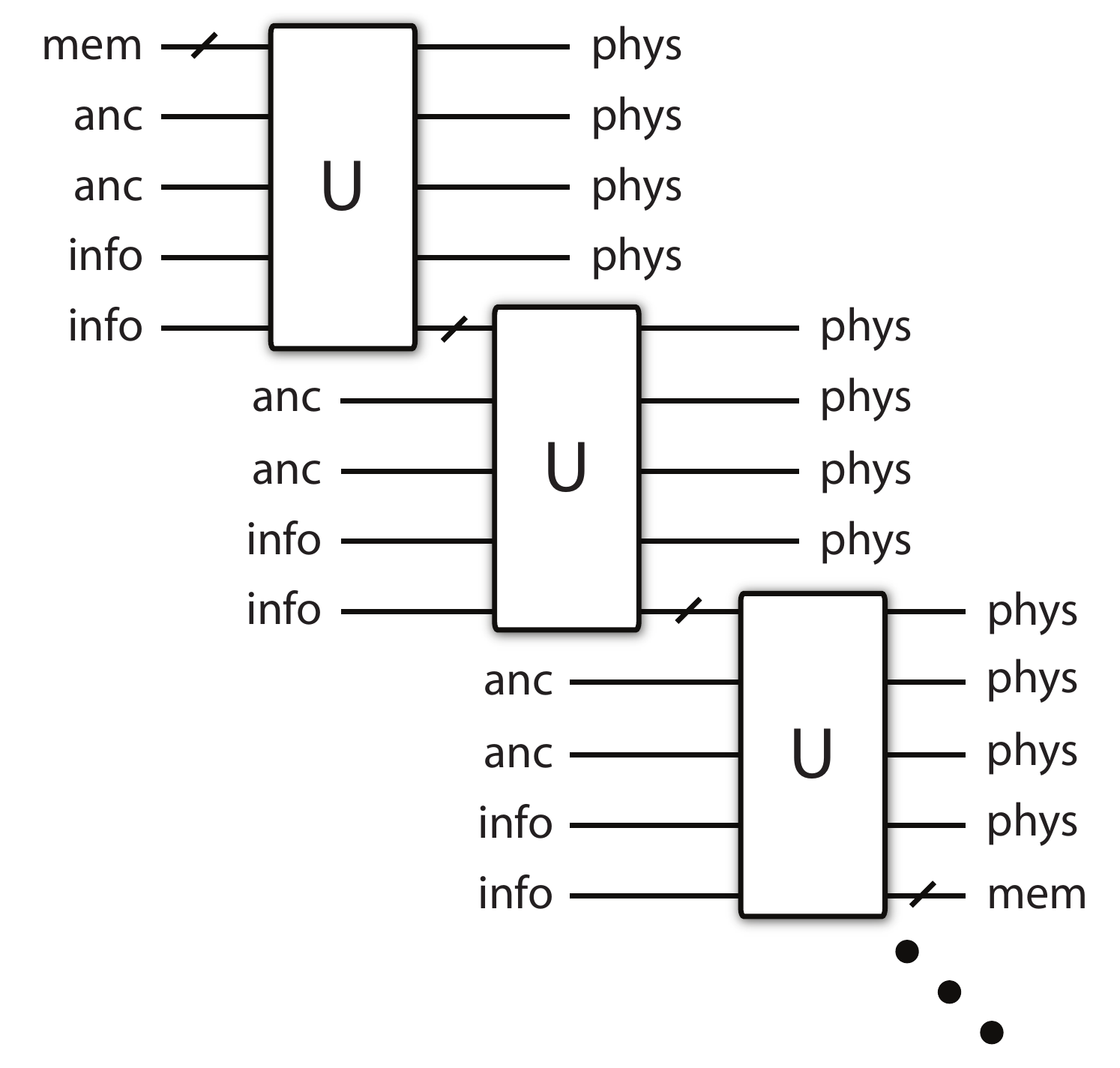}%
\caption{The encoder $U$ for a quantum convolutional code that has four physical
qubits for every two information qubits. The encoder $U$ acts on $m$ memory
qubits, two ancilla qubits, and two information qubits to produce four output
physical qubits to be sent over the channel and $m$ output memory qubits to be
fed into the next round of encoding.}%
\label{fig:conv-encoder}%
\end{center}
\end{figure}

Figure~\ref{fig:conv-encoder}\ depicts an example of an encoder for a quantum
convolutional code. The encoder depicted there can encode our running example
in (\ref{eq:running-example}) that has four physical qubits for every two
information qubits. More generally, a convolutional encoder acts on some
number$~m$ of memory qubits, $n-k$ ancilla qubits, and $k$~information qubits,
and it produces $n$~output physical qubits and $m$~output memory qubits to be
fed into the next round of encoding.

For our example in (\ref{eq:running-example}), the unencoded qubit stream
might have the following form:%
\begin{equation}
\left\vert 0\right\rangle \left\vert 0\right\rangle \left\vert \phi
_{1}\right\rangle \left\vert \phi_{2}\right\rangle \left\vert 0\right\rangle
\left\vert 0\right\rangle \left\vert \phi_{3}\right\rangle \left\vert \phi
_{4}\right\rangle \cdots, \label{eq:unencoded-stream}%
\end{equation}
so that an ancilla qubit appears as every first and second qubit and an
information qubit appears as every third and fourth qubit (generally, these
information qubits can be entangled with each other and even with an
inaccessible reference system, but we write them as product states for
simplicity). A particular set of stabilizer generators for the unencoded qubit stream in
(\ref{eq:unencoded-stream}) is as follows (along with all of their four-qubit
shifts):%
\begin{equation}%
\begin{array}
[c]{c}%
ZIII\\
IZII
\end{array}
\left\vert
\begin{array}
[c]{c}%
IIII\\
IIII
\end{array}
\right\vert \left.
\begin{array}
[c]{c}%
IIII\\
IIII
\end{array}
\right\vert
\begin{array}
[c]{c}%
IIII\\
IIII
\end{array}
, \label{eq:running-example-unencoded}%
\end{equation}
so that the states in (\ref{eq:unencoded-stream}) are in the simultaneous
$+1$-eigenspace of the above operators and all of their four-qubit shifts.

The objective of the convolutional encoder is to transform these
\textquotedblleft unencoded\textquotedblright\ Pauli $Z$ operators to the
encoded stabilizer generators in (\ref{eq:running-example}). That is, it
should be some Clifford transformation\footnote{A Clifford transformation is a
unitary operator that preserves the Pauli group under unitary conjugation.} of
the following form:%
\begin{equation}%
\begin{tabular}
[c]{c|cc|cc}%
Mem. & \multicolumn{2}{|c|}{Anc.} & \multicolumn{2}{|c}{Info.}\\\hline\hline
$I^{\otimes m}$ & $Z$ & $I$ & $I$ & $I$\\
$g_{1,1}$ & $I$ & $I$ & $I$ & $I$\\
$g_{1,2}$ & $I$ & $I$ & $I$ & $I$\\
$g_{1,3}$ & $I$ & $I$ & $I$ & $I$\\\hline
$I^{\otimes m}$ & $I$ & $Z$ & $I$ & $I$\\
$g_{2,1}$ & $I$ & $I$ & $I$ & $I$\\
$g_{2,2}$ & $I$ & $I$ & $I$ & $I$\\
$g_{2,3}$ & $I$ & $I$ & $I$ & $I$%
\end{tabular}
\rightarrow%
\begin{tabular}
[c]{cccc|c}%
\multicolumn{4}{c|}{Phys.} & Mem.\\\hline\hline
$X$ & $X$ & $X$ & $X$ & $g_{1,1}$\\
$X$ & $X$ & $I$ & $X$ & $g_{1,2}$\\
$I$ & $X$ & $I$ & $I$ & $g_{1,3}$\\
$I$ & $I$ & $X$ & $X$ & $I^{\otimes m}$\\\hline
$Z$ & $Z$ & $Z$ & $Z$ & $g_{2,1}$\\
$Z$ & $Z$ & $I$ & $Z$ & $g_{2,2}$\\
$I$ & $Z$ & $I$ & $I$ & $g_{2,3}$\\
$I$ & $I$ & $Z$ & $Z$ & $I^{\otimes m}$%
\end{tabular}
\label{eq:example-encoder}%
\end{equation}
where, as a visual aid, we have separated the memory qubits, ancilla qubits,
and information qubits at the input with a vertical bar and we have done the
same for the physical qubits and memory qubits at the output. A horizontal bar
separates the entries of the encoder needed to encode the first generator from
the entries needed to encode the second generator. Each $g_{i,j}$ is a Pauli
operator acting on some number~$m$ of memory qubits---these operators should
be consistent with the input-output commutation relations of the encoder (more
on this later). We stress that the above input-output relations only partially
specify the encoder such that it produces a code with the stabilizer
generators in (\ref{eq:running-example}), and there is still a fair amount of
freedom remaining in the encoding.%

In the general case, a convolutional encoder should transform an unencoded
Pauli $Z$ operator acting on the $i^{\text{th}}$ ancilla qubit to the
$i^{\text{th}}$ stabilizer generator $h_{i}$ in
(\ref{eq:general-stab-generators}). The first application of the encoder $U$
results in an intermediate, unspecified Pauli operator $g_{i,1}$ acting on the
$m$ output memory qubits. The second application of the encoder $U$ results in
an intermediate, unspecified Pauli operator $g_{i,2}$ acting on the $m$ output
memory qubits and so on. The shift invariance of the overall encoding
guarantees that shifts of the unencoded$~Z$ Pauli operators transform to
appropriate shifts of the generators. A convolutional encoder for the code
should perform the following transformation:%
\begin{equation}%
\begin{tabular}
[c]{c|c|c}%
Mem. & \multicolumn{1}{|c|}{Anc.} & Info.\\\hline\hline
$I^{\otimes m}$ & $Z_{1}$ & $I^{\otimes k}$\\
$g_{1,1}$ & \multicolumn{1}{|c|}{$I^{\otimes n-k}$} & $I^{\otimes k}$\\
$\vdots$ & \multicolumn{1}{|c|}{$\vdots$} & $\vdots$\\
$g_{1,l_{1}-2}$ & \multicolumn{1}{|c|}{$I^{\otimes n-k}$} & $I^{\otimes k}$\\
$g_{1,l_{1}-1}$ & \multicolumn{1}{|c|}{$I^{\otimes n-k}$} & $I^{\otimes k}%
$\\\hline
$I^{\otimes m}$ & $Z_{2}$ & $I^{\otimes k}$\\
$g_{2,1}$ & \multicolumn{1}{|c|}{$I^{\otimes n-k}$} & $I^{\otimes k}$\\
$\vdots$ & \multicolumn{1}{|c|}{$\vdots$} & $\vdots$\\
$g_{2,l_{2}-2}$ & \multicolumn{1}{|c|}{$I^{\otimes n-k}$} & $I^{\otimes k}$\\
$g_{2,l_{2}-1}$ & \multicolumn{1}{|c|}{$I^{\otimes n-k}$} & $I^{\otimes k}%
$\\\hline
$\vdots$ & \multicolumn{1}{|c|}{$\vdots$} & $\vdots$\\\hline
$I^{\otimes m}$ & $Z_{s}$ & $I^{\otimes k}$\\
$g_{s,1}$ & \multicolumn{1}{|c|}{$I^{\otimes n-k}$} & $I^{\otimes k}$\\
$\vdots$ & \multicolumn{1}{|c|}{$\vdots$} & $\vdots$\\
$g_{s,l_{s}-2}$ & \multicolumn{1}{|c|}{$I^{\otimes n-k}$} & $I^{\otimes k}$\\
$g_{s,l_{s}-1}$ & \multicolumn{1}{|c|}{$I^{\otimes n-k}$} & $I^{\otimes k}$%
\end{tabular}
\rightarrow%
\begin{tabular}
[c]{c|c}%
Phys. & Mem.\\\hline\hline
$h_{1,1}$ & $g_{1,1}$\\
$h_{1,2}$ & $g_{1,2}$\\
$\vdots$ & $\vdots$\\
$h_{1,l_{1}-1}$ & $g_{1,l_{1}-1}$\\
$h_{1,l_{1}}$ & $I^{\otimes m}$\\\hline
$h_{2,1}$ & $g_{2,1}$\\
$h_{2,2}$ & $g_{2,2}$\\
$\vdots$ & $\vdots$\\
$h_{2,l_{2}-1}$ & $g_{2,l_{2}-1}$\\
$h_{2,l_{2}}$ & $I^{\otimes m}$\\\hline
$\vdots$ & $\vdots$\\\hline
$h_{s,1}$ & $g_{s,1}$\\
$h_{s,2}$ & $g_{s,2}$\\
$\vdots$ & $\vdots$\\
$h_{s,l_{s}-1}$ & $g_{s,l_{s}-1}$\\
$h_{s,l_{s}}$ & $I^{\otimes m}$%
\end{tabular}
\label{eq:general-encoder}%
\end{equation}
where $m$ is some unspecified number of memory qubits, $k$ is the number of
information qubits, $n-k$ is the number of ancilla qubits, and we make the
abbreviation $s\equiv n-k$. Again, the above transformation only partially
specifies the encoding. Also, note that it is not necessary for the $g_{i,j}$ operators
to be independent of one another and we address this point later on.

\subsection{Consistency of Commutation Relations}

A fundamental property of any valid Clifford transformation is that it
preserves commutation relations. That is, the input commutation relations
should be consistent with the output commutation relations. So, for all $1\leq
i\leq n-k$ and $1\leq j\leq l_{i}-1$, the entries $g_{i,j}$ are $m$-qubit
Pauli operators that are unspecified above, but they should be chosen in such
a way that the input-output commutation relations are consistent. That this
consistency is possible follows from the fact that the stabilizer generators
in (\ref{eq:general-stab-generators}) form a valid quantum convolutional code
according to Definition~\ref{def:QCC}, and it is the content of our first theorem.

\begin{theorem}
[Consistency of Commutation Relations]Suppose the stabilizer generators in
(\ref{eq:general-stab-generators}) form a valid quantum convolutional code.
Then there exists a set of Pauli operators $g_{i,j}$ for $1\leq i\leq n-k$ and
$1\leq j\leq l_{i}$ such that the commutation relations on the
LHS\ of~(\ref{eq:general-encoder}) are consistent with those on the
RHS\ of~(\ref{eq:general-encoder}).
\end{theorem}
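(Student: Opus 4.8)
The plan is to reduce the statement to a question of $\mathbb{F}_{2}$-linear algebra about commutation brackets. A Clifford transformation is determined by its action on a generating set of Pauli operators, and such a transformation exists precisely when the assignment preserves all commutation (and, strictly, all multiplicative) relations among those operators. So I would first translate ``the commutation relations on the LHS of~(\ref{eq:general-encoder}) are consistent with those on the RHS'' into explicit constraints on the unknown $m$-qubit operators $g_{i,j}$. Write $\langle P,Q\rangle\in\{0,1\}$ for the symplectic bracket of two Pauli operators (so $\langle P,Q\rangle=0$ iff they commute), and adopt the conventions $g_{i,0}\equiv I$ and $g_{i,l_{i}}\equiv I$. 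The input row labelled by $(i,j)$ in~(\ref{eq:general-encoder}) has memory part $g_{i,j-1}$, and the output row it maps to has physical part $h_{i,j}$ and memory part $g_{i,j}$; the ancilla and information entries are $Z$'s on disjoint qubits and identities, so they already match the trivial parts of the RHS and contribute nothing extra. Equating the bracket of a pair of input rows with the bracket of the corresponding pair of output rows then yields, for all $i,i'$ and all $1\le j\le l_{i}$, $1\le j'\le l_{i'}$,
\[
\langle g_{i,j},g_{i',j'}\rangle \;=\; \langle h_{i,j},h_{i',j'}\rangle \;\oplus\; \langle g_{i,j-1},g_{i',j'-1}\rangle. \qquad (\star)
\]

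Next I would solve $(\star)$ by telescoping along diagonals. Define
\[
\Gamma\big[(i,j),(i',j')\big] \;\equiv\; \bigoplus_{\substack{t\ge 0\\ j-t\ge 1,\ j'-t\ge 1}} \langle h_{i,j-t},h_{i',j'-t}\rangle,
\]
and set $\langle g_{i,j},g_{i',j'}\rangle\equiv\Gamma[(i,j),(i',j')]$ for $1\le j\le l_{i}-1$ and $1\le j'\le l_{i'}-1$. By construction $\Gamma$ is symmetric, has zero diagonal, and satisfies the recursion in $(\star)$; the content of the proof is to check that it is consistent with the boundary values forced by $g_{i,0}=g_{i,l_{i}}=I$, i.e.\ that $\Gamma[(i,l_{i}),(i',j')]=0$ and $\Gamma[(i,j),(i',l_{i'})]=0$. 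This is exactly where Definition~\ref{def:QCC} is used: a short reindexing of the telescoped sum shows that $\Gamma[(i,l_{i}),(i',j')]$ equals the symplectic bracket of the generator $h_{i}$ with the $n$-qubit shift of $h_{i'}$ by $l_{i}-j'$ frames (with the evident interpretation when that number is negative), and this vanishes because the generators of a valid quantum convolutional code commute with all $n$-qubit shifts of themselves and of each other. I expect this verification — lining up the summation ranges near the ends of the generators with the ``commutes with all shifts'' hypothesis — to be the main obstacle.

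Granting that $\Gamma$ is a consistent, symmetric, zero-diagonal $\mathbb{F}_{2}$ matrix on the index set $\{(i,j): 1\le i\le n-k,\ 1\le j\le l_{i}-1\}$, I would then invoke the standard construction realizing any such matrix as the symplectic Gram matrix of a family of Pauli operators on enough qubits: pass to a symplectic basis of the associated alternating form, represent each hyperbolic pair by an $X/Z$ pair on a fresh memory qubit and each radical vector by a commuting operator on further fresh qubits, and take the corresponding products as the $g_{i,j}$. This produces concrete operators $g_{i,j}$ satisfying $(\star)$, so the full symplectic Gram matrix of the LHS family of~(\ref{eq:general-encoder}) agrees with that of the RHS family, and a Clifford transformation with the desired action exists by the usual symplectic (Witt-type) extension argument.

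Finally, since a Clifford map between two Pauli families also requires their multiplicative relations to agree, I would close by noting that the $g_{i,j}$ can be chosen to respect these as well: a relation $\prod(h_{i,j}\otimes g_{i,j})^{c_{i,j}}=\pm I$ on the RHS forces, on its physical part, a relation among the generator entries, and the same telescoping/shift-commutation bookkeeping shows that the relation it induces among the $g_{i,j}$ on the LHS is consistent, so there is no obstruction to picking the $g_{i,j}$ (which need not be independent) compatibly. This is the point the paper explicitly defers; for the present theorem it suffices that some compatible choice exists, which the construction above supplies.
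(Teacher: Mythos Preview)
Your proposal is correct and follows essentially the same route as the paper. Both arguments translate the LHS/RHS consistency into $\mathbb{F}_{2}$ constraints on the brackets $\langle g_{i,j},g_{i',j'}\rangle$, solve by telescoping along the ``diagonal'' index $j-j'$, and close the argument by recognizing the residual sum as the bracket of $h_{i}$ with a frame shift of $h_{i'}$, which vanishes by Definition~\ref{def:QCC}. The only cosmetic difference is packaging: the paper computes \emph{both} a forward-propagation formula (from the $g_{i,l_{i}}=I$ boundary) and a backward-propagation formula (from the $g_{i,0}=I$ boundary) and shows their mod-$2$ sum equals $(D^{j-j'}h_{i})\odot h_{i'}=0$; you instead define $\Gamma$ via backward propagation alone and then verify the other boundary $\Gamma[(i,l_{i}),\cdot]=0$ directly. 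These are the same computation. You actually go a bit further than the paper's proof of this theorem: the paper stops at ``the constraints are consistent'' and defers the explicit realization of the $g_{i,j}$ as Pauli operators to its Theorem~\ref{thm:min-mem} (the symplectic Gram--Schmidt step you sketch), and it does not discuss multiplicative relations at all here---that final paragraph of yours is extra care beyond what the stated theorem requires.
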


\begin{proof}
Let $g_{i,j}\odot g_{k,l}$ be a function that equals one if $g_{i,j}$ and
$g_{k,l}$ anticommute and zero if they commute. By inspecting the
transformation in (\ref{eq:general-encoder}), several commutation relations
should be satisfied. First, for all $i,i^{\prime}\in\left\{  1,2,\cdots
,n-k\right\}  $ and for all $j^{\prime}\in\left\{  1,\cdots,l_{i^{\prime}%
}-1\right\}  $:%
\[
g_{i,1}\odot g_{i^{\prime},j^{\prime}}=h_{i,1}\odot h_{i^{\prime},j^{\prime}%
},
\]
because the first row of each block on the LHS\ of (\ref{eq:general-encoder})
commutes with all other rows and for consistency, the RHS\ of the
corresponding rows should commute as well. Next, for all $\,i,i^{\prime}%
\in\{1,2,\cdots,n-k\}$, $j\in\{1,\cdots,l_{i}-2\}$, and $j^{\prime}%
\in\{1,\cdots,l_{i^{\prime}}-2\}$:%
\[
g_{i,j}\odot g_{i^{\prime},j^{\prime}}=\left(  h_{i,j+1}\odot h_{i^{\prime
},j^{\prime}+1}\right)  +\left(  g_{i,j+1}\odot g_{i^{\prime},j^{\prime}%
+1}\right)  ,
\]
because the commutation relations between any of the second to second-to-last
rows in the same or different blocks on the LHS\ of (\ref{eq:general-encoder})
should be consistent with those of the corresponding rows on the RHS. Finally,
for all $i,i^{\prime}\in\{1,2,\cdots,n-k\}$ and $j\in\{1,\cdots,l_{i}-1\}$:%
\[
g_{i,j}\odot g_{i^{\prime},l_{i^{\prime}}-1}=h_{i,j+1}\odot h_{i^{\prime
},l_{i^{\prime}}},
\]
because the commutation relations between the last row of each block and any
other row on the LHS\ of (\ref{eq:general-encoder}) should be consistent with
those of the corresponding rows on the RHS.

If we start from the first row of any block in (\ref{eq:general-encoder}), a
forward commutativity propagation imposes the following equality (WLOG suppose
$j\geq j^{\prime}$):
\begin{equation}
g_{i,j}\odot g_{i^{\prime},j^{\prime}}=\sum_{k=1}^{\min\{(l_{i}%
-j),(l_{i^{\prime}}-j^{\prime})\}}h_{i,j+k}\odot h_{i^{\prime},j^{\prime}+k},
\label{cons1}%
\end{equation}
and if we start from the last row of any block in (\ref{eq:general-encoder}),
a backward commutativity propagation imposes the following equality:%
\begin{equation}
g_{i,j}\odot g_{i^{\prime},j^{\prime}}=\sum_{k=0}^{j^{\prime}-1}h_{i,j-k}\odot
h_{i^{\prime},j^{\prime}-k}. \label{cons2}%
\end{equation}
By adding the RHS of~(\ref{cons1}) and~(\ref{cons2}), we obtain the following
equality:%
\begin{align*}
&  \sum_{k=1}^{\min\{(l_{i}-j),(l_{i^{\prime}}-j^{\prime})\}}h_{i,j+k}\odot
h_{i^{\prime},j^{\prime}+k}+\sum_{k=0}^{j^{\prime}-1}h_{i,j-k}\odot
h_{i^{\prime},j^{\prime}-k}\\
&  =\sum_{k=1}^{\min\{{(l_{i}-j+j^{\prime}),l_{i^{\prime}}}\}}%
h_{i,k+j-j^{\prime}}\odot h_{i^{\prime},k}\\
&  =\left(  D^{j-j^{\prime}}h_{i}\right)  \odot h_{i^{\prime}},
\end{align*}
where we have introduced the delay operator $D$ from
Refs.~\cite{PhysRevLett.91.177902,ieee2007forney}. Finally, due to the
commutativity constraints for the generators of a valid quantum convolutional
code, we obtain the following equality:%
\begin{align*}
\sum_{k=1}^{\min\{{(l_{i}-j+j^{\prime}),l_{i^{\prime}}}\}}h_{i,k+j-j^{\prime}%
}\odot h_{i^{\prime},k}  &  =\left(  D^{j-j^{\prime}}h_{i}\right)  \odot
h_{i^{\prime}}\\
&  =0.
\end{align*}
Therefore, the RHS of equations in~(\ref{cons1}) and~(\ref{cons2}) are the
same, and the different constraints imposed by the encoder on the commutation
relations of $g_{i,j}$ and $g_{i^{\prime},j^{\prime}}$ are consistent.
\end{proof}

The next section shows how to choose the operators $g_{i,j}$\ for the memory
qubits such that they are consistent while also acting on a minimal number of
memory qubits.

\subsection{Memory Commutativity Matrix}

\label{sec:mem-comm}In our running example in (\ref{eq:running-example}) and
(\ref{eq:example-encoder}), we did not specify how to choose the Pauli
operators $g_{i,j}$ acting on the memory qubits. It would be ideal to choose
them so that they are consistent with the input-output commutation relations
of the transformation in (\ref{eq:example-encoder}), and also so that they act
on a minimal number of memory qubits. In this way, we can determine a
minimal-memory encoder for the particular stabilizer generators
in~(\ref{eq:running-example}).

As stated earlier, any valid Clifford transformation
preserves commutation relations. That is, if two input Pauli operators
commute, then the corresponding output Pauli operators should also
commute\ (and similarly, two outputs should anticommute if their corresponding
inputs anticommute). So, consider that the first two input rows in
(\ref{eq:example-encoder}) commute. Then the two output rows should commute as
well, and in order for this to happen, $g_{1,1}$ and $g_{1,2}$ should commute
because $XXXX$ and $XXIX$ commute. For a different case, observe that the
first and fifth input rows commute, and for consistency, the first and fifth
output rows should commute. Thus, $g_{1,1}$ and $g_{2,1}$ should commute
because $XXXX$ and $ZZZZ$ already commute. We can continue in this manner and
enumerate all of the commutation relations for the memory operators $g_{i,j}$
simply by ensuring that the input-output commutation relations in
(\ref{eq:example-encoder}) are consistent:%
\begin{align}
\left[  g_{1,1},g_{1,2}\right]   &  =\left[  g_{1,1},g_{1,3}\right]  =\left[
g_{1,1},g_{2,1}\right] \nonumber\\
&  =\left\{  g_{1,1},g_{2,2}\right\}  =\left\{  g_{1,1},g_{2,3}\right\}  =0,\label{eq:comm-rel-1}\\
\left[  g_{1,2},g_{1,3}\right]   &  =\left\{  g_{1,2},g_{2,1}\right\}
=\left\{  g_{1,2},g_{2,2}\right\} \nonumber\\
&  =\left[  g_{1,2},g_{2,3}\right]  =0,\\
\left\{  g_{1,3},g_{2,1}\right\}   &  =\left[  g_{1,3},g_{2,2}\right]
=\left[  g_{1,3},g_{2,3}\right]  =0,\\
\left[  g_{2,1},g_{2,2}\right]   &  =\left[  g_{2,1},g_{2,3}\right]  =0,\\
\left[  g_{2,2},g_{2,3}\right]   &  =0, \label{eq:comm-rel-5}%
\end{align}
where $\left[  A,B\right]  \equiv AB-BA$ is the commutator and $\left\{
A,B\right\}  \equiv AB+BA$ is the anticommutator. In determining some of the
later commutation relations, we need to rely on earlier found ones.

Our objective now is to determine the minimal number of memory qubits on which
the operators $g_{i,j}$ should act in order for the transformation in
(\ref{eq:example-encoder}) to be consistent with the commutation relations in
(\ref{eq:comm-rel-1}-\ref{eq:comm-rel-5}). To this end, it is helpful to write
the above commutation relations as entries in a square binary-valued
matrix~$\Omega$, that we refer to as the \textquotedblleft memory
commutativity matrix.\textquotedblright

\begin{definition}
[Memory Commutativity Matrix]The memory commutativity matrix~$\Omega$
corresponding to an encoder of the form in (\ref{eq:general-encoder}) for a
set of stabilizer generators has its entries equal to%
\begin{equation}
\left[  \Omega\right]  _{\left(  i,j\right)  ,\left(  k,l\right)  }\equiv
g_{i,j}\odot g_{k,l}, \label{eq:mem-comm-matrix}%
\end{equation}
where we think of the double indices $\left(  i,j\right)  $ and $\left(
k,l\right)  $ as single indices for the matrix elements of $\Omega$, $g_{i,j}$
and $g_{k,l}$ are all of the Pauli operators in (\ref{eq:general-encoder})
acting on the memory qubits, and $g_{i,j}\odot g_{k,l}$ is a function that
equals one if $g_{i,j}$ and $g_{k,l}$ anticommute and zero if they commute
(implying that $\Omega$ is a symmetric matrix).
\end{definition}

For our running example in (\ref{eq:running-example}),
(\ref{eq:example-encoder}), and (\ref{eq:comm-rel-1}-\ref{eq:comm-rel-5}), the
memory commutativity matrix~$\Omega$ is equal to%
\begin{equation}%
\begin{bmatrix}
0 & 0 & 0 & 0 & 1 & 1\\
0 & 0 & 0 & 1 & 1 & 0\\
0 & 0 & 0 & 1 & 0 & 0\\
0 & 1 & 1 & 0 & 0 & 0\\
1 & 1 & 0 & 0 & 0 & 0\\
1 & 0 & 0 & 0 & 0 & 0
\end{bmatrix}
, \label{eq:mem-comm-mat-example}%
\end{equation}
if we take the ordering $g_{1,1}$, $g_{1,2}$, $g_{1,3}$, $g_{2,1}$, $g_{2,2}$,
$g_{2,3}$ and consider the commutation relations found in (\ref{eq:comm-rel-1}%
-\ref{eq:comm-rel-5}).

The memory commutativity matrix captures commutation relations between Pauli
matrices, and our objective is to determine the minimal number of memory
qubits on which the memory operators should act in order to be consistent with
the above commutation relations. This leads us to our next theorem:

\begin{theorem}
[Minimal-Memory Encoder]\label{thm:min-mem}For a given memory commutativity
matrix~$\Omega$, the minimal number~$m$ of memory qubits needed for an encoder
is equal to%
\[
m=\dim\left(  \Omega\right)  -\frac{1}{2}\text{\emph{rank}}\left(
\Omega\right)  .
\]

\end{theorem}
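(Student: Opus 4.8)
The plan is to translate everything into the binary symplectic language for Pauli operators: an $m$-qubit Pauli corresponds, up to phase, to a vector in $\mathbb{F}_{2}^{2m}$, two Paulis anticommute precisely when their symplectic inner product equals $1$, and the symplectic form is the $2m\times 2m$ matrix $J$ with off-diagonal identity blocks. Stacking the symplectic vectors of the memory operators $g_{i,j}$ as the columns of a $2m\times N$ matrix $G$, where $N\equiv\dim(\Omega)$, the memory commutativity matrix is exactly $\Omega=G^{T}JG$. Since a Pauli commutes with itself, $\Omega$ is symmetric with zero diagonal, i.e.\ it is an alternating form over $\mathbb{F}_{2}$; hence $\mathrm{rank}(\Omega)$ is even, say $\mathrm{rank}(\Omega)=2r$, and the claim becomes: the least number of memory qubits admitting such a $G$ that completes the partial transformation (\ref{eq:general-encoder}) to a Clifford encoder is $N-r$.

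For achievability I would invoke the normal form of an alternating bilinear form over $\mathbb{F}_{2}$: there is an invertible $\mathbb{F}_{2}$-matrix $P$ such that $P^{T}\Omega P$ is a direct sum of $r$ hyperbolic blocks and an $(N-2r)$-dimensional zero block. It then suffices to realize this normal form on $m=N-r$ memory qubits, which is done by assigning to each hyperbolic block a fresh memory qubit carrying the anticommuting pair $X,Z$ and to each of the $N-2r$ basis vectors of the radical a fresh memory qubit carrying the operator $Z$. Reversing the change of basis $P$ expresses each $g_{i,j}$ as a product of these generators; by construction the commutation relations reproduce $\Omega$ and the chosen operators are $\mathbb{F}_{2}$-independent. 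Theorem~1 then guarantees that the induced commutation relations on the right-hand side of (\ref{eq:general-encoder}) are consistent with those on the left, and since a partial assignment of input to output Paulis extends to a Clifford operation whenever it preserves linear independence and all symplectic inner products, this choice completes to a valid encoder of the prescribed code. So $m=N-r$ memory qubits suffice.

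For optimality I would argue as follows. First, in any valid encoder the operators $g_{i,j}$ ($1\le i\le n-k$, $1\le j\le l_{i}-1$) are linearly independent over $\mathbb{F}_{2}$. Indeed, each $g_{i,j}$ must be nontrivial (otherwise, since the encoder sends the identity to the identity, $h_{i,j+1}$ and all later entries of $h_{i}$ would vanish, contradicting $l_{i}$ being the degree), and any linear dependence among the $g_{i,j}$ is a linear dependence among the memory input rows of (\ref{eq:general-encoder}), which a Clifford must carry to the same dependence among the corresponding output rows; inspecting the memory components of those outputs shows that the dependence is reproduced after a shift $j\mapsto j+1$, and iterating this shift drives the dependence out of range and forces a contradiction with nontriviality. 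Granting independence, let $V=\mathrm{col}(G)\subseteq\mathbb{F}_{2}^{2m}$, so $\dim V=N$ and the symplectic form restricted to $V$ has rank $\mathrm{rank}(\Omega)=2r$. Decompose $V=R\oplus W$ with $R=V\cap V^{\perp}$ the radical of the restricted form (isotropic, $\dim R=N-2r$) and $W$ a nondegenerate symplectic subspace of dimension $2r$. Then $R$ lies in $W^{\perp}$, which is itself a symplectic space of dimension $2m-2r$, so the isotropic subspace $R$ has dimension at most $m-r$; that is, $N-2r\le m-r$, whence $m\ge N-r=\dim(\Omega)-\tfrac12\mathrm{rank}(\Omega)$. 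Together with achievability this proves the theorem.

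The main obstacle is the independence claim in the optimality argument. The difficulty is genuine: the paper explicitly notes that the $g_{i,j}$ need not be assumed independent, and indeed for an \emph{arbitrary} symmetric zero-diagonal matrix $\Omega$ one can realize the given commutation relations on far fewer qubits using dependent operators. What rescues the bound is that the $g_{i,j}$ are not free parameters but are tied, through the shift-invariant structure of the encoder and the fixed code entries $h_{i,j}$, to a Clifford operation that must output precisely the prescribed stabilizer generators; making the propagation-of-dependence argument fully rigorous---tracking exactly which shifted dependence is forced when several generators of different degrees are present---is the step I expect to require the most care.
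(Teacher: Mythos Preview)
Your achievability argument coincides with the paper's: both reduce $\Omega$ to a direct sum of $c$ hyperbolic blocks and $d$ zero blocks via the symplectic Gram--Schmidt procedure, realise the standard form with the operators $X_{1},Z_{1},\ldots,X_{c},Z_{c},Z_{c+1},\ldots,Z_{c+d}$ on $c+d$ qubits, and undo the change of basis.

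On optimality the paper does essentially nothing: it simply asserts that ``there is no set of operators acting on fewer than $c+d$ qubits that could satisfy the commutation relations'' in the standard form, without justification. Your isotropic-subspace argument---embed the $g_{i,j}$ as an $N$-dimensional subspace $V\subseteq\mathbb{F}_{2}^{2m}$, split off the radical $R$ of dimension $N-2r$, and bound $\dim R\le m-r$ inside the symplectic complement $W^{\perp}$---is a genuine proof of this step and is more than the paper supplies. That argument, however, needs the $g_{i,j}$ to be $\mathbb{F}_{2}$-independent.

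The gap you flag is real, and in fact your proposed resolution does not work: independence is \emph{not} forced by the encoder structure alone. Take two stabilisers with $l_{1}=l_{2}=2$ and $h_{1,2}=h_{2,2}$. The last rows of the two blocks in (\ref{eq:general-encoder}) then have identical right-hand sides $h_{1,2}\otimes I^{\otimes m}=h_{2,2}\otimes I^{\otimes m}$, so injectivity of the Clifford forces $g_{1,1}=g_{2,1}$. Your shift-propagation argument stalls exactly here: when several generators reach their final index simultaneously, the ``contradiction with nontriviality'' never fires. The paper does not attempt to prove independence from the encoder; it instead treats independence as a preprocessing hypothesis and introduces the ``shortening algorithm'' to eliminate dependencies among the first and last blocks $h_{i,1}$ and $h_{i,l_{i}}$ before invoking the theorem, stating explicitly that this is required ``in order to be confident that the memory states in~(\ref{eq:general-encoder}) are independent of each other and our formula is valid.'' So the fix is to assume the generators have been shortened, not to argue independence from the Clifford dynamics.
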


\begin{proof}
To prove this theorem, we can exploit ideas from the theory of
entanglement-assisted quantum error correction~\cite{BDH06}, after realizing
that finding the minimal number of memory qubits on which the memory operators
should act is related to finding the minimal number of ebits required in an
entanglement-assisted quantum code. In particular, by the symplectic
Gram-Schmidt procedure outlined in
Refs.~\cite{BDH06,arx2008wildeOEA,PhysRevA.79.062322}, there exists a sequence
of full-rank matrices acting by conjugation on the memory commutativity
matrix~$\Omega$ that reduces it to the following standard form:%
\begin{equation}
\Omega_{0}\equiv\bigoplus\limits_{k=1}^{c}%
\begin{bmatrix}
0 & 1\\
1 & 0
\end{bmatrix}
\oplus\bigoplus\limits_{l=1}^{d}\left[  0\right]  ,
\label{eq:standard-form-mem-comm}%
\end{equation}
such that $2c+d=\dim\left(  \Omega\right)  $ for some integers $c,d\geq0$. Let
$G$ denote this sequence of operations. Observe that $\dim\left(
\Omega\right)  =\dim\left(  \Omega_{0}\right)  $ and rank$\left(
\Omega\right)  =\ $rank$\left(  \Omega_{0}\right)  $ because this sequence~$G$
of operations is full rank. Furthermore, it holds that rank$\left(  \Omega
_{0}\right)  =2c$ because the rank of a direct sum is the sum of the
individual matrix ranks. Observe that the Pauli operators $X_{1}$, $Z_{1}$,
\ldots, $X_{c}$, $Z_{c}$ and $Z_{c+1}$, \ldots, $Z_{c+d}$ acting on $c+d$
qubits have the same commutativity matrix as the standard form given in
(\ref{eq:standard-form-mem-comm}), and furthermore, these operators are
minimal, in the sense that there is no set of operators acting on fewer than
$c+d$ qubits that could satisfy the commutation relations in
(\ref{eq:standard-form-mem-comm}). We then perform the inverse $G^{-1}$ on the
operators $X_{1}$, $Z_{1}$, \ldots, $X_{c}$, $Z_{c}$ and $Z_{c+1}$, \ldots,
$Z_{c+d}$, producing a set of memory operators $g_{i,j}$ that are consistent
with the commutation relations in (\ref{eq:mem-comm-matrix}), ensuring that
the encoder is valid, while acting on the minimal number of memory qubits
possible. The resulting number $m$ of memory qubits is then $m=c+d$, or
equivalently,%
\[
m=\dim\left(  \Omega\right)  -\frac{1}{2}\text{rank}\left(  \Omega\right)  ,
\]
because $\dim\left(  \Omega\right)  =2c+d$ and rank$\left(  \Omega\right)
=2c$~\cite{arx2008wildeOEA,PhysRevA.79.062322}.
\end{proof}

We can apply the above theorem to our running example in
(\ref{eq:running-example}) and (\ref{eq:example-encoder}). The rank of the
matrix in (\ref{eq:mem-comm-mat-example}) is full (equal to six), implying
that $c=3$ and the minimal number of memory qubits to encode the generators in
(\ref{eq:running-example}) is three qubits. Indeed, the standard form of the
memory commutativity matrix is%
\[%
\begin{bmatrix}
0 & 1\\
1 & 0
\end{bmatrix}
\oplus%
\begin{bmatrix}
0 & 1\\
1 & 0
\end{bmatrix}
\oplus%
\begin{bmatrix}
0 & 1\\
1 & 0
\end{bmatrix}
.
\]
A set of Pauli operators with commutation relations corresponding to this
standard form is $X_{1}$, $Z_{1}$, $X_{2}$, $Z_{2}$, $X_{3}$, and $Z_{3}$. We
can multiply these Pauli operators together to produce the generators
$g_{1,1}=XIX$, $g_{1,2}=IIX$, $g_{1,3}=IZI$, $g_{2,1}=ZXZ$, $g_{2,2}=IIZ$, and
$g_{2,3}=ZII$ with a commutativity matrix equivalent to that in
(\ref{eq:mem-comm-mat-example}). We can then use these generators as memory
operators for the encoder in (\ref{eq:example-encoder}), producing the
following valid minimal-memory convolutional encoder for the stabilizer
generators in (\ref{eq:running-example}):%
\begin{equation}%
\begin{tabular}
[c]{c|c|c}%
Mem. & \multicolumn{1}{|c|}{Anc.} & Info.\\\hline\hline
$III$ & $ZI$ & $II$\\
$XIX$ & $II$ & $II$\\
$IIX$ & $II$ & $II$\\
$IZI$ & $II$ & $II$\\\hline
$III$ & $IZ$ & $II$\\
$ZXZ$ & $II$ & $II$\\
$IIZ$ & $II$ & $II$\\
$ZII$ & $II$ & $II$%
\end{tabular}
\rightarrow%
\begin{tabular}
[c]{c|c}%
\multicolumn{1}{c|}{Phys.} & Mem.\\\hline\hline
$XXXX$ & $XIX$\\
$XXIX$ & $IIX$\\
$IXII$ & $IZI$\\
$IIXX$ & $III$\\\hline
$ZZZZ$ & $ZXZ$\\
$ZZIZ$ & $IIZ$\\
$IZII$ & $ZII$\\
$IIZZ$ & $III$%
\end{tabular}
. \label{eq:run-example-encoder-partial}%
\end{equation}

Once we have determined the transformation that the encoder should perform,
there is an algorithm for determining an encoder with polynomial
depth~\cite{BFG06}. There are many encoders which implement the transformation
in~(\ref{eq:run-example-encoder-partial}). In order to specify a particular
encoder in full, one would need to \textquotedblleft
complete\textquotedblright\ the above transformation by determining six
additional input-output relations that are independent of the other
input-output relations, so that the resulting 14 input-output relations form a
basis for the Pauli group acting on seven qubits.

\section{Other representations of a code}

We can find other representations
of a quantum convolutional code by multiplying stabilizer generators together
or by delaying some of them. In this section, we analyze the impact of
these operations on the minimal memory requirements for encoders, and we
propose an algorithm to
find a minimal-memory encoder among all the representations of a given code.

\subsection{Multiplication of stabilizers}

Suppose we obtain another set of stabilizer generators (say, $S^{\prime}$) for the same code
specified in~(\ref{eq:general-stab-generators}), by multiplying one
stabilizer by another. WLOG, suppose that the first stabilizer generator $h_{1}$ is
multiplied by second stabilizer generator $h_{2}$, and suppose that $l_{1}>l_{2}$.
As a result,
only the rows of transformation corresponding to the second stabilizer generator for
$S^{\prime}$ (the rows in the second block of the transformation) are different from the
rows of the transformation corresponding to the original set of stabilizers
in~(\ref{eq:general-stab-generators}). In the following, we write the rows of the
transformation corresponding to the first and second stabilizer for
$S^{\prime}$:%

\begin{equation}%
\begin{tabular}
[c]{c|c|c}%
Mem. & \multicolumn{1}{|c|}{Anc.} & Info.\\\hline\hline
$I^{\otimes m}$ & $Z_{1}$ & $I^{\otimes k}$\\
$g_{1,1}$ & \multicolumn{1}{|c|}{$I^{\otimes n-k}$} & $I^{\otimes k}$\\
$\vdots$ & \multicolumn{1}{|c|}{$\vdots$} & $\vdots$\\
$g_{1,l_{1}-2}$ & \multicolumn{1}{|c|}{$I^{\otimes n-k}$} & $I^{\otimes k}$\\
$g_{1,l_{1}-1}$ & \multicolumn{1}{|c|}{$I^{\otimes n-k}$} & $I^{\otimes k}%
$\\\hline
$I^{\otimes m}$ & $Z_{2}$ & $I^{\otimes k}$\\
$g_{2,1}$ & \multicolumn{1}{|c|}{$I^{\otimes n-k}$} & $I^{\otimes k}$\\
$\vdots$ & \multicolumn{1}{|c|}{$\vdots$} & $\vdots$\\
$g_{2,l_{2}-1}$ & \multicolumn{1}{|c|}{$I^{\otimes n-k}$} & $I^{\otimes k}$\\
$g_{2,l_{2}}$ & \multicolumn{1}{|c|}{$I^{\otimes n-k}$} & $I^{\otimes k}$\\
$\vdots$ & \multicolumn{1}{|c|}{$\vdots$} & $\vdots$\\
$g_{2,l_{1}-2}$ & \multicolumn{1}{|c|}{$I^{\otimes n-k}$} & $I^{\otimes k}$\\
$g_{2,l_{1}-1}$ & \multicolumn{1}{|c|}{$I^{\otimes n-k}$} & $I^{\otimes k}%
$\\\hline
\end{tabular}
\rightarrow%
\begin{tabular}
[c]{c|c}%
Phys. & Mem.\\\hline\hline
$h_{1,1}$ & $g_{1,1}$\\
$h_{1,2}$ & $g_{1,2}$\\
$\vdots$ & $\vdots$\\
$h_{1,l_{1}-1}$ & $g_{1,l_{1}-1}$\\
$h_{1,l_{1}}$ & $I^{\otimes m}$\\\hline
$h_{2,1}\times h_{1,1}$ & $g_{2,1}$\\
$h_{2,2}\times h_{1,2}$ & $g_{2,2}$\\
$\vdots$ & $\vdots$\\
$h_{2,l_{2}}\times h_{1,l_{2}}$ & $g_{2,l_{2}}$\\
$h_{1,l_{2}+1}$ & $g_{2,l_{2}+1}$\\
$\vdots$ & $\vdots$\\
$h_{1,l_{1}-1}$ & $g_{1,l_{1}-1}$\\
$h_{1,l_{1}}$ & $I^{\otimes m}$\\\hline
\end{tabular}
\label{eq:other-representation-mul-1}%
\end{equation}

The RHS of the last row of the first and second block
in~(\ref{eq:other-representation-mul-1}) are the same. So we deduce that the
memory states of $g_{1,l_{1}-1}$ and $g_{2,l_{1}-1}$ are the same as well. Thus
we can omit the last row of the second block and exchange $g_{2,l_{1}-1}$ by
$g_{1,l_{1}-1}$ in the transformation. By proceeding in the transformation and
omitting repetitive rows, it will turn into the following transformation:
\begin{equation}%
\begin{tabular}
[c]{c|c|c}%
Mem. & \multicolumn{1}{|c|}{Anc.} & Info.\\\hline\hline
$I^{\otimes m}$ & $Z_{1}$ & $I^{\otimes k}$\\
$g_{1,1}$ & \multicolumn{1}{|c|}{$I^{\otimes n-k}$} & $I^{\otimes k}$\\
$\vdots$ & \multicolumn{1}{|c|}{$\vdots$} & $\vdots$\\
$g_{1,l_{2}-1}$ & \multicolumn{1}{|c|}{$I^{\otimes n-k}$} & $I^{\otimes k}$\\
$g_{1,l_{2}}$ & \multicolumn{1}{|c|}{$I^{\otimes n-k}$} & $I^{\otimes k}$\\
$\vdots$ & \multicolumn{1}{|c|}{$\vdots$} & $\vdots$\\
$g_{1,l_{1}-2}$ & \multicolumn{1}{|c|}{$I^{\otimes n-k}$} & $I^{\otimes k}$\\
$g_{1,l_{1}-1}$ & \multicolumn{1}{|c|}{$I^{\otimes n-k}$} & $I^{\otimes k}%
$\\\hline
$I^{\otimes m}$ & $Z_{2}$ & $I^{\otimes k}$\\
$g_{2,1}$ & \multicolumn{1}{|c|}{$I^{\otimes n-k}$} & $I^{\otimes k}$\\
$\vdots$ & \multicolumn{1}{|c|}{$\vdots$} & $\vdots$\\
$g_{2,l_{2}-1}$ & \multicolumn{1}{|c|}{$I^{\otimes n-k}$} & $I^{\otimes k}$%
\end{tabular}
\rightarrow%
\begin{tabular}
[c]{c|c}%
Phys. & Mem.\\\hline\hline
$h_{1,1}$ & $g_{1,1}$\\
$h_{1,2}$ & $g_{1,2}$\\
$\vdots$ & $\vdots$\\
$h_{1,l_{2}}$ & $g_{1,l_{2}}$\\
$h_{1,{l_{2}+1}}$ & $g_{1,{l_{2}+1}}$\\
$\vdots$ & $\vdots$\\
$h_{1,l_{1}-1}$ & $g_{1,l_{1}-1}$\\
$h_{1,l_{1}}$ & $I^{\otimes m}$\\\hline
$h_{2,1}\times h_{1,1}$ & $g_{2,1}$\\
$h_{2,2}\times h_{1,2}$ & $g_{2,2}$\\
$\vdots$ & $\vdots$\\
$h_{2,l_{2}}\times h_{1,l_{2}}$ & $g_{1,l_{2}}$%
\end{tabular}
\label{eq:other-representation-mul-2}%
\end{equation}
By multiplying the first row of the first block by the first row of the second block,
the second row of the first block by the second row of the second block, \ldots, and
the $l_{2}^{\text{th}}$
row of the first block by the $l_{2}^{\text{th}}$ row of the second block
in~(\ref{eq:other-representation-mul-2}), we obtain the following transformation:
\begin{equation}%
\begin{tabular}
[c]{c|cccc|c}%
Mem. & \multicolumn{4}{|c|}{Anc.} & Info.\\\hline\hline
$I^{\otimes m}$ & \multicolumn{4}{|c}{$Z_{1}$} & $I^{\otimes k}$\\
$g_{1,1}$ & \multicolumn{4}{|c|}{$I^{\otimes n-k}$} & $I^{\otimes k}$\\
$\vdots$ & \multicolumn{4}{|c|}{$\vdots$} & $\vdots$\\
$g_{1,l_{2}-1}$ & \multicolumn{4}{|c|}{$I^{\otimes n-k}$} & $I^{\otimes k}$\\
$g_{1,l_{2}}$ & \multicolumn{4}{|c|}{$I^{\otimes n-k}$} & $I^{\otimes k}$\\
$\vdots$ & \multicolumn{4}{|c|}{$\vdots$} & $\vdots$\\
$g_{1,l_{1}-2}$ & \multicolumn{4}{|c|}{$I^{\otimes n-k}$} & $I^{\otimes k}$\\
$g_{1,l_{1}-1}$ & \multicolumn{4}{|c|}{$I^{\otimes n-k}$} & $I^{\otimes k}%
$\\\hline
$I^{\otimes m}$ & \multicolumn{4}{|c|}{$Z_{2}$} & $I^{\otimes k}$\\
$g_{1,1} \times g_{2,1}$ & \multicolumn{4}{|c|}{$I^{\otimes n-k}$} & $I^{\otimes k}%
$\\
$\vdots$ & \multicolumn{4}{|c|}{$\vdots$} & $\vdots$\\
$g_{1,l_{2}-1}\times g_{2,l_{2}-1}$ & \multicolumn{4}{|c|}{$I^{\otimes n-k}$} &
$I^{\otimes k}$%
\end{tabular}
\rightarrow%
\begin{tabular}
[c]{c|c}%
Phys. & Mem.\\\hline\hline
$h_{1,1}$ & $g_{1,1}$\\
$h_{1,2}$ & $g_{1,2}$\\
$\vdots$ & $\vdots$\\
$h_{1,l_{2}}$ & $g_{1,l_{2}}$\\
$h_{1,{l_{2}+1}}$ & $g_{1,{l_{2}+1}}$\\
$\vdots$ & $\vdots$\\
$h_{1,l_{1}-1}$ & $g_{1,l_{1}-1}$\\
$h_{1,l_{1}}$ & $I^{\otimes m}$\\\hline
$h_{2,1}$ & $g_{1,1}\times g_{2,1}$\\
$h_{2,2}$ & $g_{1,2}\times g_{2,2}$\\
$\vdots$ & $\vdots$\\
$h_{2,l_{2}}$ & $I^{\otimes m}$%
\end{tabular}
\label{eq:other-representation-mul-3}%
\end{equation}
By comparing the second block of the above transformation with the second block
of~(\ref{eq:general-encoder}), it is clear that if we write the memory
commutativity matrix for \{$\{g_{1,i}\}_{\{i=1,...,l_{i}-1\}}$, $\{g_{1,i}%
.g_{2,i}\}_{\{i=1,...,l_{i}-1\}}$, \ldots, $\{g_{n-k,i}\}_{\{i=1,...,l_{i}-1\}}%
$\} for the new set of stabilizers, the commutativity matrix will be the same
as the one for the original set of stabilizers, and so the minimal amount of memory
will not change. With a similar approach, we can show that in the case that
$l_{1}\leq l_{2}$, the memory commutativity matrix will not change as well.

\subsection{Delay of stabilizers}

Now suppose we obtain a different representation for the code by delaying one of
the stabilizer generators. Suppose WLOG that the first stabilizer is delayed by $j$
frames. Therefore, the encoder should transform the \textquotedblleft
unencoded\textquotedblright\ Pauli $Z$ operator acting on the first ancilla
qubit to the operator $D^{j}(h_{1})$ (as Figure~\ref{fig:delay} shows). Let $g_{1,l_{1}%
},g_{1,l_{1}+1},...,g_{1,l_{1}+j-1}$ denote the first $j$ memory operators in
the Figure~\ref{fig:delay}. Let $\Omega^{\prime}$ denote the memory commutativity
matrix for the new stabilizer set. Hence the encoder should perform the
transformation in (\ref{eq:other-representation-del-1}). (The first block
in~(\ref{eq:other-representation-del-1}) differs from the first block
in~(\ref{eq:general-encoder}) and the others are the same).

Based on the transformation in~(\ref{eq:other-representation-del-1}), we see
that all memory states in $\{g_{1,l_{1}+s},s\in
\{0,1,2,...,j-1\}\}$ commute with all other memory stabilizers. Based on
this fact, we see that for the other memory states, the commutativity
relations in~(\ref{cons1}) still hold:
\begin{equation}
g_{i,j}\odot g_{i^{\prime},j^{\prime}}=\sum_{k=1}^{\text{min}\{(l_{i}-j),
(l_{i^{\prime}}-j^{\prime})\}} h_{i,j+k}\odot h_{i^{\prime},j^{\prime}+k}%
\end{equation}

Therefore $j$ rows and $j$ columns corresponding to $\{g_{1,l_{1}}%
,\ldots,g_{1,l_{1}+j-1}\}$ in the commutativity matrix ($\Omega^{\prime}$) are all
zero, and the other rows and columns are the same as the corresponding rows
and columns in $\Omega$ (the commutativity matrix for the original set of
generators). Hence, the rank of $\Omega^{\prime}$ is the same as the rank of
$\Omega$, but its dimension is equal to $j+\dim(\Omega)$. Therefore,
it requires $j$ more memory qubits.

In summary, by multiplying stabilizer generators by each other, the amount of memory does not
change, but by delaying one of them by $j$ frames, the required memory increases by $j$.

\begin{equation}%
\begin{tabular}
[c]{c|c|c}%
Mem. & Anc. & Info.\\\hline\hline
$I^{\otimes m}$ & $Z_{1}$ & $I^{\otimes k}$\\
$g_{1,l_{1}}$ & $I^{\otimes n-k}$ & $I^{\otimes k}$\\
$\vdots$ & $\vdots$ & $\vdots$\\
$g_{1,l_{1}+j-2}$ & $I^{\otimes n-k}$ & $I^{\otimes k}$\\
$g_{1,l_{1}+j-1}$ & $I^{\otimes n-k}$ & $I^{\otimes k}$\\
$g_{1,1}$ & $I^{\otimes n-k}$ & $I^{\otimes k}$\\
$\vdots$ & $\vdots$ & $\vdots$\\
$g_{1,l_{1}-2}$ & $I^{\otimes n-k}$ & $I^{\otimes k}$\\
$g_{1,l_{1}-1}$ & $I^{\otimes n-k}$ & $I^{\otimes k}$\\\hline
$I^{\otimes m}$ & $Z_{2}$ & $I^{\otimes k}$\\
$g_{2,1}$ & $I^{\otimes n-k}$ & $I^{\otimes k}$\\
$\vdots$ & $\vdots$ & $\vdots$\\
$g_{2,l_{2}-2}$ & $I^{\otimes n-k}$ & $I^{\otimes k}$\\
$g_{2,l_{2}-1}$ & $I^{\otimes n-k}$ & $I^{\otimes k}$\\\hline
$\vdots$ & $\vdots$ & $\vdots$%
\end{tabular}
\rightarrow%
\begin{tabular}
[c]{c|c}%
Phys. & Mem.\\\hline\hline
$I^{\otimes n}$ & $g_{1,l_{1}}$\\
$I^{\otimes n}$ & $g_{1,l_{1}+1}$\\
$\vdots$ & $\vdots$\\
$I^{\otimes n}$ & $g_{1,l_{1}+j-1}$\\
$h_{1,1}$ & $g_{1,1}$\\
$h_{1,2}$ & $g_{1,2}$\\
$\vdots$ & $\vdots$\\
$h_{1,l_{1}-1}$ & $g_{1,l_{1}-1}$\\
$h_{1,l_{1}}$ & $I^{\otimes m}$\\\hline
$h_{2,1}$ & $g_{2,1}$\\
$h_{2,2}$ & $g_{2,2}$\\
$\vdots$ & $\vdots$\\
$h_{2,l_{2}-1}$ & $g_{2,l_{2}-1}$\\
$h_{2,l_{2}}$ & $I^{\otimes m}$\\\hline
$\vdots$ & $\vdots$%
\end{tabular}
, \label{eq:other-representation-del-1}%
\end{equation}

\begin{figure}[ptb]
\begin{center}
\includegraphics[
width=3.4in
]{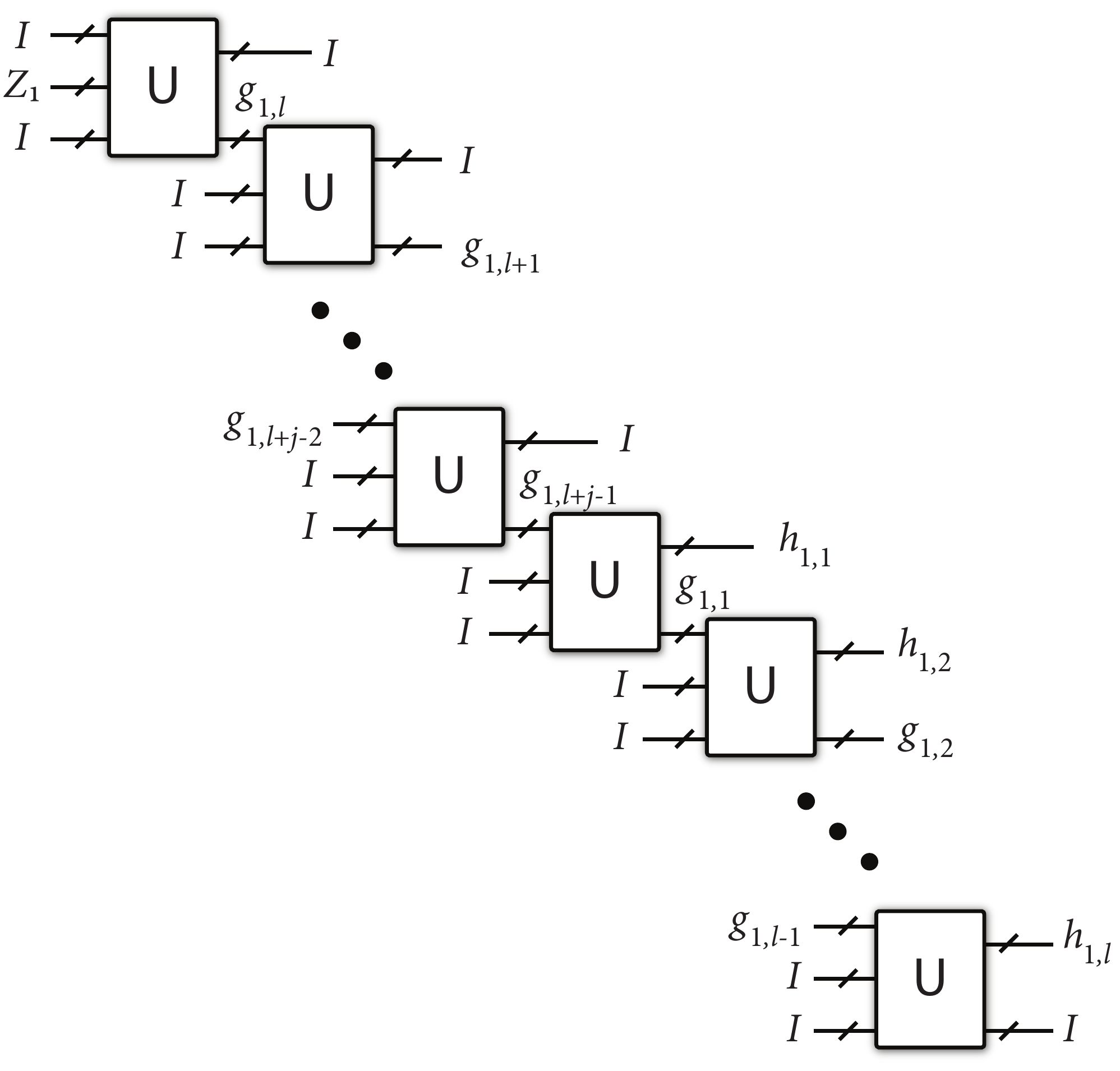}
\end{center}
\caption{If the first stabilizer $h_{1}$ is delayed by $j$ frames, the encoder
should transform the \textquotedblleft unencoded\textquotedblright\ Pauli $Z$
operator acting on the first ancilla qubit to the operator $D^{j}(h_{1})$, as the above
figure shows.}
\label{fig:delay}
\end{figure}

\subsection{Shortening Algorithm}

We can take advantage of the above observations to construct an algorithm
that reduces the minimal memory requirements for a given quantum convolutional code.
First suppose that the first block of each stabilizer generator which acts on
the first $n$ qubits, (i.e., $h_{i,1}, i\in\{1,2,\cdots,n-k\}$) are all
independent of each other. If we find another representation for the code by
multiplying the stabilizers, as we proved, the amount of minimal memory
will be the same, and if we find another representation for the code by
delaying some of the stabilizers, the minimal required memory will be more.
Now suppose that there is a dependence among the generators $h_{i,1}$. Suppose WLOG that
$h_{i,1}$ is equal to $h_{j,1}$, so that by multiplying $h_{i,1}$ with $h_{j,1}$ the
first block of $j^{th}$ stabilizer $h_{j,1}$ becomes equal to the identity.
Therefore, by shifting it one frame to the left the amount of minimal memory
requirement will be decreased. Therefore, for finding the minimal-memory
encoder among all representations of the same code, we should find a
representation in which all the first blocks of the stabilizer generators are
independent of each other. Also we should remove any dependence among the last
blocks of the stabilizer generators as well in order to be confident that the memory
states in~(\ref{eq:general-encoder}) are independent of each other and our
formula is valid. In the next section, we propose an algorithm, that we call
the ``shortening algorithm,'' to be confident that there is no dependence among the first
blocks of the stabilizer generators and also among the last blocks of the generators.
For a given set of stabilizers, first we
should apply the shortening algorithm and then write the transformation
in~(\ref{eq:general-encoder}) for the output stabilizer generators of the
algorithm to find the minimal-memory encoder among all representations of a code.

Algorithm 1 is the algorithm for shortening the stabilizers to be confident
that we are finding the minimal memory requirements among all stabilizer
representations of a given code. There is no dependence among the first blocks and
also last blocks of output stabilizers of the algorithm. The function Subset(S)
returns all the subsets of $S$ except for the empty subset. The complexity of the algorithm is
exponential in $n-k.$ \begin{algorithm}
\caption {Algorithm for shortening generators}
\begin{algorithmic}
\STATE $l_{i,\text{min}}\leftarrow$ minimum degree of $h_i$
\FOR{$i := 1$ to $n-k$}
\STATE $h_i \leftarrow h_i \times D^{-l_{i,min}}$
\ENDFOR
\STATE $\text{DepFound} \leftarrow 1$
\WHILE{$\text{DepFound} = 1$}
\STATE $\text{DepFound}\leftarrow 0$
\FOR{$i := 1$ to $n-k$}
\STATE $m\leftarrow 1$
\FOR{$j := 1$ to $n-k$}
\IF{$i\neq j$ {\bf AND} $l_j \leq l_i$}
\STATE $S_m \leftarrow h_{j,1}$
\STATE $m++$
\ENDIF
\ENDFOR
\STATE $\tilde{S}\leftarrow $ Subset($S$)
\FOR{$y=1$,  $y<2^{m}-1$}
\IF {the product of members of $\tilde{S}_{y}$ is equal to $h_{i,1}$}
\FOR{all $h_{g,1}$ in $\tilde{S}_{y}$}
\STATE $h_{i}\leftarrow h_{i} \times h_{g}$
\ENDFOR
\STATE $h_{i}\leftarrow D^{-1}h_{i}$
\STATE $l_i --$
\STATE $\text{DepFound}\leftarrow 1$
\ENDIF
\ENDFOR
\STATE CLEAR (S);  CLEAR ($\tilde{S}$)
\ENDFOR
\ENDWHILE
\STATE $\text{DepFound} \leftarrow 1$
\WHILE{$\text{DepFound} = 1$}
\STATE $\text{DepFound}\leftarrow 0$
\FOR{$i := 1$ to $n-k$}
\STATE $m\leftarrow 1$
\FOR{$j := 1$ to $n-k$}
\IF{$i\neq j$ {\bf AND } $l_j \leq l_i$}
\STATE $S_m \leftarrow h_{j,l_j}$
\STATE $m++$
\ENDIF
\ENDFOR
\STATE $\tilde{S}\leftarrow $ Subset($S$)
\FOR{$y=1$, $y<2^{m}-1$}
\IF {the product of members of $\tilde{S}_{y}$ is equal to $h_{i,l_i}$}
\FOR{all $h_{g,l_g}$ in $\tilde{S}_{y}$}
\STATE $h_{i}=h_{i} \times D^{l_i-l_g}h_{g}$
\ENDFOR
\STATE $l_i--$
\STATE $\text{DepFound}\leftarrow 1$
\ENDIF
\ENDFOR
\STATE CLEAR (S); CLEAR ($\tilde{S}$)
\ENDFOR
\ENDWHILE
\end{algorithmic}
\end{algorithm}

\section{Catastrophicity}

\label{sec:state-diagram}Although the convolutional encoder in
(\ref{eq:run-example-encoder-partial}) has a minimal number of memory qubits,
it may not necessarily be non-catastrophic (though, we show that it actually
is non-catastrophic in Section~\ref{sec:main-theorem1}). We should ensure that
the encoder is non-catastrophic if the receiver decodes the encoded qubits
with the inverse of the encoder and then exploits the decoding algorithm in
Ref.~\cite{PTO09}\ to correct for errors introduced by a noisy channel. As a
prerequisite for non-catastrophicity, we need to review the notion of a state
diagram for a quantum convolutional encoder.

The state diagram for a quantum convolutional encoder is the most important
tool for analyzing properties such as its distance spectrum and for
determining whether it is catastrophic~\cite{PTO09}. It is similar to the
state diagram for a classical encoder~\cite{V71,book1999conv,McE02}, with an
important exception for the quantum case that incorporates the fact that the
logical operators of a quantum code are unique up to multiplication by the
stabilizer generators. The state diagram allows us to analyze the flow of the
logical operators through the convolutional encoder.

\begin{definition}
[State Diagram]The state diagram for a quantum convolutional encoder is a
directed multigraph with $4^{m}$ vertices that we can think of as
\textquotedblleft memory states,\textquotedblright\ where $m$ is the number of
memory qubits in the encoder. Each memory state corresponds to an $m$-qubit
Pauli operator~$M$ that acts on the memory qubits. We connect two vertices $M$
and $M^{\prime}$ with a directed edge from $M$ to $M^{\prime}$ and label this
edge as $\left(  L,P\right)  $ if the encoder takes the $m$-qubit Pauli
operator~$M$, an $\left(  n-k\right)  $-qubit Pauli operator $S^{z}\in\left\{
I,Z\right\}  ^{n-k}$ acting on the $n-k$ ancilla qubits, and a $k$-qubit Pauli
operator $L$ acting on the information qubits, to an $n$-qubit Pauli operator
$P$ acting on the $n$ physical qubits and an $m$-qubit Pauli
operator~$M^{\prime}$ acting on the $m$ memory qubits:%
\[%
\begin{tabular}
[c]{c|c|c}%
\emph{Mem.} & \emph{Anc.} & \emph{Info.}\\\hline\hline
$M$ & $S^{z}$ & $L$%
\end{tabular}
\ \ \ \ \ \ \underrightarrow{\ \text{encoder\ }}\ \ \ \ \
\begin{tabular}
[c]{c|c}%
\emph{Phys.} & \emph{Mem.}\\\hline\hline
$P$ & $M^{\prime}$%
\end{tabular}
\ .
\]
The labels $L$ and $P$ are the respective logical and physical labels of the edge.
\end{definition}

Observe that the state diagram has $4^{m}$ vertices and $2^{2m+n+k}$ edges
(there are $4^{m}$ memory states, $4^{k}$ logical transitions for $L$, and
$2^{n-k}$ ancilla operators). This is the main reason that it is important to
reduce the size of the encoder's memory---it is related to the complexity of
the decoding algorithm.

We do not explicitly depict the state diagram for our running example because
it would require $4^{3}=64$~vertices and $2^{2\left(  3\right)  +4+2}%
=4096$~edges (though note that the entries in
(\ref{eq:run-example-encoder-partial}) and their combinations already give
$2^{8}=256$~edges that should be part of the state diagram---we would need the
full specification of the encoder for our running example in order to
determine its state diagram). Figure~8 of Ref.~\cite{PTO09} depicts a simple
example of an encoder that acts on one memory qubit, one ancilla qubit, and
one information qubit. Thus, its state diagram has only four vertices and
32~edges, and Figure~9 of the same paper depicts the encoder's state diagram.

\label{sec:catastrophic-review}We now review the definition of catastrophicity
from Ref.~\cite{PTO09},\footnote{We should note that there have been previous
(flawed) definitions of catastrophicity in the quantum convolutional coding
literature. The first appearing in Ref.~\cite{arxiv2004olliv}\ is erroneous by
the argument in Ref.~\cite{G10}. Suppose that a convolutional encoder cyclicly
permutes the qubits in a frame upward so that the first qubit becomes the
last, and suppose it then follows with a block encoding on the other qubits.
This encoder cannot be arranged into the \textquotedblleft
pearl-necklace\textquotedblright\ form required by Proposition~4.1 of
Ref.~\cite{arxiv2004olliv}, but it nevertheless is obviously non-catastrophic
because errors never propagate between logical qubits in different frames.
\par
The definition of non-catastrophicity in Ref.~\cite{GR06b}\ is also erroneous.
It states that an encoder is non-catastrophic if it can be arranged into a
circuit of finite depth. This definition excludes the class of recursive
quantum convolutional encoders, which cannot be arranged into a circuit of
finite depth. Now, it turns out from a detailed analysis that every recursive
quantum convolutional encoder is catastrophic according to the definition in
Definition~\ref{def:catastrophic} (Theorem~1 of Ref.~\cite{PTO09}), but this
theorem does not apply to entanglement-assisted quantum convolutional encoders
that can be both recursive and non-catastrophic~\cite{WH10b}. Thus, in light
of these latter developments, the definition of non-catastrophicity from
Ref.~\cite{GR06b}\ is flawed.} which is based on the classical notion of
catastrophicity from Ref.~\cite{V71,McE02}. The essential idea behind
catastrophic error propagation is that an error with finite weight, after
being fed through the inverse of the encoder, could propagate infinitely
throughout the decoded information qubit stream without triggering syndromes
corresponding to these errors. The only way that this catastrophic error
propagation can occur is if there is some cycle in the state diagram where all
of the edges along the cycle have physical labels equal to the identity
operator, while at least one of the edges has a logical label that is not
equal to the identity. If such a cycle exists, it implies that the
finite-weight channel error produces an infinite-weight information qubit
error without triggering syndrome bits corresponding to this error (if it did
trigger syndrome bits, this cycle would not be in the state diagram), and an
iterative decoding algorithm such as that presented in Ref.~\cite{PTO09}\ is
not able to detect these errors. So, we can now state the definition of a
catastrophic encoder.

\begin{definition}
[Catastrophic Encoder]\label{def:catastrophic}A quantum convolutional encoder
acting on memory qubits, information qubits, and ancilla qubits is
catastrophic if there exists a cycle in its state diagram where all edges in
the cycle have zero physical weight, but there is at least one edge in the
cycle with non-zero logical weight.\footnote{Interestingly, catastrophicity in
the quantum world is not only a property of the encoder, but it also depends
on the resources on which the encoder acts~\cite{WH10b}. For example, we can
replace the ancilla qubit of the catastrophic encoder in Figure~8 of
Ref.~\cite{PTO09} with one system of an entangled bit, and the resulting
encoder becomes non-catastrophic. This type of thing can never happen
classically if the only kind of resource employed is a classical bit.}
\end{definition}

\subsection{Towards a Minimal-memory/Non-catastrophic Encoder}

\label{sec:main-theorem}This section presents our main results that apply to
the task of finding a minimal-memory, non-catastrophic encoder for an
arbitrary set of stabilizer generators that form a valid quantum convolutional
code. Our first theorem states a sufficient condition for a minimal-memory
encoder to be non-catastrophic, and this theorem applies to our running
example in (\ref{eq:running-example}) and
(\ref{eq:run-example-encoder-partial}).

\subsubsection{Encoders with a Full-rank Memory Commutativity Matrix}

\label{sec:main-theorem1}

\begin{theorem}
\label{thm:min-cat} Suppose the memory commutativity matrix of a given set of
stabilizer generators is full rank. Then any minimal-memory encoder with a
partial specification given by Theorem~\ref{thm:min-mem}\ is non-catastrophic.
\end{theorem}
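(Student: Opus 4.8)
The plan is to recast ``no catastrophic cycle exists'' as a statement about the encoder's transfer matrix, and then to use the full-rank hypothesis to make that transfer matrix polynomial and equal to the list of the code's stabilizer generators and logical operators.

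First I would extract structure from the hypothesis. Since $\Omega$ is full rank, Theorem~\ref{thm:min-mem} gives $m=\tfrac12\dim(\Omega)$, so the $2m$ memory operators $\{g_{i,j}\}$ are linearly independent modulo phases and form a basis of the $m$-qubit Pauli group modulo phases; after conjugating the memory register by the local Clifford $G$ from the proof of Theorem~\ref{thm:min-mem} (an isomorphism of state diagrams, hence preserving catastrophicity) we may take them to be a standard symplectic basis. The payoff is that the block of the encoder's symplectic matrix sending memory-in to memory-out on trivial ancilla/information input is then \emph{forced} by the partial specification (\ref{eq:general-encoder}) to be the shift $N$ with $N(g_{i,j})=g_{i,j+1}$ and $N(g_{i,l_i-1})=I^{\otimes m}$; this $N$ is nilpotent, $N^{L}=0$ with $L=\max_i l_i$ --- the same fact Section~\ref{sec:non-rec} exploits for non-recursiveness. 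Similarly the memory-in-to-physical-out block $C$ is forced with $C(g_{i,j})=h_{i,j+1}$, and the ancilla-$Z_i$-to-physical-out map $E$ has $E(Z_i)=h_{i,1}$.

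Then I would suppose, toward a contradiction, a catastrophic cycle $M_0\to\cdots\to M_t=M_0$ with every physical label $I^{\otimes n}$ and some logical label $L_{s_0}\neq I^{\otimes k}$, with ancilla/logical labels $(S^z_s,L_s)$ on the $s$th edge. Writing the encoder $\mathbb{F}_2$-linearly gives $M_{s+1}=N(M_s)+A(S^z_s)+B(L_s)$ and $0=C(M_s)+E(S^z_s)+F(L_s)$, where $A,B,F$ are the remaining, completion-dependent blocks. Passing to generating functions in the delay operator $D$ (rational, with denominator $1+D^{t}$ since the cycle is periodic), the memory recursion becomes $(I-DN)M(D)=D\,(A\,S^z(D)+B\,L(D))$, and --- crucially, because $N$ is nilpotent --- $(I-DN)^{-1}=\sum_{r\ge0}D^{r}N^{r}$ is a \emph{polynomial}, so the memory trajectory is entirely slaved to the inputs with no free periodic part. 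Substituting into the physical equation yields $T_{\mathrm{anc}}(D)\,S^z(D)+T_{\mathrm{info}}(D)\,L(D)=0$ with polynomial transfer matrices $T_{\mathrm{anc}}=CD(I-DN)^{-1}A+E$ and $T_{\mathrm{info}}=CD(I-DN)^{-1}B+F$. A short computation using $N(g_{i,j})=g_{i,j+1}$, $C(g_{i,j})=h_{i,j+1}$, $E(Z_i)=h_{i,1}$ identifies the columns of $T_{\mathrm{anc}}(D)$ as the stabilizer generators $\sum_{j=1}^{l_i}h_{i,j}D^{j-1}$ of the code, and the columns of $T_{\mathrm{info}}(D)$ as representatives of the code's logical operators (finite-weight exactly because $N$ is nilpotent). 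Because a valid convolutional code with a reduced generator set has its stabilizer generators and logical operators linearly independent over $\mathbb{F}_2(D)$, the combined matrix $[\,T_{\mathrm{anc}}(D)\mid T_{\mathrm{info}}(D)\,]$ has trivial kernel, forcing $S^z(D)=0$ and $L(D)=0$ and contradicting $L_{s_0}\neq I^{\otimes k}$. (With trivial inputs the recursion is $M_{s+1}=N(M_s)$, so the only physically silent cycle is the self-loop at $I^{\otimes m}$, which carries no logical weight.)

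The main obstacle is the identification step together with pinning down exactly why full-rankness is needed: full-rankness forces $N$ to be the nilpotent shift on \emph{all} of the $m$-qubit Pauli space, not merely on $\mathrm{span}\{g_{i,j}\}$, and this is what (i) removes homogeneous memory trajectories so the transfer-function computation is legitimate and (ii) makes $T_{\mathrm{anc}},T_{\mathrm{info}}$ polynomial so that the $\mathbb{F}_2(D)$-independence of stabilizers and finite-weight logical operators can be invoked; if $\Omega$ is not full rank, $N$ may act non-nilpotently on directions outside $\mathrm{span}\{g_{i,j}\}$, creating periodic memory trajectories not slaved to the inputs that can combine with the logical operators into a physically silent, logically nontrivial cycle. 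A second, more routine technical point is that the reduced generator set (guaranteed by the shortening algorithm) ensures no logical operator's physical part lies in the $\mathbb{F}_2(D)$-span of the stabilizer generators, which is exactly what the final linear-independence claim needs.
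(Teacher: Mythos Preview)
Your argument is correct but takes a genuinely different route from the paper's. The paper gives a short direct argument using only commutation relations: it observes that every output row of a putative zero-physical-weight cycle commutes with the last row of each block on the RHS of~(\ref{eq:general-encoder}), hence the corresponding input rows commute, forcing each cycle memory state $m_s$ to commute with every $g_{i,l_i-1}$; iterating this backward through the blocks shows each $m_s$ commutes with \emph{all} the $g_{i,j}$. Full rank of $\Omega$ means the $g_{i,j}$ span the $m$-qubit Pauli group, so each $m_s=I^{\otimes m}$, and the cycle collapses to the trivial self-loop. No transfer functions, no identification of logical operators, no $\mathbb{F}_2(D)$-linear algebra.

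Your system-theoretic approach is valid and has its own merits: it makes transparent the link to classical catastrophicity criteria (polynomial transfer matrix with full column rank), and it isolates nilpotency of the memory-update map $N$ as the operative consequence of full rank --- the same nilpotency the paper later uses for non-recursiveness. Two remarks on your write-up. First, your invocation of the shortening algorithm is unnecessary: the independence of the columns of $[T_{\mathrm{anc}}\mid T_{\mathrm{info}}]$ over $\mathbb{F}_2(D)$ follows directly from invertibility of the iterated encoder together with nilpotency of $N$ (a polynomial input with zero physical output and vanishing final memory forces zero input), and does not require the generator set to be reduced. Second, the identification of the $T_{\mathrm{info}}$ columns as logical-operator representatives, while correct, is a detour; the trivial-kernel claim is more cleanly obtained from that invertibility argument than from citing independence of stabilizers and logicals. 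The paper's proof buys brevity and avoids these technicalities entirely; yours buys a structural explanation and a bridge to the classical theory.
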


\begin{proof}
We need to consider an encoder of the general form in
(\ref{eq:general-encoder}). Suppose for a contradiction that the
minimal-memory encoder with $m$ memory qubits is catastrophic. By
Definition~\ref{def:catastrophic}, this implies that there is some cycle
through a set of memory states $\left\{  m_{1},\ldots,m_{p}\right\}  $ of the
following form (with zero physical weight but non-zero logical weight):%
\begin{equation}%
\begin{tabular}
[c]{c|c|c}%
Mem. & Anc. & Info.\\\hline\hline
$m_{1}$ & $s_{1}$ & $l_{1}$\\
$m_{2}$ & $s_{2}$ & $l_{2}$\\
$\vdots$ & $\vdots$ & $\vdots$\\
$m_{p}$ & $s_{p}$ & $l_{p}$%
\end{tabular}
\rightarrow%
\begin{tabular}
[c]{c|c}%
Phys. & Mem.\\\hline\hline
$I^{\otimes n}$ & $m_{2}$\\
$I^{\otimes n}$ & $m_{3}$\\
$\vdots$ & $\vdots$\\
$I^{\otimes n}$ & $m_{1}$%
\end{tabular}
, \label{catastrophic1}%
\end{equation}
where $m_{1}$, \ldots, $m_{p}$ are arbitrary Pauli operators acting on the
memory qubits, the operators $s_{i}\in\{I,Z\}^{\otimes(n-k)}$ act on the $n-k$
ancilla qubits, and the operators $l_{i}$ are arbitrary $k$-qubit Pauli
operators acting on the $k$ information qubits (with at least one of them not
equal to the identity operator). Observe that all of the output rows on the
RHS of~(\ref{catastrophic1}) commute with the last row in each of the $n-k$
blocks on the RHS of the transformation in~(\ref{eq:general-encoder}). This
observation implies that all of the rows on the LHS of~(\ref{catastrophic1})
should commute with the corresponding rows on the LHS\ of the transformation
in~(\ref{eq:general-encoder}). Therefore, all operators $m_{1},m_{2}%
,...,m_{p}$ acting on the memory qubits commute with the memory operators
$g_{i,l_{i}-1}$ for all $i\in\left\{  1,2,...,n-k\right\}  $. Continuing, we
now know that all of the rows on the RHS\ of (\ref{catastrophic1}) commute
with the second-to-last row in each of the $n-k$ blocks on the RHS\ of the
transformation in (\ref{eq:run-example-encoder-partial}). This then implies
that $m_{1},...,m_{p}$ commute with $g_{i,l_{i}-2}$ for all $i\in\left\{
1,2,...,n-k\right\}  $ by the same reasoning above. Continuing in this manner
up the rows of each of the $n-k$ blocks, we can show that the operators
$m_{1},m_{2},...,m_{p}$ commute with all of the memory operators $g_{i,j}$ for
all $i\in\{1,2,...,n-k\}$ and $j\in\{1,2,...,l_{i}-1\}$.

All of these commutativity constraints restrict the form of the operators
$m_{1},...,m_{p}$ in the catastrophic cycle. By assumption, the rank of the
memory commutativity matrix is full and equal to $2m$. This implies that there
are $2m$ memory operators $g_{i,j}$ and they form a complete basis for the
Pauli group on $m$ qubits. It follows that each of the operators
$m_{1},...,m_{p}$ is equal to the identity operator on $m$ qubits because they
are required to commute with all $g_{i,j}$ and the only operator that can do
so is the $m$-qubit identity operator. So all of the entries in
(\ref{catastrophic1}) are really just cycles of the form%
\[%
\begin{tabular}
[c]{c|c|c}%
Mem. & Anc. & Info.\\\hline\hline
$I^{\otimes m}$ & $s_{1}$ & $l_{1}$\\
$I^{\otimes m}$ & $s_{2}$ & $l_{2}$\\
$\vdots$ & $\vdots$ & $\vdots$\\
$I^{\otimes m}$ & $s_{p}$ & $l_{p}$%
\end{tabular}
\rightarrow%
\begin{tabular}
[c]{c|c}%
Phys. & Mem.\\\hline\hline
$I^{\otimes n}$ & $I^{\otimes m}$\\
$I^{\otimes n}$ & $I^{\otimes m}$\\
$\vdots$ & $\vdots$\\
$I^{\otimes n}$ & $I^{\otimes m}$%
\end{tabular}
.
\]
The above input-output relations restrict $s_{1},...,s_{p}$, and $l_{1}%
,\ldots,l_{p}$ further---it is impossible for $s_{1},...,s_{p}$, and
$l_{1},\ldots,l_{p}$ to be any Pauli operator besides the identity operator.
Otherwise, the encoder would not transform the entry on the LHS to the all
identity operator. Thus, the only cycle of zero-physical weight in a
minimal-memory encoder given by Theorem~\ref{thm:min-mem}\ that implements the
transformation in~(\ref{eq:general-encoder}) is the self-loop at the identity
memory state with zero logical weight, which implies the encoder is non-catastrophic.
\end{proof}

We return to our running example from (\ref{eq:running-example}). We
determined in (\ref{eq:run-example-encoder-partial}) a partial specification
of a minimal-memory encoder for these generators, and the above theorem states
that any encoder that realizes this transformation is non-catastrophic as
well. Indeed, we can study the proof technique above for this example. Suppose
for a contradiction that a catastrophic cycle exists in the state diagram for
the minimal-memory encoder in (\ref{eq:run-example-encoder-partial}). Such a
catastrophic cycle has the following form:%
\begin{equation}%
\begin{tabular}
[c]{c|cc|cc}%
Mem. & \multicolumn{2}{|c}{Anc.} & \multicolumn{2}{|c}{Info.}\\\hline\hline
$m_{1}$ & $s_{1,1}$ & $s_{1,2}$ & $l_{1,1}$ & $l_{1,2}$\\
$m_{2}$ & $s_{2,1}$ & $s_{2,2}$ & $l_{2,1}$ & $l_{2,2}$\\
$\vdots$ & $\vdots$ & $\vdots$ & $\vdots$ & $\vdots$\\
$m_{p}$ & $s_{p,1}$ & $s_{p,2}$ & $l_{p,1}$ & $l_{p,2}$%
\end{tabular}
\rightarrow%
\begin{tabular}
[c]{c|c}%
\multicolumn{1}{c|}{Phys.} & Mem.\\\hline\hline
$I^{\otimes4}$ & $m_{2}$\\
$I^{\otimes4}$ & $m_{3}$\\
$\vdots$ & $\vdots$\\
$I^{\otimes4}$ & $m_{1}$%
\end{tabular}
\label{eq:catastrophic-cycle}%
\end{equation}
where $m_{1}$, \ldots, $m_{p}$ can be arbitrary Pauli operators acting on the
three memory qubits, each $s_{i,j}\in\{I,Z\}$ acts on an ancilla qubit, and
each $l_{i,j}$ is an arbitrary single-qubit Pauli operator acting on an
information qubit (with at least one $l_{i,j}$ not equal to the identity
operator). Observe that all of the output rows on the RHS
of~(\ref{eq:catastrophic-cycle}) commute with the fourth and eighth rows on
the RHS of the transformation in~(\ref{eq:run-example-encoder-partial}). This
observation implies that all of the rows on the LHS
of~(\ref{eq:catastrophic-cycle}) should commute with the fourth and eighth
rows on the LHS\ of the transformation
in~(\ref{eq:run-example-encoder-partial}). Therefore all operators
$m_{1},m_{2},...,m_{p}$ acting on the memory qubits commute with $IZI$ and
$ZII$. Continuing, we now know that all of the rows on the RHS\ of
(\ref{eq:catastrophic-cycle}) commute with the third and seventh rows of
(\ref{eq:run-example-encoder-partial}) because $\left[  m_{i},Z_{2}\right]
=\left[  m_{i},Z_{1}\right]  =0$ for all $1\leq i\leq p$. This then implies
that $m_{1},...,m_{p}$ commute with $IIX$ and $IIZ$ by the same reasoning
above. We can continue one last time to show that all $m_{1},...,m_{p}$
commute with $XIX$ and $ZXZ$. Similar to the reasoning in the above theorem,
all of these commutativity constraints restrict the form of the operators
$m_{1},...,m_{p}$ in the catastrophic cycle. In fact, the only three-qubit
operator that commutes with $IZI$, $ZII$, $IIZ$, $IIZ$, $XIX$, and $ZXZ$ is
the three-qubit identity operator because the aforementioned operators form a
complete basis for the Pauli group on three qubits. Applying the same logic as
at the end of the above proof then allows us to conclude that the encoder is non-catastrophic.

\subsubsection{Encoders without a Full-rank Memory Commutativity Matrix and
with an Empty Partial Null Space}

Now suppose that the memory commutativity matrix of a given set of stabilizer
generators is not full rank. As we explained in the proof of
Theorem~\ref{thm:min-cat}, the memory operators $m_{1},\cdots,m_{p}$ of a
catastrophic cycle in~(\ref{catastrophic1}) commute with all memory operators
$g_{i,j}$ in~(\ref{eq:general-encoder}). Since the number of commutativity
constraints is less than $2m$ in this case (where $m$ is the number of qubits
on which the memory operators act), there are other choices for the
catastrophic memory operators $m_{1},\cdots,m_{p}$ besides the $m$-qubit
identity operator that are consistent with these constraints. This implies
that some of the encoders implementing the transformation
in~(\ref{eq:general-encoder}) may be catastrophic. To illustrate this case, we
choose the second code of Figure~1 in Ref.~\cite{GR07} as another running
example. This code has the following stabilizer generators:%
\[%
\begin{array}
[c]{c}%
h_{1}=XXXX\\
h_{2}=ZZZZ
\end{array}
\left\vert
\begin{array}
[c]{c}%
XXII\\
ZZII
\end{array}
\right\vert
\begin{array}
[c]{c}%
IXIX\\
IZIZ
\end{array}
\left\vert
\begin{array}
[c]{c}%
IIXX\\
IIZZ
\end{array}
\right\vert
\begin{array}
[c]{c}%
XXXX\\
ZZZZ
\end{array}
.
\]
An encoding unitary for this code should be as follows:%
\[%
\begin{tabular}
[c]{c|cc|cc}%
Mem. & \multicolumn{2}{|c|}{Anc.} & \multicolumn{2}{|c}{Info.}\\\hline\hline
$I^{\otimes m}$ & $Z$ & $I$ & $I$ & $I$\\
$g_{1,1}$ & $I$ & $I$ & $I$ & $I$\\
$g_{1,2}$ & $I$ & $I$ & $I$ & $I$\\
$g_{1,3}$ & $I$ & $I$ & $I$ & $I$\\
$g_{1,4}$ & $I$ & $I$ & $I$ & $I$\\\hline
$I^{\otimes m}$ & $I$ & $Z$ & $I$ & $I$\\
$g_{2,1}$ & $I$ & $I$ & $I$ & $I$\\
$g_{2,2}$ & $I$ & $I$ & $I$ & $I$\\
$g_{2,3}$ & $I$ & $I$ & $I$ & $I$\\
$g_{2,4}$ & $I$ & $I$ & $I$ & $I$%
\end{tabular}
\ \rightarrow%
\begin{tabular}
[c]{cccc|c}%
\multicolumn{4}{c|}{Phys.} & Mem.\\\hline\hline
$X$ & $X$ & $X$ & $X$ & $g_{1,1}$\\
$X$ & $X$ & $I$ & $I$ & $g_{1,2}$\\
$I$ & $X$ & $I$ & $X$ & $g_{1,3}$\\
$I$ & $I$ & $X$ & $X$ & $g_{1,4}$\\
$X$ & $X$ & $X$ & $X$ & $I^{\otimes m}$\\\hline
$Z$ & $Z$ & $Z$ & $Z$ & $g_{2,1}$\\
$Z$ & $Z$ & $I$ & $I$ & $g_{2,2}$\\
$I$ & $Z$ & $I$ & $Z$ & $g_{2,3}$\\
$I$ & $I$ & $Z$ & $Z$ & $g_{2,4}$\\
$Z$ & $Z$ & $Z$ & $Z$ & $I^{\otimes m}$%
\end{tabular}
\ .
\]
By inspecting the commutativity relations of the memory operators~$g_{i,j}$ in
the above transformation, the commutativity matrix is%
\[
\Omega=%
\begin{bmatrix}
0 & 0 & 0 & 0 & 0 & 0 & 0 & 0\\
0 & 0 & 0 & 0 & 0 & 0 & 1 & 0\\
0 & 0 & 0 & 0 & 0 & 1 & 0 & 0\\
0 & 0 & 0 & 0 & 0 & 0 & 0 & 0\\
0 & 0 & 0 & 0 & 0 & 0 & 0 & 0\\
0 & 0 & 1 & 0 & 0 & 0 & 0 & 0\\
0 & 1 & 0 & 0 & 0 & 0 & 0 & 0\\
0 & 0 & 0 & 0 & 0 & 0 & 0 & 0
\end{bmatrix}
,
\]
with dimension equal to eight and rank equal to four. So, based on
Theorem~\ref{thm:min-mem}, the minimal number of required memory qubits is
six. A set of memory operators which act on a minimal number of qubits is as
follows:%
\begin{align*}
g_{1,1}  &  =ZIIIII,\ \ \ g_{1,2}=IIXIII,\ \ \ g_{1,3}=IIIZII,\\
g_{1,4}  &  =IIIIZI,\ \ \ g_{2,1}=IZIIII,\ \ \ g_{2,2}=IIIXII,\\
g_{2,3}  &  =IIZIII,\ \ \ g_{2,4}=IIIIIZ.
\end{align*}
Thus, the encoder implements the following transformation:%
\begin{equation}%
\begin{tabular}
[c]{c|c|c}%
Mem. & \multicolumn{1}{|c|}{Anc.} & Info.\\\hline\hline
$IIIIII$ & $ZI$ & $II$\\
$ZIIIII$ & $II$ & $II$\\
$IIXIII$ & $II$ & $II$\\
$IIIZII$ & $II$ & $II$\\
$IIIIZI$ & $II$ & $II$\\\hline
$IIIIII$ & $IZ$ & $II$\\
$IZIIII$ & $II$ & $II$\\
$IIIXII$ & $II$ & $II$\\
$IIZIII$ & $II$ & $II$\\
$IIIIIZ$ & $II$ & $II$%
\end{tabular}
\rightarrow%
\begin{tabular}
[c]{c|c}%
\multicolumn{1}{c|}{Phys.} & Mem.\\\hline\hline
$XXXX$ & $ZIIIII$\\
$XXII$ & $IIXIII$\\
$IXIX$ & $IIIZII$\\
$IIXX$ & $IIIIZI$\\
$XXXX$ & $IIIIII$\\\hline
$ZZZZ$ & $IZIIII$\\
$ZZII$ & $IIIXII$\\
$IZIZ$ & $IIZIII$\\
$IIZZ$ & $IIIIIZ$\\
$ZZZZ$ & $IIIIII$%
\end{tabular}
. \label{eq:second-running-ex-trans}%
\end{equation}
When the commutativity matrix is not full rank, we should add some rows to the
transformation in~(\ref{eq:general-encoder}) to ensure that the encoder
implementing the transformation is non-catastrophic. To fulfill this
requirement, the first step is to find a set$~C$ of memory states that can be
a part of catastrophic cycle (i.e., memory states which satisfy the
commutativity relations mentioned in the proof of Theorem~\ref{thm:min-cat}).
In our running example, the memory operators in a catastrophic cycle should
commute with $ZIIIII$, $IIXIII$, $IIIZII$, $IIIIZI$, $IZIIII$, $IIIXII$,
$IIZIII$ and $IIIIIZ$. Thus, they must be an operator in the following set:%
\[
C=\{Z_{1}^{e_{1}}Z_{2}^{e_{2}}Z_{5}^{e_{3}}Z_{6}^{e_{4}}:e_{1},e_{2}%
,e_{3},e_{4}\in\{0,1\}\}\label{mem-cat}.
\]
The next step is to search among the rows and their combinations
in~(\ref{eq:general-encoder}) to find a set $S_{1}$ whose members can
potentially be a part of catastrophic cycle. Entries in $S_{1}$ have the
following form:%

\[%
\begin{tabular}
[c]{c|c|c}%
Mem. & Anc. & Info.\\\hline\hline
$M$ & $S^{z}$ & $L$%
\end{tabular}
\rightarrow%
\begin{tabular}
[c]{c|c}%
Phys. & Mem.\\\hline\hline
$I^{\otimes{n}}$ & $M^{\prime}$%
\end{tabular}
,
\]
where $M$ and $M^{\prime}$ are both elements of the set $C$, the operator
$S^{z}\in\{I,Z\}^{\otimes(n-k)}$ acts on $n-k$ ancilla qubits, and $L$ is an
arbitrary $k$-qubit Pauli operator acting on the information qubits.

In our running example in (\ref{eq:second-running-ex-trans}), members of the
set $S_{1}$ are obtained by adding the first row to the fifth row, and the
sixth row to the tenth row, giving the following input-output relations:%
\begin{equation}%
\begin{tabular}
[c]{c|c|c}%
Mem. & \multicolumn{1}{|c|}{Anc.} & Info.\\\hline\hline
$IIIIZI$ & $ZI$ & $II$\\
$IIIIIZ$ & $IZ$ & $II$%
\end{tabular}
\rightarrow%
\begin{tabular}
[c]{c|c}%
\multicolumn{1}{c|}{Phys.} & Mem.\\\hline\hline
$IIII$ & $ZIIIII$\\
$IIII$ & $IZIIII$%
\end{tabular}
. \label{eq:runn-ex-partial-null}%
\end{equation}
Consider a set$~S_{2}$ of rows with physical output equal to the four-qubit
identity operator. Now we should add such a set of rows to the transformation
so that the output memory operators of the members of $S_{1}$ and $S_{2}$ make
a complete basis for the set $C$. This guarantees that the rows potentially
part of catastrophic cycle are just an entry or a combination of entries of
$S_{1}\cup S_{2}$. So if we choose the memory states of the elements of
$S_{2}$ such that the set $S_{1}\cup S_{2}$ does not create a catastrophic
cycle, we can ensure that any encoders performing the transformation with the
added rows will be non-catastrophic. In our running example from
(\ref{eq:second-running-ex-trans}) and (\ref{eq:runn-ex-partial-null}), we
just add two new rows (the rows after the line) as follows:%

\begin{equation}%
\begin{tabular}
[c]{c|c|c}%
Mem. & \multicolumn{1}{|c|}{Anc.} & Info.\\\hline\hline
$IIIIZI$ & $ZI$ & $II$\\
$IIIIIZ$ & $IZ$ & $II$\\\hline
$IIIIII$ & $II$ & $XI$\\
$IIIIII$ & $II$ & $IX$%
\end{tabular}
\rightarrow%
\begin{tabular}
[c]{c|c}%
\multicolumn{1}{c|}{Phys.} & Mem.\\\hline\hline
$IIII$ & $ZIIIII$\\
$IIII$ & $IZIIII$\\\hline
$IIII$ & $IIIIZI$\\
$IIII$ & $IIIIIZ$%
\end{tabular}
\ . \label{eq:non-cat-added-row}%
\end{equation}
All combinations of the entries in~(\ref{eq:non-cat-added-row}) are as
follows:
\begin{equation}%
\begin{tabular}
[c]{c|c|c}%
Mem. & \multicolumn{1}{|c|}{Anc.} & Info.\\\hline\hline
$IIIIZI$ & $ZI$ & $II$\\
$IIIIIZ$ & $IZ$ & $II$\\
$IIIIII$ & $II$ & $XI$\\
$IIIIII$ & $II$ & $IX$\\
$IIIIZZ$ & $ZZ$ & $II$\\
$IIIIZI$ & $ZI$ & $XI$\\
$IIIIZI$ & $ZI$ & $IX$\\
$IIIIIZ$ & $IZ$ & $XI$\\
$IIIIIZ$ & $IZ$ & $IX$\\
$IIIIII$ & $II$ & $XX$\\
$IIIIZZ$ & $ZZ$ & $XI$\\
$IIIIZZ$ & $ZZ$ & $IX$\\
$IIIIZI$ & $ZI$ & $XX$\\
$IIIIIZ$ & $IZ$ & $XX$\\
$IIIIZZ$ & $ZZ$ & $XX$%
\end{tabular}
\rightarrow%
\begin{tabular}
[c]{c|c}%
\multicolumn{1}{c|}{Phys.} & Mem.\\\hline\hline
$IIII$ & $ZIIIII$\\
$IIII$ & $IZIIII$\\
$IIII$ & $IIIIZI$\\
$IIII$ & $IIIIIZ$\\
$IIII$ & $ZZIIII$\\
$IIII$ & $ZIIIZI$\\
$IIII$ & $ZIIIIZ$\\
$IIII$ & $IZIIZI$\\
$IIII$ & $IZIIIZ$\\
$IIII$ & $IIIIZZ$\\
$IIII$ & $ZZIIZI$\\
$IIII$ & $ZZIIIZ$\\
$IIII$ & $ZIIIZZ$\\
$IIII$ & $IZIIZZ$\\
$IIII$ & $ZZIIZZ$%
\end{tabular}
. \label{eq:non-cat-added-row-comb}%
\end{equation}
By inspecting the rows in~(\ref{eq:non-cat-added-row-comb}) it is clear that
there is no catastrophic cycle.

Theorem~\ref{thm:empty-cat} below generalizes the technique from the above
example to give a straightforward way for adding rows when $S_{1}$ is an empty set.

\begin{theorem}
\label{thm:empty-cat} Suppose the memory commutativity matrix of a given set
of stabilizer generators is not full rank, and suppose further that the set
$S_{1}$ corresponding to the transformation is an empty set. Then adding
rows in the following form to the transformation in~(\ref{eq:general-encoder})
ensures that any encoder implementing the transformation is non-catastrophic:%
\begin{equation}%
\begin{tabular}
[c]{c|c|c}%
Mem. & Anc. & Info.\\\hline\hline
$I^{\otimes m}$ & $I^{\otimes(n-k)}$ & $X_{1}$\\
$\vdots$ & $\vdots$ & $\vdots$\\
$I^{\otimes m}$ & $I^{\otimes(n-k)}$ & $X_{a}$%
\end{tabular}
\rightarrow%
\begin{tabular}
[c]{c|c}%
Phys. & Mem.\\\hline\hline
$I^{\otimes n}$ & $M_{1}$\\
$\vdots$ & $\vdots$\\
$I^{\otimes n}$ & $M_{a}$%
\end{tabular}
, \label{add-row}%
\end{equation}
where $X_{i}$ denotes the Pauli $X$ operator acting on the $i^{\text{th}}$
information qubit and the operators $M_{1},\ldots,M_{a}$ form a complete basis
for the set $C$.
\end{theorem}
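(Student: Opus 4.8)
The plan is to argue by contradiction in the spirit of the proof of Theorem~\ref{thm:min-cat}, with the rows in~(\ref{add-row}) supplying the extra constraints that a non-full-rank memory commutativity matrix fails to supply on its own. So suppose the augmented encoder is catastrophic: by Definition~\ref{def:catastrophic} there is a cycle through memory states $m_1,\ldots,m_p$ of the form in~(\ref{catastrophic1}), with every physical label equal to the identity and at least one logical label $l_i\neq I^{\otimes k}$.

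First I would note that the commutativity-propagation argument from the proof of Theorem~\ref{thm:min-cat} goes through verbatim here, since it invokes only the original rows of~(\ref{eq:general-encoder}), which are untouched by adding the rows in~(\ref{add-row}). Each cycle-output row $(I^{\otimes n},m_j)$ trivially commutes with the last row $(h_{i,l_i},I^{\otimes m})$ of every block on the RHS of~(\ref{eq:general-encoder}); consistency of the encoder then forces each $m_j$ to commute with $g_{i,l_i-1}$, and propagating up each block in turn forces each $m_j$ to commute with every memory operator $g_{i,j'}$. Hence every $m_j$ lies in the set $C$.

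Next I would bring in the added rows. In the Heisenberg picture, the relations in~(\ref{add-row}) say exactly that the encoder $U$ conjugates the information-register operator $X_i$ to the memory-output operator $M_i$ (acting trivially on the physical register) for $i=1,\ldots,a$. Since $M_1,\ldots,M_a$ form a basis of $C$, for any $M'\in C$ we may write $M'=\prod_i M_i^{e_i}$ up to phase, so $U^{-1}$ carries the output $(I^{\otimes n},M')$ to the input $(I^{\otimes m},I^{\otimes(n-k)},\prod_i X_i^{e_i})$ up to phase. Applying this to each cycle edge $(m_i,s_i,l_i)\to(I^{\otimes n},m_{i+1})$, and using $m_{i+1}\in C$ together with the injectivity of Pauli conjugation by the Clifford $U$, I would read off register-by-register that $m_i=I^{\otimes m}$, $s_i=I^{\otimes(n-k)}$, and $l_i=\prod_j X_j^{e_j}$ for some exponents. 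In particular every memory state in the cycle is the identity, so the cycle collapses to edges $(I^{\otimes m},I^{\otimes(n-k)},l_i)\to(I^{\otimes n},I^{\otimes m})$; comparing with the fact that $U$ sends the all-identity input to the all-identity output, injectivity forces $l_i=I^{\otimes k}$ for every $i$, contradicting the existence of a non-trivial logical label. Therefore no catastrophic cycle exists and the augmented encoder is non-catastrophic.

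The step I expect to be the crux is the register bookkeeping in the third paragraph: one must be careful that~(\ref{add-row}) pins $U$ down on precisely the information-register $X$-operators needed, that the decomposition $M'=\prod_i M_i^{e_i}$ is legitimate because $M_1,\ldots,M_a$ are a basis of $C$ (this is where the hypothesis that the $M_i$ form a complete basis for $C$ is used, and where one tacitly needs $a\le k$ so that $X_1,\ldots,X_a$ live on distinct information qubits), and that conjugation by $U$ being a bijection on Pauli operators lets one separate the memory, ancilla, and information components of the input. A secondary point worth a sentence is that the augmented transformation is itself consistent --- the $M_i$ commute pairwise (the set $C$ is abelian for the minimal-memory encoders produced after the shortening step), matching the pairwise-commuting $X_i$, and each $X_i$ commutes with the ancilla $Z$-operators and with the memory inputs $g_{\cdot,\cdot}$, matching $M_i\in C$ --- so that ``any encoder implementing the transformation'' is a meaningful hypothesis.
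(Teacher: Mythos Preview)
Your proof is correct and follows essentially the same route as the paper's: both confine every cycle memory state to $C$ via the commutativity-propagation argument of Theorem~\ref{thm:min-cat}, then use the added rows together with bijectivity of Clifford conjugation to force every input memory state in the cycle to be $I^{\otimes m}$, collapsing the cycle to the trivial self-loop. Your write-up is more explicit than the paper's terse version (which leans on the surrounding discussion of $S_1$ and $S_2$ rather than spelling out the $U^{-1}$ step), and your closing remark on consistency of the augmented transformation is a useful sanity check the paper does not include.
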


\begin{proof}
Suppose for a contradiction that the entries in~(\ref{add-row}) create a
catastrophic cycle. Since all input memory operators in~(\ref{add-row}) are
equal to the $m$-qubit identity operator, the output memory operator of the
last row in a catastrophic cycle in~(\ref{catastrophic1}) should be equal to the
identity as well (so that the sequence of memory states forms a cycle). This
implies that the last row of the catastrophic cycle is as follows:%
\[%
\begin{tabular}
[c]{c|c|c}%
Mem. & Anc. & Info.\\\hline\hline
$m_{p}$ & $S^{z}$ & $L$%
\end{tabular}
\rightarrow%
\begin{tabular}
[c]{c|c}%
Phys. & Mem.\\\hline\hline
$I^{\otimes{n}}$ & $I^{\otimes k}$%
\end{tabular}
.
\]
So $m_{p}$ and consequently $m_{p-1},...,m_{1}$ are all equal to the $m$-qubit
identity operator. Thus all of the entries in~(\ref{catastrophic1}) are really
just cycles of the following form:%
\[%
\begin{tabular}
[c]{c|c|c}%
Mem. & Anc. & Info.\\\hline\hline
$I^{\otimes m}$ & $s_{1}$ & $l_{1}$\\
$I^{\otimes m}$ & $s_{2}$ & $l_{2}$\\
$\vdots$ & $\vdots$ & $\vdots$\\
$I^{\otimes m}$ & $s_{p}$ & $l_{p}$%
\end{tabular}
\rightarrow%
\begin{tabular}
[c]{c|c}%
Phys. & Mem.\\\hline\hline
$I^{\otimes n}$ & $I^{\otimes m}$\\
$I^{\otimes n}$ & $I^{\otimes m}$\\
$\vdots$ & $\vdots$\\
$I^{\otimes n}$ & $I^{\otimes m}$%
\end{tabular}
.
\]
The above input-output relations imply that $s_{1},...,s_{p}$ and
$l_{1},\ldots,l_{p}$ are identity operators (otherwise, it would not be
possible to effect the above transformation). Thus, the only cycle of
zero-physical weight is the self-loop at the identity memory state with zero
logical weight, which implies there is no catastrophic cycle.
\end{proof}

\section{Non-recursiveness}

\label{sec:non-rec}

In this section, we demonstrate that the encoders from both Theorems~\ref{thm:min-cat}
and \ref{thm:empty-cat} are non-recursive. Recursiveness or lack thereof is a fundamental
property of a quantum convolutional encoder as demonstrated in Ref.~\cite{PTO09}. In
Ref.~\cite{PTO09}, Poulin \textit{et al}.~proved that any non-catastrophic quantum
convolutional encoder is already non-recursive. Note that this situation is
much different from classical convolutional encoders for which these two
properties are not directly linked. In light of the results of Poulin
\textit{et al}, it follows that our encoders from Theorems~\ref{thm:min-cat} and \ref{thm:empty-cat} are
non-recursive because they are already non-catastrophic. Nevertheless, we
prove below that the encoders are non-recursive because our proof technique is arguably
much simpler than the proof of Theorem~1 from Ref.~\cite{PTO09}. Though, before proving
these theorems, we briefly review the definition of recursiveness.

\begin{definition}
[Recursive encoder]\label{def:recursive}An admissable path is a path in the
state diagram for which its first edge is not part of a zero physical-weight
cycle. Consider any vertex belonging to a zero physical-weight loop and any
admissable path beginning at this vertex that also has logical weight one. The
encoder is recursive if all such paths do not contain a zero physical-weight loop.
\end{definition}

We can gain some intuition behind the above definition by recalling the
definition of a recursive classical convolutional encoder. In the classical
case, an encoder is recursive if it has an infinite impulse response---that
is, if it outputs an infinite-weight, periodic sequence in response to an
input consisting of a single \textquotedblleft one\textquotedblright\ followed
by an infinite number of \textquotedblleft zeros.\textquotedblright\
  Definition~\ref{def:recursive} above for the quantum case ensures that the
response to a single Pauli operator (one of $\{X,Y,Z\}$) at a single logical
input along with the identity operator at all other logical inputs leads to a
periodic output sequence of Pauli operators with infinite weight. Though, the
definition above ensures that this is not only the case for the above sequence
but also for one in which the ancilla qubit inputs can be chosen arbitrarily
from $\left\{  I,Z\right\}  $. Thus, it is a much more stringent condition for
a quantum convolutional encoder to be recursive.

We are now in a position to prove the main theorem of this section.

\begin{theorem}
The encoders from Theorems~\ref{thm:min-cat} and \ref{thm:empty-cat} are non-recursive in addition to being non-catastrophic.
\end{theorem}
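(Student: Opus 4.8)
The plan is to exploit what the proofs of Theorems~\ref{thm:min-cat} and~\ref{thm:empty-cat} have already established: the state diagram of such an encoder contains exactly one zero-physical-weight cycle, namely the self-loop at the all-identity memory state $I^{\otimes m}$, and that self-loop has zero logical weight. Thus $I^{\otimes m}$ is the only vertex lying on a zero-physical-weight loop, and by Definition~\ref{def:recursive} it suffices to produce a single admissible path that begins at $I^{\otimes m}$, has total logical weight one, and eventually returns to $I^{\otimes m}$; appending the self-loop there then exhibits a zero-physical-weight loop contained in the path, contradicting recursiveness.

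First I would record the structural fact that under identity inputs on every ancilla and information qubit the memory register evolves by the linear ``shift'' map $T$ fixed by $T(g_{i,j})=g_{i,j+1}$, with the convention $g_{i,l_{i}}:=I^{\otimes m}$; this is read directly off the rows of~(\ref{eq:general-encoder}). Because the memory operators $g_{i,j}$ supplied by the construction in Theorem~\ref{thm:min-mem} are linearly independent (they are $G^{-1}$ applied to an independent set of standard generators), they form a basis of the subspace $V$ they span, so $T$ is a well-defined linear operator on $V$ and is nilpotent, with $T^{L}=0$ for $L:=\max_{i}(l_{i}-1)$. Next I would feed a single Pauli $X_{1}$ into the first information qubit at frame $0$ (with identity on the memory and on all other ancilla and information qubits), followed by identity inputs everywhere thereafter. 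This path has logical weight one, and its first edge cannot lie on a zero-physical-weight cycle: any such cycle would carry nonzero logical weight, and we know there is no such cycle; hence the path is admissible. Writing $M_{1}$ for the memory state after frame $0$, it then only remains to iterate $T$: once I know $M_{1}\in V$, the memory equals $T^{s}(M_{1})$ after $s$ further identity frames, so it reaches $I^{\otimes m}$ within $L$ frames (the intermediate physical outputs being irrelevant), and the path returns to $I^{\otimes m}$ as required.

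The step I expect to demand the most care is verifying $M_{1}\in V$. In the full-rank setting of Theorem~\ref{thm:min-cat} this is automatic, since there the $g_{i,j}$ span the entire $m$-qubit Pauli group, so $V$ is everything. For Theorem~\ref{thm:empty-cat} my argument would be: the row added for $X_{1}$ in~(\ref{add-row}) forces $M_{1}$ into the set $C$ of $m$-qubit Paulis commuting with every $g_{i,j}$, and $C=V^{\perp}$ is, under the inverse of the symplectic Gram-Schmidt transformation $G$ used in Theorem~\ref{thm:min-mem}, the $G^{-1}$-image of the span of $Z_{c+1},\ldots,Z_{c+d}$, hence a subspace of the span of the $g_{i,j}$; so $C\subseteq V$ and $M_{1}\in V$. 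I would also handle the degenerate possibility $M_{1}=I^{\otimes m}$ (then the frame-$0$ physical output is nontrivial, since a Clifford cannot send $X_{1}$ to the identity, so the first edge has nonzero physical weight and admissibility is untouched), and state explicitly that the zero-physical-weight loop required by Definition~\ref{def:recursive} is furnished by the self-loop at $I^{\otimes m}$ once the path arrives there.
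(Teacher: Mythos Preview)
Your proposal is correct and follows essentially the same strategy as the paper's proof: exploit the fact (established in the proofs of Theorems~\ref{thm:min-cat} and~\ref{thm:empty-cat}) that the unique zero-physical-weight cycle is the identity self-loop, fire a single $X_{1}$ on an information qubit to produce a logical-weight-one admissible edge out of $I^{\otimes m}$, and then argue that the resulting memory state can be driven back to $I^{\otimes m}$ under all-identity inputs because it lies in the span $V$ of the $g_{i,j}$. Your packaging via the nilpotent shift map $T$ on $V$ is just a cleaner formalization of the paper's ``combination of paths'' argument, and your justification that $C\subseteq V$ (identifying $C$ with the isotropic part $\langle Z_{c+1},\ldots,Z_{c+d}\rangle$ under the Gram--Schmidt change of basis) makes precise what the paper states more loosely when it says one can ``combine some of these commuting states together to realize the memory state $M_{i}$.'' You are also slightly more careful than the paper in explicitly checking admissibility of the first edge and in treating the degenerate case $M_{1}=I^{\otimes m}$.
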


\begin{IEEEproof}
In order to prove that an encoder is non-recursive, we just need to find a
single logical-weight-one admissable path beginning and ending in the identity
memory state.
First consider that every memory state in (\ref{eq:general-encoder}) already has a zero-logical-weight path back to the
identity memory state. (For example, for the entry $g_{1,1}$ in the second
row, one would just need to input $I^{\otimes k}$ and $I^{\otimes n-k}$ for the logical inputs and
ancillas, which in turn leads to state $g_{1,2}$. Continuing in this fashion
leads to the state $g_{1,l_{1}-1}$, which finally leads to the identity memory
state.) 

Now consider the encoders from Theorem~\ref{thm:min-cat} and
consider further the following transformation:
\[I^{\otimes m}\otimes I^{\otimes n-k}\otimes X_{i}\rightarrow h \otimes g,\]
where $h$ is some arbitrary $n$-qubit Pauli operator and $g$ is some $m$-qubit Pauli operator.
From the fact that the memory commutativity matrix is full rank,
we know that it is possible to construct the memory state $g$\ by combining
the memory states from (\ref{eq:general-encoder}) (say, for example, $g=g_{i_{1},j_{1}}\cdot
g_{i_{2},j_{2}}\cdot\cdots\cdot g_{i_{m},j_{m}}$). Furthermore, by inputting
$I^{\otimes k}$ and $I^{\otimes n-k}$ for all subsequent logical and ancilla inputs, we can
construct a path that is a combination of the paths taken by each of
$g_{i_{1},j_{1}}$, $g_{i_{2},j_{2}}$, \ldots, $g_{i_{m},j_{m}}$. Since all of
these paths end up in the identity memory state, it follows that the
combination of the paths also ends up in the identity memory state.
So there is a logical-weight-one admissable path
beginning and ending in the identity memory state.
This concludes the proof for encoders from Theorem~\ref{thm:min-cat}.

The proof for the encoders from Theorem~\ref{thm:empty-cat} is similar to the above proof.
First, let us consider the
memory states that are part of the set $C$. The rows in (\ref{add-row})\ added to the
transformation are all weight-one logical edges from the identity memory state
to a state in $C$ because they have the following form:
 \[I^{\otimes m}\otimes I^{\otimes n-k}\otimes X_{i}\rightarrow I^{\otimes n}\otimes M_{i}.\]
Since all of the
memory states in (\ref{eq:general-encoder}) commute with the elements of $C$, we can combine some of these
commuting states together to realize the memory state $M_{i}$. By the same
argument as before, inputting $I^{\otimes k}$ and $I^{\otimes n-k}$ for all subsequent logical and
ancilla inputs eventually leads back to the identity memory state because all
of the individual paths lead back as seen in (\ref{eq:general-encoder}). This concludes the proof for
encoders from Theorem~\ref{thm:empty-cat}.
\end{IEEEproof}

\section{Conclusion}

\label{sec:conclusion}We have presented an algorithm to find a
minimal-memory, non-catastrophic, polynomial-depth encoder for a given set of
stabilizer generators. Our algorithm first determines a transformation that
the encoder should perform, without specifying the Pauli operators acting on
the memory qubits. It then finds a set of Pauli operators which act on a
minimal number of memory qubits and are consistent with the input-output
commutation relations of the encoder. The number of minimal memory qubits
depends on the dimension and the rank of the \textquotedblleft memory
commutativity matrix,\textquotedblright\ which details the commutativity
relations between the memory operators. Once the memory operators are
determined, there is a polynomial-time algorithm to find the encoder which
performs the transformation. We have also proved that any minimal-memory
encoder with a full-rank memory commutativity matrix is non-catastrophic.
However, when the memory commutativity matrix is not full-rank, we should add
some rows to the transformation to ensure that the encoder is
non-catastrophic. Theorem~\ref{thm:empty-cat} includes an explicit way of
adding rows to transformations that have an empty partial null space. We proved
that the encoders from Theorems~\ref{thm:min-cat} and \ref{thm:empty-cat} are non-recursive in addition to being non-catastrophic. Finally, the appendix contains details of our algorithm for many examples
of quantum convolutional codes from Refs.~\cite{ieee2007forney,GR07}.

Some open questions still remain.
First, we are assuming a particular form for our encoders, that they have to take
the unencoded Pauli $Z$ operators to the encoded stabilizer operators.
Although this form for the encoder is natural,
it might be the case that allowing for a different form could lead to encoders
with smaller memory requirements.
Another open problem is to find an explicit way of adding rows to any
transformation without a full-rank memory commutativity matrix in order to
ensure that the encoder is non-catastrophic. It is also an open problem to
find minimal-memory, non-catastrophic encoders for subsystem convolutional
codes~\cite{poulin:230504,WB08a}, entanglement-assisted quantum convolutional
codes~\cite{WB07}, and convolutional codes that send both classical and
quantum information~\cite{WB08a}.

\section*{Acknowledgements}

The authors are grateful to Markus Grassl, Johannes G\"{u}tschow, David
Poulin, and Martin R\"{o}tteler for useful discussions. MH and SHK acknowledge
support from the Iranian Telecommunication Research Center (ITRC). MMW
acknowledges support from the MDEIE (Quebec) PSR-SIIRI international
collaboration grant.

\bibliographystyle{IEEEtran}
\bibliography{Ref}

\begin{thebibliography}{10}
\providecommand{\url}[1]{#1}
\csname url@samestyle\endcsname
\providecommand{\newblock}{\relax}
\providecommand{\bibinfo}[2]{#2}
\providecommand{\BIBentrySTDinterwordspacing}{\spaceskip=0pt\relax}
\providecommand{\BIBentryALTinterwordstretchfactor}{4}
\providecommand{\BIBentryALTinterwordspacing}{\spaceskip=\fontdimen2\font plus
\BIBentryALTinterwordstretchfactor\fontdimen3\font minus
  \fontdimen4\font\relax}
\providecommand{\BIBforeignlanguage}[2]{{%
\expandafter\ifx\csname l@#1\endcsname\relax
\typeout{** WARNING: IEEEtran.bst: No hyphenation pattern has been}%
\typeout{** loaded for the language `#1'. Using the pattern for}%
\typeout{** the default language instead.}%
\else
\language=\csname l@#1\endcsname
\fi
#2}}
\providecommand{\BIBdecl}{\relax}
\BIBdecl

\bibitem{PhysRevA.52.R2493}
P.~W. Shor, ``Scheme for reducing decoherence in quantum computer memory,''
  \emph{Physical Review A}, vol.~52, no.~4, pp. R2493--R2496, October 1995.

\bibitem{thesis97gottesman}
D.~Gottesman, ``Stabilizer codes and quantum error correction,'' Ph.D.
  dissertation, California Institute of Technology, 1997.

\bibitem{PhysRevLett.77.793}
A.~M. Steane, ``Error correcting codes in quantum theory,'' \emph{Physical
  Review Letters}, vol.~77, no.~5, pp. 793--797, July 1996.

\bibitem{PhysRevLett.91.177902}
H.~Ollivier and J.-P. Tillich, ``Description of a quantum convolutional code,''
  \emph{Physical Review Letters}, vol.~91, no.~17, p. 177902, Oct. 2003.

\bibitem{ieee2007forney}
G.~D. Forney, M.~Grassl, and S.~Guha, ``Convolutional and tail-biting quantum
  error-correcting codes,'' \emph{IEEE Transactions on Information Theory},
  vol.~53, pp. 865--880, 2007.

\bibitem{GR06}
M.~Grassl and M.~R\"{o}tteler, ``Quantum convolutional codes: Encoders and
  structural properties,'' in \emph{Proceedings of the Forty-Fourth Annual
  Allerton Conference}, Allerton House, UIUC, Illinois, USA, September 2006,
  pp. 510--519.

\bibitem{PTO09}
D.~Poulin, J.-P. Tillich, and H.~Ollivier, ``Quantum serial turbo-codes,''
  \emph{IEEE Transactions on Information Theory}, vol.~55, no.~6, pp.
  2776--2798, June 2009.

\bibitem{cwit2007aly}
S.~A. Aly, M.~Grassl, A.~Klappenecker, M.~Roetteler, and P.~K. Sarvepalli,
  ``Quantum convolutional {BCH} codes,'' in \emph{Proceedings of the 10th
  Canadian Workshop on Information Theory}, 2007, pp. 180--183,
  arXiv:quant-ph/0703113.

\bibitem{arx2007aly}
S.~A. Aly, A.~Klappenecker, and P.~K. Sarvepalli, ``Quantum convolutional codes
  derived from {Reed-Solomon} and {Reed-Muller} codes,'' in \emph{Proceedings
  of the 2007 International Symposium on Information Theory}, Nice, France,
  June 2007, pp. 821--825, arXiv:quant-ph/0701037.

\bibitem{GR06b}
M.~Grassl and M.~R\"{o}tteler, ``Noncatastrophic encoders and encoder inverses
  for quantum convolutional codes,'' in \emph{Proceedings of the 2006 IEEE
  International Symposium on Information Theory}, Seattle, Washington, USA,
  July 2006, pp. 1109--1113, arXiv:quant-ph/0602129.

\bibitem{GR07}
------, ``Constructions of quantum convolutional codes,'' in \emph{Proceedings
  of the 2007 IEEE International Symposium on Information Theory}, Nice,
  France, June 2007, pp. 816--820.

\bibitem{WB07}
M.~M. Wilde and T.~A. Brun, ``Entanglement-assisted quantum convolutional
  coding,'' \emph{Physical Review A}, vol.~81, no.~4, p. 042333, April 2010.

\bibitem{WB08}
------, ``Quantum convolutional coding with shared entanglement: General
  structure,'' \emph{Quantum Information Processing}, vol.~9, no.~5, pp.
  509--540, October 2010, arXiv:0807.3803.

\bibitem{WB09}
------, ``Extra shared entanglement reduces memory demand in quantum
  convolutional coding,'' \emph{Physical Review A}, vol.~79, no.~3, p. 032313,
  March 2009.

\bibitem{WB08a}
------, ``Unified quantum convolutional coding,'' in \emph{Proceedings of the
  IEEE International Symposium on Information Theory}, Toronto, Ontario,
  Canada, July 2008, arXiv:0801.0821.

\bibitem{WH10b}
M.~M. Wilde and M.-H. Hsieh, ``Entanglement boosts quantum turbo codes,''
  \emph{Proceedings of the 2011 IEEE International Symposium on Information
  Theory}, August 2011, {Saint-Petersburg, Russia.} arXiv:1010.1256.

\bibitem{F70}
G.~D. Forney, ``Convolutional codes {I}: Algebraic structure,'' \emph{IEEE
  Transactions on Information Theory}, vol.~16, no.~6, pp. 720--738, November
  1970.

\bibitem{book1999conv}
R.~Johannesson and K.~S. Zigangirov, \emph{Fundamentals of Convolutional
  Coding}.\hskip 1em plus 0.5em minus 0.4em\relax Wiley-IEEE Press, 1999.

\bibitem{HHW10}
M.~Houshmand, S.~Hosseini-Khayat, and M.~M. Wilde, ``Minimal memory
  requirements for pearl necklace encoders of quantum convolutional codes,''
  \emph{IEEE Transactions on Computers (electronic before print)}, November
  2010, arXiv:1004.5179.

\bibitem{HH11}
M.~Houshmand and S.~Hosseini-Khayat, ``Minimal-memory realization of
  pearl-necklace encoders of general quantum convolutional codes,''
  \emph{Physical Review A}, vol.~83, no.~2, p. 022308, February 2011.

\bibitem{WHK11}
M.~M. Wilde, M.~Houshmand, and S.~Hosseini-Khayat, ``Examples of
  minimal-memory, non-catastrophic quantum convolutional encoders,'' in
  \emph{Proceedings of the 2011 IEEE International Symposium on Information
  Theory}, Saint-Petersburg, Russia, August 2011, arXiv:1011.5535.

\bibitem{arx2008wildeOEA}
M.~M. Wilde and T.~A. Brun, ``Optimal entanglement formulas for
  entanglement-assisted quantum coding,'' \emph{Physical Review A}, vol.~77, p.
  064302, 2008.

\bibitem{PhysRevA.79.062322}
M.~M. Wilde, ``Logical operators of quantum codes,'' \emph{Physical Review A},
  vol.~79, no.~6, p. 062322, June 2009.

\bibitem{BDH06}
T.~A. Brun, I.~Devetak, and M.-H. Hsieh, ``Correcting quantum errors with
  entanglement,'' \emph{Science}, vol. 314, no. 5798, pp. 436--439, October
  2006.

\bibitem{V71}
A.~J. Viterbi, ``Convolutional codes and their performance in communication
  systems,'' \emph{IEEE Transactions on Communication Technology}, vol.~19,
  no.~5, pp. 751--772, October 1971.

\bibitem{McE02}
R.~J. McEliece, \emph{The Theory of Information and Coding}.\hskip 1em plus
  0.5em minus 0.4em\relax Cambridge University Press, 2002.

\bibitem{BFG06}
S.~Bravyi, D.~Fattal, and D.~Gottesman, ``{GHZ} extraction yield for
  multipartite stabilizer states,'' \emph{Journal of Mathematical Physics},
  vol.~47, no.~6, p. 062106, 2006.

\bibitem{arxiv2004olliv}
H.~Ollivier and J.-P. Tillich, ``Quantum convolutional codes: Fundamentals,''
  2004, arXiv:quant-ph/0401134.

\bibitem{G10}
J.~G\"{u}tschow, ``Representation of convolutional stabilizer codes as
  {Clifford} memory channels,'' Universit\"{a}t Hannover, Tech. Rep., June
  2010, unpublished.

\bibitem{poulin:230504}
D.~Poulin, ``Stabilizer formalism for operator quantum error correction,''
  \emph{Physical Review Letters}, vol.~95, no.~23, p. 230504, 2005.

\end{thebibliography}
\onecolumn
\pagebreak

\appendix

\section{Appendix}

This section includes many examples of quantum convolutional codes from
Refs.~\cite{ieee2007forney,GR07} with non-full rank memory commutativity
matrix. For all of them we state how to add rows to be confident that the
minimal-memory encoder implementing the transformation is non-catastrophic.
\label{sec:many-examples}

\subsection{First Example}

The second example in Table I in \cite{ieee2007forney} has the following two
stabilizer generators:
\[%
\begin{array}
[c]{cccc}%
X & X & X & X\\
Z & Z & Z & Z
\end{array}
\left\vert
\begin{array}
[c]{cccc}%
I & I & X & X\\
I & I & Z & Z
\end{array}
\right\vert \left.
\begin{array}
[c]{cccc}%
I & X & I & X\\
I & Z & I & Z
\end{array}
\right\vert
\begin{array}
[c]{cccc}%
I & I & X & X\\
I & I & Z & Z
\end{array}
.
\]
So the encoder should act as follows:%
\begin{equation}%
\begin{tabular}
[c]{c|cc|cc}%
Mem. & \multicolumn{2}{|c|}{Anc.} & \multicolumn{2}{|c}{Info.}\\\hline\hline
$I^{\otimes{m}}$ & $Z$ & $I$ & $I$ & $I$\\
$g_{1,1}$ & $I$ & $I$ & $I$ & $I$\\
$g_{1,2}$ & $I$ & $I$ & $I$ & $I$\\
$g_{1,3}$ & $I$ & $I$ & $I$ & $I$\\\hline
$I^{\otimes{m}}$ & $I$ & $Z$ & $I$ & $I$\\
$g_{2,1}$ & $I$ & $I$ & $I$ & $I$\\
$g_{2,2}$ & $I$ & $I$ & $I$ & $I$\\
$g_{2,3}$ & $I$ & $I$ & $I$ & $I$%
\end{tabular}
\ \ \ \ \ \ \ \rightarrow\ \ \ \ \
\begin{tabular}
[c]{cccc|c}%
\multicolumn{4}{c|}{Phys.} & Mem.\\\hline\hline
$X$ & $X$ & $X$ & $X$ & $g_{1,1}$\\
$I$ & $I$ & $X$ & $X$ & $g_{1,2}$\\
$I$ & $X$ & $I$ & $X$ & $g_{1,3}$\\
$I$ & $I$ & $X$ & $X$ & $I^{\otimes{m}}$\\\hline
$Z$ & $Z$ & $Z$ & $Z$ & $g_{2,1}$\\
$I$ & $I$ & $Z$ & $Z$ & $g_{2,2}$\\
$I$ & $Z$ & $I$ & $Z$ & $g_{2,3}$\\
$I$ & $I$ & $Z$ & $Z$ & $I^{\otimes{m}}$%
\end{tabular}
\ .
\end{equation}

The commutativity matrix corresponding the above transformation is:%
\[
\Omega=%
\begin{bmatrix}
0 & 0 & 0 & 0 & 0 & 0\\
0 & 0 & 0 & 0 & 0 & 1\\
0 & 0 & 0 & 0 & 1 & 0\\
0 & 0 & 0 & 0 & 0 & 0\\
0 & 0 & 1 & 0 & 0 & 0\\
0 & 1 & 0 & 0 & 0 & 0
\end{bmatrix}
.
\]
Dimension of $\Omega$ is six and its rank is equal to four, so based on
Theorem~\ref{thm:min-mem} the encoder requires at least four memory qubits. A
set of generators with minimal amount of required memory is:%
\begin{align*}
g_{1,1}  &  =ZIII,\ \ \ \ g_{1,2}=IIZI,\ \ \ \ g_{1,3}=IIIX,\\
g_{2,1}  &  =IZII,\ \ \ \ g_{2,2}=IIIZ,\ \ \ \ g_{2,3}=IIXI.
\end{align*}
Thus, the minimal-memory encoder implements the following transformation:
\[%
\begin{tabular}
[c]{cccc|cc|cc}%
\multicolumn{4}{c}{Mem.} & \multicolumn{2}{|c|}{Anc.} &
\multicolumn{2}{|c}{Info.}\\\hline\hline
$I$ & $I$ & $I$ & $I$ & $Z$ & $I$ & $I$ & $I$\\
$Z$ & $I$ & $I$ & $I$ & $I$ & $I$ & $I$ & $I$\\
$I$ & $I$ & $Z$ & $I$ & $I$ & $I$ & $I$ & $I$\\
$I$ & $I$ & $I$ & $X$ & $I$ & $I$ & $I$ & $I$\\\hline
$I$ & $I$ & $I$ & $I$ & $I$ & $Z$ & $I$ & $I$\\
$I$ & $Z$ & $I$ & $I$ & $I$ & $I$ & $I$ & $I$\\
$I$ & $I$ & $I$ & $Z$ & $I$ & $I$ & $I$ & $I$\\
$I$ & $I$ & $X$ & $I$ & $I$ & $I$ & $I$ & $I$%
\end{tabular}
\ \ \ \ \ \ \ \ \ \rightarrow\ \ \ \ \
\begin{tabular}
[c]{cccc|cccc}%
\multicolumn{4}{c|}{Phys.} & \multicolumn{4}{|c}{Mem.}\\\hline\hline
$X$ & $X$ & $X$ & $X$ & $Z$ & $I$ & $I$ & $I$\\
$I$ & $I$ & $X$ & $X$ & $I$ & $I$ & $Z$ & $I$\\
$I$ & $X$ & $I$ & $X$ & $I$ & $I$ & $I$ & $X$\\
$I$ & $I$ & $X$ & $X$ & $I$ & $I$ & $I$ & $I$\\\hline
$Z$ & $Z$ & $Z$ & $Z$ & $I$ & $Z$ & $I$ & $I$\\
$I$ & $I$ & $Z$ & $Z$ & $I$ & $I$ & $I$ & $Z$\\
$I$ & $Z$ & $I$ & $Z$ & $I$ & $I$ & $X$ & $I$\\
$I$ & $I$ & $Z$ & $Z$ & $I$ & $I$ & $I$ & $I$%
\end{tabular}
\ .
\]
The memory operators which can be a part of catastrophic cycle are the members
of following set:
\[
C=\{Z_{1}^{e_{1}}Z_{2}^{e_{2}}:e_{1},e_{2}\in\{0,1\}\}.
\]

Based on Theorem~\ref{thm:min-mem}, since the set $S_{1}$ corresponding the
above transformation is an empty set, assigning any basis of set $C$ to output
memories of the rows of $S_{2}$ will make the encoder non-catastrophic. So any
encoder implementing the following transformation is non-catastrophic.
\[%
\begin{tabular}
[c]{cccc|cc|cc}%
\multicolumn{4}{c}{Mem.} & \multicolumn{2}{|c|}{Anc.} &
\multicolumn{2}{|c}{Info.}\\\hline\hline
$I$ & $I$ & $I$ & $I$ & $Z$ & $I$ & $I$ & $I$\\
$Z$ & $I$ & $I$ & $I$ & $I$ & $I$ & $I$ & $I$\\
$I$ & $I$ & $Z$ & $I$ & $I$ & $I$ & $I$ & $I$\\
$I$ & $I$ & $I$ & $X$ & $I$ & $I$ & $I$ & $I$\\\hline
$I$ & $I$ & $I$ & $I$ & $I$ & $Z$ & $I$ & $I$\\
$I$ & $Z$ & $I$ & $I$ & $I$ & $I$ & $I$ & $I$\\
$I$ & $I$ & $I$ & $Z$ & $I$ & $I$ & $I$ & $I$\\
$I$ & $I$ & $X$ & $I$ & $I$ & $I$ & $I$ & $I$\\\hline
$I$ & $I$ & $I$ & $I$ & $I$ & $I$ & $X$ & $I$\\
$I$ & $I$ & $I$ & $I$ & $I$ & $I$ & $I$ & $X$%
\end{tabular}
\ \ \ \ \ \ \ \ \ \rightarrow\ \ \ \ \
\begin{tabular}
[c]{cccc|cccc}%
\multicolumn{4}{c|}{Phys.} & \multicolumn{4}{|c}{Mem.}\\\hline\hline
$X$ & $X$ & $X$ & $X$ & $Z$ & $I$ & $I$ & $I$\\
$I$ & $I$ & $X$ & $X$ & $I$ & $I$ & $Z$ & $I$\\
$I$ & $X$ & $I$ & $X$ & $I$ & $I$ & $I$ & $X$\\
$I$ & $I$ & $X$ & $X$ & $I$ & $I$ & $I$ & $I$\\\hline
$Z$ & $Z$ & $Z$ & $Z$ & $I$ & $Z$ & $I$ & $I$\\
$I$ & $I$ & $Z$ & $Z$ & $I$ & $I$ & $I$ & $Z$\\
$I$ & $Z$ & $I$ & $Z$ & $I$ & $I$ & $X$ & $I$\\
$I$ & $I$ & $Z$ & $Z$ & $I$ & $I$ & $I$ & $I$\\\hline
$I$ & $I$ & $I$ & $I$ & $Z$ & $I$ & $I$ & $I$\\
$I$ & $I$ & $I$ & $I$ & $I$ & $Z$ & $I$ & $I$%
\end{tabular}
\ .
\]

\subsection{Second Example}

The third example in Table I in~\cite{ieee2007forney} has two following generators:%

\[%
\begin{array}
[c]{cccc}%
X & X & X & X\\
Z & Z & Z & Z
\end{array}
\left\vert
\begin{array}
[c]{cccc}%
X & X & I & I\\
Z & Z & I & I
\end{array}
\right\vert \left.
\begin{array}
[c]{cccc}%
I & X & I & X\\
I & Z & I & Z
\end{array}
\right\vert
\begin{array}
[c]{cccc}%
I & I & X & X\\
I & I & Z & Z
\end{array}
.
\]
So the encoding unitary should act as follows:%
\[%
\begin{tabular}
[c]{c|cc|cc}%
\multicolumn{1}{c}{Mem.} & \multicolumn{2}{|c|}{Anc.} &
\multicolumn{2}{|c}{Info.}\\\hline\hline
$I^{\otimes{m}}$ & $Z$ & $I$ & $I$ & $I$\\
$g_{1,1}$ & $I$ & $I$ & $I$ & $I$\\
$g_{1,2}$ & $I$ & $I$ & $I$ & $I$\\
$g_{1,3}$ & $I$ & $I$ & $I$ & $I$\\\hline
$I^{\otimes{m}}$ & $I$ & $Z$ & $I$ & $I$\\
$g_{2,1}$ & $I$ & $I$ & $I$ & $I$\\
$g_{2,2}$ & $I$ & $I$ & $I$ & $I$\\
$g_{2,3}$ & $I$ & $I$ & $I$ & $I$%
\end{tabular}
\ \ \ \ \ \ \rightarrow\ \ \ \ \
\begin{tabular}
[c]{cccc|c}%
\multicolumn{4}{c|}{Phys.} & \multicolumn{1}{|c}{Mem.}\\\hline\hline
$X$ & $X$ & $X$ & $X$ & $g_{1,1}$\\
$X$ & $X$ & $I$ & $I$ & $g_{1,2}$\\
$I$ & $X$ & $I$ & $X$ & $g_{1,3}$\\
$I$ & $I$ & $X$ & $X$ & $I^{\otimes{m}}$\\\hline
$Z$ & $Z$ & $Z$ & $Z$ & $g_{2,1}$\\
$X$ & $X$ & $I$ & $I$ & $g_{2,2}$\\
$I$ & $Z$ & $I$ & $Z$ & $g_{2,3}$\\
$I$ & $I$ & $Z$ & $Z$ & $I^{\otimes{m}}$%
\end{tabular}
.
\]

The commutativity matrix is:%
\[
\Omega=%
\begin{bmatrix}
0 & 0 & 0 & 0 & 0 & 0\\
0 & 0 & 0 & 0 & 0 & 1\\
0 & 0 & 0 & 0 & 1 & 0\\
0 & 0 & 0 & 0 & 0 & 0\\
0 & 0 & 1 & 0 & 0 & 0\\
0 & 1 & 0 & 0 & 0 & 0
\end{bmatrix}
,
\]
with dimension equal to six and rank equal to four. A set of generators with
minimal amount of required memory is:
\begin{align*}
g_{1,1}  &  =ZIII,\ \ \ \ g_{1,2}=IIZI,\ \ \ \ g_{1,3}=IIIX,\\
g_{2,1}  &  =IZII,\ \ \ \ g_{2,2}=IIIZ,\ \ \ \ g_{2,3}=IIXI.
\end{align*}

So the minimal-memory encoder performs the following transformation:%

\[%
\begin{tabular}
[c]{cccc|cc|cc}%
\multicolumn{4}{c}{Mem.} & \multicolumn{2}{|c|}{Anc.} &
\multicolumn{2}{|c}{Info.}\\\hline\hline
$I$ & $I$ & $I$ & $I$ & $Z$ & $I$ & $I$ & $I$\\
$Z$ & $I$ & $I$ & $I$ & $I$ & $I$ & $I$ & $I$\\
$I$ & $I$ & $Z$ & $I$ & $I$ & $I$ & $I$ & $I$\\
$I$ & $I$ & $I$ & $X$ & $I$ & $I$ & $I$ & $I$\\\hline
$I$ & $I$ & $I$ & $I$ & $I$ & $Z$ & $I$ & $I$\\
$I$ & $Z$ & $I$ & $I$ & $I$ & $I$ & $I$ & $I$\\
$I$ & $I$ & $I$ & $Z$ & $I$ & $I$ & $I$ & $I$\\
$I$ & $I$ & $X$ & $I$ & $I$ & $I$ & $I$ & $I$%
\end{tabular}
\ \ \ \ \ \ \rightarrow\ \ \ \ \
\begin{tabular}
[c]{cccc|cccc}%
\multicolumn{4}{c|}{Phys.} & \multicolumn{4}{|c}{Mem.}\\\hline\hline
$X$ & $X$ & $X$ & $X$ & $Z$ & $I$ & $I$ & $I$\\
$X$ & $X$ & $I$ & $I$ & $I$ & $I$ & $Z$ & $I$\\
$I$ & $X$ & $I$ & $X$ & $I$ & $I$ & $I$ & $X$\\
$I$ & $I$ & $X$ & $X$ & $I$ & $I$ & $I$ & $I$\\\hline
$Z$ & $Z$ & $Z$ & $Z$ & $I$ & $Z$ & $I$ & $I$\\
$Z$ & $Z$ & $I$ & $I$ & $I$ & $I$ & $I$ & $Z$\\
$I$ & $Z$ & $I$ & $Z$ & $I$ & $I$ & $X$ & $I$\\
$I$ & $I$ & $Z$ & $Z$ & $I$ & $I$ & $I$ & $I$%
\end{tabular}
.
\]

The memory operators which can be a part of catastrophic cycle are the members
of following set:
\[
C=\{Z_{1}^{e_{1}}Z_{2}^{e_{2}}:e_{1},e_{2}\in\{0,1\}\}.
\]

Based on Theorem~\ref{thm:empty-cat}, since the set $S_{1}$ corresponding the
above transformation is an empty set, assigning any basis of set $C$ to output
memories of the rows of $S_{2}$ will make the encoder non-catastrophic. So any
encoder corresponding the following transformation is non-catastrophic.%

\[%
\begin{tabular}
[c]{cccc|cc|cc}%
\multicolumn{4}{c}{Mem.} & \multicolumn{2}{|c|}{Anc.} &
\multicolumn{2}{|c}{Info.}\\\hline\hline
$I$ & $I$ & $I$ & $I$ & $Z$ & $I$ & $I$ & $I$\\
$Z$ & $I$ & $I$ & $I$ & $I$ & $I$ & $I$ & $I$\\
$I$ & $I$ & $Z$ & $I$ & $I$ & $I$ & $I$ & $I$\\
$I$ & $I$ & $I$ & $X$ & $I$ & $I$ & $I$ & $I$\\\hline
$I$ & $I$ & $I$ & $I$ & $I$ & $Z$ & $I$ & $I$\\
$I$ & $Z$ & $I$ & $I$ & $I$ & $I$ & $I$ & $I$\\
$I$ & $I$ & $I$ & $Z$ & $I$ & $I$ & $I$ & $I$\\
$I$ & $I$ & $X$ & $I$ & $I$ & $I$ & $I$ & $I$\\\hline
$I$ & $I$ & $I$ & $I$ & $I$ & $I$ & $X$ & $I$\\
$I$ & $I$ & $I$ & $I$ & $I$ & $I$ & $I$ & $X$%
\end{tabular}
\ \ \ \ \ \ \rightarrow\ \ \ \ \
\begin{tabular}
[c]{cccc|cccc}%
\multicolumn{4}{c|}{Phys.} & \multicolumn{4}{|c}{Mem.}\\\hline\hline
$X$ & $X$ & $X$ & $X$ & $Z$ & $I$ & $I$ & $I$\\
$X$ & $X$ & $I$ & $I$ & $I$ & $I$ & $Z$ & $I$\\
$I$ & $X$ & $I$ & $X$ & $I$ & $I$ & $I$ & $X$\\
$I$ & $I$ & $X$ & $X$ & $I$ & $I$ & $I$ & $I$\\\hline
$Z$ & $Z$ & $Z$ & $Z$ & $I$ & $Z$ & $I$ & $I$\\
$Z$ & $Z$ & $I$ & $I$ & $I$ & $I$ & $I$ & $Z$\\
$I$ & $Z$ & $I$ & $Z$ & $I$ & $I$ & $X$ & $I$\\
$I$ & $I$ & $Z$ & $Z$ & $I$ & $I$ & $I$ & $I$\\\hline
$I$ & $I$ & $I$ & $I$ & $Z$ & $I$ & $I$ & $I$\\
$I$ & $I$ & $I$ & $I$ & $I$ & $Z$ & $I$ & $I$%
\end{tabular}
.
\]

\subsection{Third Example}

The stabilizer generators for the fourth example in Table I in
\cite{ieee2007forney} are as follows:%
\[%
\begin{array}
[c]{ccccc}%
X & X & X & X & X\\
Z & Z & Z & Z & Z
\end{array}
\left\vert
\begin{array}
[c]{ccccc}%
I & I & X & X & I\\
I & I & Z & Z & I
\end{array}
\right\vert \left.
\begin{array}
[c]{ccccc}%
I & X & X & I & X\\
I & Z & Z & I & Z
\end{array}
\right\vert
\begin{array}
[c]{ccccc}%
I & I & I & X & X\\
I & I & I & Z & Z
\end{array}
.
\]
So the encoding unitary should act as follows:%
\[%
\begin{tabular}
[c]{c|cc|ccc}%
Mem. & \multicolumn{2}{|c|}{Anc.} & \multicolumn{3}{|c}{Info.}\\\hline\hline
$I^{\otimes m}$ & $Z$ & $I$ & $I$ & $I$ & $I$\\
$g_{1,1}$ & $I$ & $I$ & $I$ & $I$ & $I$\\
$g_{1,2}$ & $I$ & $I$ & $I$ & $I$ & $I$\\
$g_{1,3}$ & $I$ & $I$ & $I$ & $I$ & $I$\\\hline
$I^{\otimes m}$ & $I$ & $Z$ & $I$ & $I$ & $I$\\
$g_{2,1}$ & $I$ & $I$ & $I$ & $I$ & $I$\\
$g_{2,2}$ & $I$ & $I$ & $I$ & $I$ & $I$\\
$g_{2,3}$ & $I$ & $I$ & $I$ & $I$ & $I$%
\end{tabular}
\ \ \ \ \ \ \ \rightarrow\ \ \ \ \
\begin{tabular}
[c]{ccccc|c}%
\multicolumn{5}{c}{Phys.} & Mem.\\\hline\hline
$X$ & $X$ & $X$ & $X$ & $X$ & $g_{1,1}$\\
$I$ & $I$ & $X$ & $X$ & $I$ & $g_{1,2}$\\
$I$ & $X$ & $X$ & $I$ & $X$ & $g_{1,3}$\\
$I$ & $I$ & $I$ & $X$ & $X$ & $I^{\otimes m}$\\\hline
$Z$ & $Z$ & $Z$ & $Z$ & $Z$ & $g_{2,1}$\\
$I$ & $I$ & $Z$ & $Z$ & $I$ & $g_{2,2}$\\
$I$ & $Z$ & $Z$ & $I$ & $Z$ & $g_{2,3}$\\
$I$ & $I$ & $I$ & $Z$ & $Z$ & $I^{\otimes m}$%
\end{tabular}
\ .
\]
The commutativity matrix is equal to:%
\[
\Omega=%
\begin{bmatrix}
0 & 0 & 0 & 1 & 0 & 1\\
0 & 0 & 0 & 0 & 1 & 1\\
0 & 0 & 0 & 1 & 1 & 0\\
1 & 0 & 1 & 0 & 0 & 0\\
0 & 1 & 1 & 0 & 0 & 0\\
1 & 1 & 0 & 0 & 0 & 0
\end{bmatrix}
.
\]
The dimension of the matrix is six and the rank is four. A set of generators
with minimal amount of required memory is:
\begin{align*}
g_{1,1}  &  =ZZII,\ \ \ \ g_{1,2}=IIIZ,\ \ \ \ g_{1,3}=IIXI,\\
g_{2,1}  &  =XIZI,\ \ \ \ g_{2,2}=IIZX,\ \ \ \ g_{2,3}=IXIX.
\end{align*}

So the encoding unitary should act as follows:%
\[%
\begin{tabular}
[c]{cccc|cc|ccc}%
\multicolumn{4}{c}{Mem.} & \multicolumn{2}{|c|}{Anc.} &
\multicolumn{3}{|c}{Info.}\\\hline\hline
$I$ & $I$ & $I$ & $I$ & $Z$ & $I$ & $I$ & $I$ & $I$\\
$Z$ & $Z$ & $I$ & $I$ & $I$ & $I$ & $I$ & $I$ & $I$\\
$I$ & $I$ & $I$ & $Z$ & $I$ & $I$ & $I$ & $I$ & $I$\\
$I$ & $I$ & $X$ & $I$ & $I$ & $I$ & $I$ & $I$ & $I$\\\hline
$I$ & $I$ & $I$ & $I$ & $I$ & $Z$ & $I$ & $I$ & $I$\\
$X$ & $I$ & $Z$ & $I$ & $I$ & $I$ & $I$ & $I$ & $I$\\
$I$ & $I$ & $Z$ & $X$ & $I$ & $I$ & $I$ & $I$ & $I$\\
$I$ & $X$ & $I$ & $X$ & $I$ & $I$ & $I$ & $I$ & $I$%
\end{tabular}
\ \ \ \ \ \ \rightarrow\ \ \ \ \
\begin{tabular}
[c]{ccccc|cccc}%
\multicolumn{5}{c|}{Phys.} & \multicolumn{4}{|c}{Mem.}\\\hline\hline
$X$ & $X$ & $X$ & $X$ & $X$ & $Z$ & $Z$ & $I$ & $I$\\
$I$ & $I$ & $X$ & $X$ & $I$ & $I$ & $I$ & $I$ & $Z$\\
$I$ & $X$ & $X$ & $I$ & $X$ & $I$ & $I$ & $X$ & $I$\\
$I$ & $I$ & $I$ & $X$ & $X$ & $I$ & $I$ & $I$ & $I$\\\hline
$Z$ & $Z$ & $Z$ & $Z$ & $Z$ & $X$ & $I$ & $Z$ & $I$\\
$I$ & $I$ & $Z$ & $Z$ & $I$ & $I$ & $I$ & $Z$ & $X$\\
$I$ & $Z$ & $Z$ & $I$ & $Z$ & $I$ & $X$ & $I$ & $X$\\
$I$ & $I$ & $I$ & $Z$ & $Z$ & $I$ & $I$ & $I$ & $I$%
\end{tabular}
.
\]

The memory operators which can be a part of catastrophic cycle are the members
of following set:
\[
C=\{X_{1}^{e_{1}}Z_{1}^{e_{2}}X_{2}^{e_{1}}Z_{2}^{e_{2}}X_{3}^{e_{2}}%
Z_{4}^{e_{2}}:e_{1},e_{2}\in\{0,1\}\}.
\]

Based on Theorem~\ref{thm:empty-cat}, since the set $S_{1}$ corresponding the
above transformation is an empty set, assigning any basis of set $C$ to output
memories of the rows of $S_{2}$ will make the encoder non-catastrophic. So any
encoder implementing the following transformation is non-catastrophic:%

\[%
\begin{tabular}
[c]{cccc|cc|ccc}%
\multicolumn{4}{c}{Mem.} & \multicolumn{2}{|c|}{Anc.} &
\multicolumn{3}{|c}{Info.}\\\hline\hline
$I$ & $I$ & $I$ & $I$ & $Z$ & $I$ & $I$ & $I$ & $I$\\
$Z$ & $Z$ & $I$ & $I$ & $I$ & $I$ & $I$ & $I$ & $I$\\
$I$ & $I$ & $I$ & $Z$ & $I$ & $I$ & $I$ & $I$ & $I$\\
$I$ & $I$ & $X$ & $I$ & $I$ & $I$ & $I$ & $I$ & $I$\\\hline
$I$ & $I$ & $I$ & $I$ & $I$ & $Z$ & $I$ & $I$ & $I$\\
$X$ & $I$ & $Z$ & $I$ & $I$ & $I$ & $I$ & $I$ & $I$\\
$I$ & $I$ & $Z$ & $X$ & $I$ & $I$ & $I$ & $I$ & $I$\\
$I$ & $X$ & $I$ & $X$ & $I$ & $I$ & $I$ & $I$ & $I$\\\hline
$I$ & $I$ & $I$ & $I$ & $I$ & $I$ & $X$ & $I$ & $I$\\
$I$ & $I$ & $I$ & $I$ & $I$ & $I$ & $I$ & $X$ & $I$%
\end{tabular}
\ \ \ \ \ \ \ \rightarrow\ \ \ \ \
\begin{tabular}
[c]{ccccc|cccc}%
\multicolumn{5}{c|}{Phys.} & \multicolumn{4}{|c}{Mem.}\\\hline\hline
$X$ & $X$ & $X$ & $X$ & $X$ & $Z$ & $Z$ & $I$ & $I$\\
$I$ & $I$ & $X$ & $X$ & $I$ & $I$ & $I$ & $I$ & $Z$\\
$I$ & $X$ & $X$ & $I$ & $X$ & $I$ & $I$ & $X$ & $I$\\
$I$ & $I$ & $I$ & $X$ & $X$ & $I$ & $I$ & $I$ & $I$\\\hline
$Z$ & $Z$ & $Z$ & $Z$ & $Z$ & $X$ & $I$ & $Z$ & $I$\\
$I$ & $I$ & $Z$ & $Z$ & $I$ & $I$ & $I$ & $Z$ & $X$\\
$I$ & $Z$ & $Z$ & $I$ & $Z$ & $I$ & $X$ & $I$ & $X$\\
$I$ & $I$ & $I$ & $Z$ & $Z$ & $I$ & $I$ & $I$ & $I$\\\hline
$I$ & $I$ & $I$ & $I$ & $I$ & $X$ & $X$ & $I$ & $I$\\
$I$ & $I$ & $I$ & $I$ & $I$ & $Z$ & $Z$ & $X$ & $Z$%
\end{tabular}
\ .
\]

\subsection{Fourth Example}

The stabilizer generators for the sixth example in Table I in
\cite{ieee2007forney} are:%

\[%
\begin{array}
[c]{ccccc}%
X & X & X & X & X\\
Z & Z & Z & Z & Z
\end{array}
\left\vert
\begin{array}
[c]{ccccc}%
X & I & X & I & X\\
Z & I & Z & I & Z
\end{array}
\right\vert \left.
\begin{array}
[c]{ccccc}%
I & I & I & X & X\\
I & I & I & Z & Z
\end{array}
\right\vert
\begin{array}
[c]{ccccc}%
I & X & X & X & X\\
I & Z & Z & Z & Z
\end{array}
.
\]
So the encoding unitary should act as follows:%
\[%
\begin{tabular}
[c]{c|cc|ccc}%
Mem. & \multicolumn{2}{|c|}{Anc.} & \multicolumn{3}{|c}{Info.}\\\hline\hline
$I^{\otimes m}$ & $Z$ & $I$ & $I$ & $I$ & $I$\\
$g_{1,1}$ & $I$ & $I$ & $I$ & $I$ & $I$\\
$g_{1,2}$ & $I$ & $I$ & $I$ & $I$ & $I$\\
$g_{1,3}$ & $I$ & $I$ & $I$ & $I$ & $I$\\\hline
$I^{\otimes m}$ & $I$ & $Z$ & $I$ & $I$ & $I$\\
$g_{2,1}$ & $I$ & $I$ & $I$ & $I$ & $I$\\
$g_{2,2}$ & $I$ & $I$ & $I$ & $I$ & $I$\\
$g_{2,3}$ & $I$ & $I$ & $I$ & $I$ & $I$%
\end{tabular}
\ \ \ \ \ \ \ \rightarrow\ \ \ \ \
\begin{tabular}
[c]{ccccc|c}%
\multicolumn{5}{c|}{Phys.} & Mem.\\\hline\hline
$X$ & $X$ & $X$ & $X$ & $X$ & $g_{1,1}$\\
$X$ & $I$ & $X$ & $I$ & $X$ & $g_{1,2}$\\
$I$ & $I$ & $I$ & $X$ & $X$ & $g_{1,3}$\\
$I$ & $X$ & $X$ & $X$ & $X$ & $I^{\otimes m}$\\\hline
$Z$ & $Z$ & $Z$ & $Z$ & $Z$ & $g_{2,1}$\\
$Z$ & $I$ & $Z$ & $I$ & $Z$ & $g_{2,2}$\\
$I$ & $I$ & $I$ & $Z$ & $Z$ & $g_{2,3}$\\
$I$ & $Z$ & $Z$ & $Z$ & $Z$ & $I^{\otimes m}$%
\end{tabular}
\ .
\]

The commutativity matrix is:%

\[
\Omega=%
\begin{bmatrix}
0 & 0 & 0 & 1 & 1 & 0\\
0 & 0 & 0 & 1 & 0 & 0\\
0 & 0 & 0 & 0 & 0 & 0\\
1 & 1 & 0 & 0 & 0 & 0\\
1 & 0 & 0 & 0 & 0 & 0\\
0 & 0 & 0 & 0 & 0 & 0
\end{bmatrix}
.
\]
The dimension of $\Omega$ is equal to six and its rank is equal to four. A set
of memory operators which act on minimal amount of required memory is:
\begin{align*}
g_{1,1}  &  =ZZII,\ \ \ \ g_{1,2}=ZIII,\ \ \ \ g_{1,3}=IIZI,\\
g_{2,1}  &  =XIII,\ \ \ \ g_{2,2}=IXII,\ \ \ \ g_{2,3}=IIIZ,
\end{align*}
so the minimal-memory encoder implements the following transformation:
\[%
\begin{tabular}
[c]{cccc|cc|ccc}%
\multicolumn{4}{c}{Mem.} & \multicolumn{2}{|c|}{Anc.} &
\multicolumn{3}{|c}{Info.}\\\hline\hline
$I$ & $I$ & $I$ & $I$ & $Z$ & $I$ & $I$ & $I$ & $I$\\
$Z$ & $Z$ & $I$ & $I$ & $I$ & $I$ & $I$ & $I$ & $I$\\
$Z$ & $I$ & $I$ & $I$ & $I$ & $I$ & $I$ & $I$ & $I$\\
$I$ & $I$ & $Z$ & $I$ & $I$ & $I$ & $I$ & $I$ & $I$\\\hline
$I$ & $I$ & $I$ & $I$ & $I$ & $Z$ & $I$ & $I$ & $I$\\
$X$ & $I$ & $I$ & $I$ & $I$ & $I$ & $I$ & $I$ & $I$\\
$I$ & $X$ & $I$ & $I$ & $I$ & $I$ & $I$ & $I$ & $I$\\
$I$ & $I$ & $I$ & $Z$ & $I$ & $I$ & $I$ & $I$ & $I$%
\end{tabular}
\ \ \ \ \ \ \ \rightarrow\ \ \ \ \
\begin{tabular}
[c]{ccccc|cccc}%
\multicolumn{5}{c|}{Phys.} & \multicolumn{4}{|c}{Mem.}\\\hline\hline
$X$ & $X$ & $X$ & $X$ & $X$ & $Z$ & $Z$ & $I$ & $I$\\
$X$ & $I$ & $X$ & $I$ & $X$ & $Z$ & $I$ & $I$ & $I$\\
$I$ & $I$ & $I$ & $X$ & $X$ & $I$ & $I$ & $Z$ & $I$\\
$I$ & $X$ & $X$ & $X$ & $X$ & $I$ & $I$ & $I$ & $I$\\\hline
$Z$ & $Z$ & $Z$ & $Z$ & $Z$ & $X$ & $I$ & $I$ & $I$\\
$Z$ & $I$ & $Z$ & $I$ & $Z$ & $I$ & $X$ & $I$ & $I$\\
$I$ & $I$ & $I$ & $Z$ & $Z$ & $I$ & $I$ & $I$ & $Z$\\
$I$ & $Z$ & $Z$ & $Z$ & $Z$ & $I$ & $I$ & $I$ & $I$%
\end{tabular}
\ .
\]
The memory operators which can be a part of catastrophic cycle are the members
of following set:
\[
C=\{Z_{3}^{e_{1}}Z_{4}^{e_{2}}:e_{1},e_{2}\in\{0,1\}\}.
\]

Based on Theorem~\ref{thm:empty-cat}, since the set $S_{1}$ corresponding the
above transformation is an empty set, assigning any basis of set $C$ to output
memories of the rows of $S_{2}$ will make the encoder non-catastrophic. So any
encoder implementing the following transformation is non-catastrophic:%

\[%
\begin{tabular}
[c]{cccc|cc|ccc}%
\multicolumn{4}{c}{Mem.} & \multicolumn{2}{|c|}{Anc.} &
\multicolumn{3}{|c}{Info.}\\\hline\hline
$I$ & $I$ & $I$ & $I$ & $Z$ & $I$ & $I$ & $I$ & $I$\\
$Z$ & $Z$ & $I$ & $I$ & $I$ & $I$ & $I$ & $I$ & $I$\\
$Z$ & $I$ & $I$ & $I$ & $I$ & $I$ & $I$ & $I$ & $I$\\
$I$ & $I$ & $Z$ & $I$ & $I$ & $I$ & $I$ & $I$ & $I$\\\hline
$I$ & $I$ & $I$ & $I$ & $I$ & $Z$ & $I$ & $I$ & $I$\\
$X$ & $I$ & $I$ & $I$ & $I$ & $I$ & $I$ & $I$ & $I$\\
$I$ & $X$ & $I$ & $I$ & $I$ & $I$ & $I$ & $I$ & $I$\\
$I$ & $I$ & $I$ & $Z$ & $I$ & $I$ & $I$ & $I$ & $I$\\\hline
$I$ & $I$ & $I$ & $I$ & $I$ & $I$ & $X$ & $I$ & $I$\\
$I$ & $I$ & $I$ & $I$ & $I$ & $I$ & $I$ & $X$ & $I$%
\end{tabular}
\ \ \ \ \ \ \rightarrow\ \ \ \ \
\begin{tabular}
[c]{ccccc|cccc}%
\multicolumn{5}{c|}{Phys.} & \multicolumn{4}{|c}{Mem.}\\\hline\hline
$X$ & $X$ & $X$ & $X$ & $X$ & $Z$ & $Z$ & $I$ & $I$\\
$X$ & $I$ & $X$ & $I$ & $X$ & $Z$ & $I$ & $I$ & $I$\\
$I$ & $I$ & $I$ & $X$ & $X$ & $I$ & $I$ & $Z$ & $I$\\
$I$ & $X$ & $X$ & $X$ & $X$ & $I$ & $I$ & $I$ & $I$\\\hline
$Z$ & $Z$ & $Z$ & $Z$ & $Z$ & $X$ & $I$ & $I$ & $I$\\
$Z$ & $I$ & $Z$ & $I$ & $Z$ & $I$ & $X$ & $I$ & $I$\\
$I$ & $I$ & $I$ & $Z$ & $Z$ & $I$ & $I$ & $I$ & $Z$\\
$I$ & $Z$ & $Z$ & $Z$ & $Z$ & $I$ & $I$ & $I$ & $I$\\\hline
$I$ & $I$ & $I$ & $I$ & $I$ & $I$ & $I$ & $Z$ & $I$\\
$I$ & $I$ & $I$ & $I$ & $I$ & $I$ & $I$ & $I$ & $Z$%
\end{tabular}
.
\]

\subsection{Fifth Example}

The generators for the eighth example in Table I in~\cite{ieee2007forney} are:%
\[%
\begin{array}
[c]{c}%
XXXXXXXX\\
ZZZZZZZZ
\end{array}
\left\vert
\begin{array}
[c]{c}%
IXIXIXIX\\
IZIZIZIZ
\end{array}
\right\vert \left.
\begin{array}
[c]{c}%
IIXXIIXX\\
IIZZIIZZ
\end{array}
\right\vert
\begin{array}
[c]{c}%
IIIIXXXX\\
IIIIZZZZ
\end{array}
.
\]
So the encoder should act as follows:%
\[%
\begin{tabular}
[c]{c|cc|cccccc}%
Mem. & \multicolumn{2}{|c|}{Anc.} & \multicolumn{6}{|c}{Info.}\\\hline\hline
$I^{\otimes m}$ & $Z$ & $I$ & $I$ & $I$ & $I$ & $I$ & $I$ & $I$\\
$g_{1,1}$ & $I$ & $I$ & $I$ & $I$ & $I$ & $I$ & $I$ & $I$\\
$g_{1,2}$ & $I$ & $I$ & $I$ & $I$ & $I$ & $I$ & $I$ & $I$\\
$g_{1,3}$ & $I$ & $I$ & $I$ & $I$ & $I$ & $I$ & $I$ & $I$\\\hline
$I^{\otimes m}$ & $I$ & $Z$ & $I$ & $I$ & $I$ & $I$ & $I$ & $I$\\
$g_{2,1}$ & $I$ & $I$ & $I$ & $I$ & $I$ & $I$ & $I$ & $I$\\
$g_{2,2}$ & $I$ & $I$ & $I$ & $I$ & $I$ & $I$ & $I$ & $I$\\
$g_{2,3}$ & $I$ & $I$ & $I$ & $I$ & $I$ & $I$ & $I$ & $I$%
\end{tabular}
\rightarrow%
\begin{tabular}
[c]{cccccccc|c}%
\multicolumn{8}{c|}{Phys.} & Mem.\\\hline\hline
$X$ & $X$ & $X$ & $X$ & $X$ & $X$ & $X$ & $X$ & $g_{1,1}$\\
$I$ & $X$ & $I$ & $X$ & $I$ & $X$ & $I$ & $X$ & $g_{1,2}$\\
$I$ & $I$ & $X$ & $X$ & $I$ & $I$ & $X$ & $X$ & $g_{1,3}$\\
$I$ & $I$ & $I$ & $I$ & $X$ & $X$ & $X$ & $X$ & $I^{\otimes m}$\\\hline
$Z$ & $Z$ & $Z$ & $Z$ & $Z$ & $Z$ & $Z$ & $Z$ & $g_{2,1}$\\
$I$ & $Z$ & $I$ & $Z$ & $I$ & $Z$ & $I$ & $Z$ & $g_{2,2}$\\
$I$ & $I$ & $Z$ & $Z$ & $I$ & $I$ & $Z$ & $Z$ & $g_{2,3}$\\
$I$ & $I$ & $I$ & $I$ & $Z$ & $Z$ & $Z$ & $Z$ & $I^{\otimes m}$%
\end{tabular}
.
\]
The commutativity matrix is zero matrix, so the minimal number of required
memory is six. A set of memory operators with minimal amount of required
memory is as follows:%

\begin{align*}
g_{1,1} &  =ZIIIII,\ \ \ \ g_{1,2}=IZIIII,\ \ \ \ g_{1,3}=IIZII,\\
g_{2,1} &  =IIIZII,\ \ \ \ g_{2,2}=IIIIZI,\ \ \ \ g_{2,3}=IIIIIZ.
\end{align*}
So the encoder implements the following transformation:%
\[%
\begin{tabular}
[c]{c|cc|cccccc}%
Mem. & \multicolumn{2}{|c|}{Anc.} & \multicolumn{6}{|c}{Info.}\\\hline\hline
$IIIIII$ & $Z$ & $I$ & $I$ & $I$ & $I$ & $I$ & $I$ & $I$\\
$ZIIIII$ & $I$ & $I$ & $I$ & $I$ & $I$ & $I$ & $I$ & $I$\\
$IZIIII$ & $I$ & $I$ & $I$ & $I$ & $I$ & $I$ & $I$ & $I$\\
$IIZIII$ & $I$ & $I$ & $I$ & $I$ & $I$ & $I$ & $I$ & $I$\\\hline
$IIIIII$ & $I$ & $Z$ & $I$ & $I$ & $I$ & $I$ & $I$ & $I$\\
$IIIZII$ & $I$ & $I$ & $I$ & $I$ & $I$ & $I$ & $I$ & $I$\\
$IIIIZI$ & $I$ & $I$ & $I$ & $I$ & $I$ & $I$ & $I$ & $I$\\
$IIIIIZ$ & $I$ & $I$ & $I$ & $I$ & $I$ & $I$ & $I$ & $I$%
\end{tabular}
\rightarrow%
\begin{tabular}
[c]{cccccccc|cccccc}%
\multicolumn{8}{c|}{Phys.} & \multicolumn{6}{|c}{Mem.}\\\hline\hline
$X$ & $X$ & $X$ & $X$ & $X$ & $X$ & $X$ & $X$ & $Z$ & $I$ & $I$ & $I$ & $I$ &
$I$\\
$I$ & $X$ & $I$ & $X$ & $I$ & $X$ & $I$ & $X$ & $I$ & $Z$ & $I$ & $I$ & $I$ &
$I$\\
$I$ & $I$ & $X$ & $X$ & $I$ & $I$ & $X$ & $X$ & $I$ & $I$ & $Z$ & $I$ & $I$ &
$I$\\
$I$ & $I$ & $I$ & $I$ & $X$ & $X$ & $X$ & $X$ & $I$ & $I$ & $I$ & $I$ & $I$ &
$I$\\\hline
$Z$ & $Z$ & $Z$ & $Z$ & $Z$ & $Z$ & $Z$ & $Z$ & $I$ & $I$ & $I$ & $Z$ & $I$ &
$I$\\
$I$ & $Z$ & $I$ & $Z$ & $I$ & $Z$ & $I$ & $Z$ & $I$ & $I$ & $I$ & $I$ & $Z$ &
$I$\\
$I$ & $I$ & $Z$ & $Z$ & $I$ & $I$ & $Z$ & $Z$ & $I$ & $I$ & $I$ & $I$ & $I$ &
$Z$\\
$I$ & $I$ & $I$ & $I$ & $Z$ & $Z$ & $Z$ & $Z$ & $I$ & $I$ & $I$ & $I$ & $I$ &
$I$%
\end{tabular}
.
\]

The memory operators which can be a part of catastrophic cycle are the members
of following set:
\[
C=\{Z_{1}^{e_{1}}Z_{2}^{e_{2}}Z_{3}^{e_{3}}Z_{4}^{e_{4}}Z_{5}^{e_{5}}%
Z_{6}^{e_{6}}:e_{1},e_{2},e_{3},e_{4},e_{5},e_{6}\in\{0,1\}\}.
\]

Based on Theorem~\ref{thm:empty-cat}, since the set $S_{1}$ corresponding the
above transformation is an empty set, assigning any basis of set $C$ to output
memories of the rows of $S_{2}$ will make the encoder non-catastrophic. So any
encoder implementing the following transformation is non-catastrophic:

{\normalsize
\[%
\begin{tabular}
[c]{c|c|cccccc}%
Mem. & \multicolumn{1}{|c|}{Anc.} & \multicolumn{6}{|c}{Info.}\\\hline\hline
$IIIIII$ & $ZI$ & $I$ & $I$ & $I$ & $I$ & $I$ & $I$\\
$ZIIIII$ & $II$ & $I$ & $I$ & $I$ & $I$ & $I$ & $I$\\
$IZIIII$ & $II$ & $I$ & $I$ & $I$ & $I$ & $I$ & $I$\\
$IIZIII$ & $II$ & $I$ & $I$ & $I$ & $I$ & $I$ & $I$\\\hline
$IIIIII$ & $IZ$ & $I$ & $I$ & $I$ & $I$ & $I$ & $I$\\
$IIIZII$ & $II$ & $I$ & $I$ & $I$ & $I$ & $I$ & $I$\\
$IIIIZI$ & $II$ & $I$ & $I$ & $I$ & $I$ & $I$ & $I$\\
$IIIIIZ$ & $II$ & $I$ & $I$ & $I$ & $I$ & $I$ & $I$\\\hline
$IIIIII$ & $II$ & $X$ & $I$ & $I$ & $I$ & $I$ & $I$\\
$IIIIII$ & $II$ & $I$ & $X$ & $I$ & $I$ & $I$ & $I$\\
$IIIIII$ & $II$ & $I$ & $I$ & $X$ & $I$ & $I$ & $I$\\
$IIIIII$ & $II$ & $I$ & $I$ & $I$ & $X$ & $I$ & $I$\\
$IIIIII$ & $II$ & $I$ & $I$ & $I$ & $I$ & $X$ & $I$\\
$IIIIII$ & $II$ & $I$ & $I$ & $I$ & $I$ & $I$ & $X$%
\end{tabular}
\rightarrow%
\begin{tabular}
[c]{cccccccc|cccccc}%
\multicolumn{8}{c|}{Phys.} & \multicolumn{6}{|c}{Mem.}\\\hline\hline
$X$ & $X$ & $X$ & $X$ & $X$ & $X$ & $X$ & $X$ & $Z$ & $I$ & $I$ & $I$ & $I$ &
$I$\\
$I$ & $X$ & $I$ & $X$ & $I$ & $X$ & $I$ & $X$ & $I$ & $Z$ & $I$ & $I$ & $I$ &
$I$\\
$I$ & $I$ & $X$ & $X$ & $I$ & $I$ & $X$ & $X$ & $I$ & $I$ & $Z$ & $I$ & $I$ &
$I$\\
$I$ & $I$ & $I$ & $I$ & $X$ & $X$ & $X$ & $X$ & $I$ & $I$ & $I$ & $I$ & $I$ &
$I$\\\hline
$Z$ & $Z$ & $Z$ & $Z$ & $Z$ & $Z$ & $Z$ & $Z$ & $I$ & $I$ & $I$ & $Z$ & $I$ &
$I$\\
$I$ & $Z$ & $I$ & $Z$ & $I$ & $Z$ & $I$ & $Z$ & $I$ & $I$ & $I$ & $I$ & $Z$ &
$I$\\
$I$ & $I$ & $Z$ & $Z$ & $I$ & $I$ & $Z$ & $Z$ & $I$ & $I$ & $I$ & $I$ & $I$ &
$Z$\\
$I$ & $I$ & $I$ & $I$ & $Z$ & $Z$ & $Z$ & $Z$ & $I$ & $I$ & $I$ & $I$ & $I$ &
$I$\\\hline
$I$ & $I$ & $I$ & $I$ & $I$ & $I$ & $I$ & $I$ & $Z$ & $I$ & $I$ & $I$ & $I$ &
$I$\\
$I$ & $I$ & $I$ & $I$ & $I$ & $I$ & $I$ & $I$ & $I$ & $Z$ & $I$ & $I$ & $I$ &
$I$\\
$I$ & $I$ & $I$ & $I$ & $I$ & $I$ & $I$ & $I$ & $I$ & $I$ & $Z$ & $I$ & $I$ &
$I$\\
$I$ & $I$ & $I$ & $I$ & $I$ & $I$ & $I$ & $I$ & $I$ & $I$ & $I$ & $Z$ & $I$ &
$I$\\
$I$ & $I$ & $I$ & $I$ & $I$ & $I$ & $I$ & $I$ & $I$ & $I$ & $I$ & $I$ & $Z$ &
$I$\\
$I$ & $I$ & $I$ & $I$ & $I$ & $I$ & $I$ & $I$ & $I$ & $I$ & $I$ & $I$ & $I$ &
$Z$%
\end{tabular}
\ .
\]
}

\subsection{Sixth Example}

The example in the third row of Figure 1 in~\cite{GR07} has the following
generators:
\[%
\begin{array}
[c]{cccc}%
X & X & X & X\\
Z & Z & Z & Z
\end{array}
\left\vert
\begin{array}
[c]{cccc}%
I & I & X & X\\
I & I & Z & Z
\end{array}
\right\vert
\begin{array}
[c]{cccc}%
I & X & I & X\\
I & Z & I & Z
\end{array}
\left\vert
\begin{array}
[c]{cccc}%
I & I & X & X\\
I & I & Z & Z
\end{array}
\right\vert
\begin{array}
[c]{cccc}%
X & X & X & X\\
Z & Z & Z & Z
\end{array}
.
\]
So the encoder should act as follows:%
\[%
\begin{tabular}
[c]{c|cc|cc}%
Mem. & \multicolumn{2}{|c|}{Anc.} & \multicolumn{2}{|c}{Info.}\\\hline\hline
$I^{\otimes m}$ & $Z$ & $I$ & $I$ & $I$\\
$g_{1,1}$ & $I$ & $I$ & $I$ & $I$\\
$g_{1,2}$ & $I$ & $I$ & $I$ & $I$\\
$g_{1,3}$ & $I$ & $I$ & $I$ & $I$\\
$g_{1,4}$ & $I$ & $I$ & $I$ & $I$\\\hline
$I^{\otimes m}$ & $I$ & $Z$ & $I$ & $I$\\
$g_{2,1}$ & $I$ & $I$ & $I$ & $I$\\
$g_{2,2}$ & $I$ & $I$ & $I$ & $I$\\
$g_{2,3}$ & $I$ & $I$ & $I$ & $I$\\
$g_{2,4}$ & $I$ & $I$ & $I$ & $I$%
\end{tabular}
\rightarrow%
\begin{tabular}
[c]{cccc|c}%
\multicolumn{4}{c|}{Phys.} & Mem.\\\hline\hline
$X$ & $X$ & $X$ & $X$ & $g_{1,1}$\\
$I$ & $I$ & $X$ & $X$ & $g_{1,2}$\\
$I$ & $X$ & $I$ & $X$ & $g_{1,3}$\\
$I$ & $I$ & $X$ & $X$ & $g_{1,4}$\\
$X$ & $X$ & $X$ & $X$ & $I^{\otimes m}$\\\hline
$Z$ & $Z$ & $Z$ & $Z$ & $g_{2,1}$\\
$I$ & $I$ & $Z$ & $Z$ & $g_{2,2}$\\
$I$ & $Z$ & $I$ & $Z$ & $g_{2,3}$\\
$I$ & $I$ & $Z$ & $Z$ & $g_{2,4}$\\
$Z$ & $Z$ & $Z$ & $Z$ & $I^{\otimes m}$%
\end{tabular}
.
\]

The commutativity matrix is:%
\[
\Omega=%
\begin{bmatrix}
0 & 0 & 0 & 0 & 0 & 0 & 0 & 0\\
0 & 0 & 0 & 0 & 0 & 0 & 1 & 0\\
0 & 0 & 0 & 0 & 0 & 1 & 0 & 0\\
0 & 0 & 0 & 0 & 0 & 0 & 0 & 0\\
0 & 0 & 0 & 0 & 0 & 0 & 0 & 0\\
0 & 0 & 1 & 0 & 0 & 0 & 0 & 0\\
0 & 1 & 0 & 0 & 0 & 0 & 0 & 0\\
0 & 0 & 0 & 0 & 0 & 0 & 0 & 0
\end{bmatrix}
,
\]
with dimension equal to eight and rank equal to six. So the minimal amount of
required memory is five. A set of memory operators with minimal number of
required memory is:
\begin{align*}
g_{1,1} &  =ZIIIII,\ \ \ g_{1,2}=IIXIII,\ \ \ g_{1,3}=IIIZII,\ \ \ g_{1,4}%
=IIIIZI,\\
g_{2,1} &  =IZIIII,\ \ \ g_{2,2}=IIIXII,\ \ \ g_{2,3}=IIZIII,\ \ \ g_{2,4}%
=IIIIIZ.
\end{align*}
Thus, the encoder acts as follows:%
\[%
\begin{tabular}
[c]{cccccc|cc|cc}%
\multicolumn{6}{c}{Mem.} & \multicolumn{2}{|c|}{Anc.} &
\multicolumn{2}{|c}{Info.}\\\hline\hline
$I$ & $I$ & $I$ & $I$ & $I$ & $I$ & $Z$ & $I$ & $I$ & $I$\\
$Z$ & $I$ & $I$ & $I$ & $I$ & $I$ & $I$ & $I$ & $I$ & $I$\\
$I$ & $I$ & $X$ & $I$ & $I$ & $I$ & $I$ & $I$ & $I$ & $I$\\
$I$ & $I$ & $I$ & $Z$ & $I$ & $I$ & $I$ & $I$ & $I$ & $I$\\
$I$ & $I$ & $I$ & $I$ & $Z$ & $I$ & $I$ & $I$ & $I$ & $I$\\\hline
$I$ & $I$ & $I$ & $I$ & $I$ & $I$ & $I$ & $Z$ & $I$ & $I$\\
$I$ & $Z$ & $I$ & $I$ & $I$ & $I$ & $I$ & $I$ & $I$ & $I$\\
$I$ & $I$ & $I$ & $X$ & $I$ & $I$ & $I$ & $I$ & $I$ & $I$\\
$I$ & $I$ & $Z$ & $I$ & $I$ & $I$ & $I$ & $I$ & $I$ & $I$\\
$I$ & $I$ & $I$ & $I$ & $I$ & $Z$ & $I$ & $I$ & $I$ & $I$%
\end{tabular}
\rightarrow%
\begin{tabular}
[c]{cccc|cccccc}%
\multicolumn{4}{c|}{Phys.} & \multicolumn{6}{|c}{Mem.}\\\hline\hline
$X$ & $X$ & $X$ & $X$ & $Z$ & $I$ & $I$ & $I$ & $I$ & $I$\\
$I$ & $I$ & $X$ & $X$ & $I$ & $I$ & $X$ & $I$ & $I$ & $I$\\
$I$ & $X$ & $I$ & $X$ & $I$ & $I$ & $I$ & $Z$ & $I$ & $I$\\
$I$ & $I$ & $X$ & $X$ & $I$ & $I$ & $I$ & $I$ & $Z$ & $I$\\
$X$ & $X$ & $X$ & $X$ & $I$ & $I$ & $I$ & $I$ & $I$ & $I$\\\hline
$Z$ & $Z$ & $Z$ & $Z$ & $I$ & $Z$ & $I$ & $I$ & $I$ & $I$\\
$I$ & $I$ & $Z$ & $Z$ & $I$ & $I$ & $I$ & $X$ & $I$ & $I$\\
$I$ & $Z$ & $I$ & $Z$ & $I$ & $I$ & $Z$ & $I$ & $I$ & $I$\\
$I$ & $I$ & $Z$ & $Z$ & $I$ & $I$ & $I$ & $I$ & $I$ & $Z$\\
$Z$ & $Z$ & $Z$ & $Z$ & $I$ & $I$ & $I$ & $I$ & $I$ & $I$%
\end{tabular}
.
\]
The memory operators which can be a part of catastrophic cycle are the members
of following set:
\[
C=\{Z_{1}^{e_{1}}Z_{2}^{e_{2}}Z_{5}^{e_{3}}Z_{6}^{e_{4}}:e_{1},e_{2}%
,e_{3},e_{4}\in\{0,1\}\}.
\]

By inspecting the above transformation, the set $S_{1}$ is follows:%

\[%
\begin{tabular}
[c]{cccccc|cc|cc}%
\multicolumn{6}{c}{Mem.} & \multicolumn{2}{|c|}{Anc.} &
\multicolumn{2}{|c}{Info.}\\\hline\hline
$I$ & $I$ & $I$ & $I$ & $Z$ & $I$ & $Z$ & $I$ & $I$ & $I$\\
$I$ & $I$ & $I$ & $I$ & $I$ & $Z$ & $I$ & $Z$ & $I$ & $I$%
\end{tabular}
\rightarrow%
\begin{tabular}
[c]{cccc|cccccc}%
\multicolumn{4}{c|}{Phys.} & \multicolumn{6}{|c}{Mem.}\\\hline\hline
$I$ & $I$ & $I$ & $I$ & $Z$ & $I$ & $I$ & $I$ & $I$ & $I$\\
$I$ & $I$ & $I$ & $I$ & $I$ & $Z$ & $I$ & $I$ & $I$ & $I$%
\end{tabular}
.
\]
We add two rows (the rows after the line) to the rows of $S_{1}$ as follows:
\[%
\begin{tabular}
[c]{cccccc|cc|cc}%
\multicolumn{6}{c}{Mem.} & \multicolumn{2}{|c|}{Anc.} &
\multicolumn{2}{|c}{Info.}\\\hline\hline
$I$ & $I$ & $I$ & $I$ & $Z$ & $I$ & $Z$ & $I$ & $I$ & $I$\\
$I$ & $I$ & $I$ & $I$ & $I$ & $Z$ & $I$ & $Z$ & $I$ & $I$\\\hline
$I$ & $I$ & $I$ & $I$ & $I$ & $I$ & $I$ & $I$ & $X$ & $I$\\
$I$ & $I$ & $I$ & $I$ & $I$ & $I$ & $I$ & $I$ & $I$ & $X$%
\end{tabular}
\rightarrow%
\begin{tabular}
[c]{cccc|cccccc}%
\multicolumn{4}{c|}{Phys.} & \multicolumn{6}{|c}{Mem.}\\\hline\hline
$I$ & $I$ & $I$ & $I$ & $Z$ & $I$ & $I$ & $I$ & $I$ & $I$\\
$I$ & $I$ & $I$ & $I$ & $I$ & $Z$ & $I$ & $I$ & $I$ & $I$\\\hline
$I$ & $I$ & $I$ & $I$ & $I$ & $I$ & $I$ & $I$ & $Z$ & $I$\\
$I$ & $I$ & $I$ & $I$ & $I$ & $I$ & $I$ & $I$ & $I$ & $Z$%
\end{tabular}
.
\]
All combinations of the entries of $S_{1}\cup S_{2}$ are as follows:
\begin{equation}%
\begin{tabular}
[c]{cccccc|cc|cc}%
\multicolumn{6}{c}{Mem.} & \multicolumn{2}{|c|}{Anc.} &
\multicolumn{2}{|c}{Info.}\\\hline\hline
$I$ & $I$ & $I$ & $I$ & $Z$ & $I$ & $Z$ & $I$ & $I$ & $I$\\
$I$ & $I$ & $I$ & $I$ & $I$ & $Z$ & $I$ & $Z$ & $I$ & $I$\\
$I$ & $I$ & $I$ & $I$ & $I$ & $I$ & $I$ & $I$ & $X$ & $I$\\
$I$ & $I$ & $I$ & $I$ & $I$ & $I$ & $I$ & $I$ & $I$ & $X$\\
$I$ & $I$ & $I$ & $I$ & $Z$ & $Z$ & $Z$ & $Z$ & $I$ & $I$\\
$I$ & $I$ & $I$ & $I$ & $Z$ & $I$ & $Z$ & $I$ & $X$ & $I$\\
$I$ & $I$ & $I$ & $I$ & $Z$ & $I$ & $Z$ & $I$ & $I$ & $X$\\
$I$ & $I$ & $I$ & $I$ & $I$ & $Z$ & $I$ & $Z$ & $X$ & $I$\\
$I$ & $I$ & $I$ & $I$ & $I$ & $Z$ & $I$ & $Z$ & $I$ & $X$\\
$I$ & $I$ & $I$ & $I$ & $I$ & $I$ & $I$ & $I$ & $X$ & $X$\\
$I$ & $I$ & $I$ & $I$ & $Z$ & $Z$ & $Z$ & $Z$ & $X$ & $I$\\
$I$ & $I$ & $I$ & $I$ & $Z$ & $Z$ & $Z$ & $Z$ & $I$ & $X$\\
$I$ & $I$ & $I$ & $I$ & $Z$ & $I$ & $Z$ & $I$ & $X$ & $X$\\
$I$ & $I$ & $I$ & $I$ & $I$ & $Z$ & $I$ & $Z$ & $X$ & $X$\\
$I$ & $I$ & $I$ & $I$ & $Z$ & $Z$ & $Z$ & $Z$ & $X$ & $X$%
\end{tabular}
\rightarrow%
\begin{tabular}
[c]{cccc|cccccc}%
\multicolumn{4}{c|}{Phys.} & \multicolumn{6}{|c}{Mem.}\\\hline\hline
$I$ & $I$ & $I$ & $I$ & $Z$ & $I$ & $I$ & $I$ & $I$ & $I$\\
$I$ & $I$ & $I$ & $I$ & $I$ & $Z$ & $I$ & $I$ & $I$ & $I$\\
$I$ & $I$ & $I$ & $I$ & $I$ & $I$ & $I$ & $I$ & $Z$ & $I$\\
$I$ & $I$ & $I$ & $I$ & $I$ & $I$ & $I$ & $I$ & $I$ & $Z$\\
$I$ & $I$ & $I$ & $I$ & $Z$ & $Z$ & $I$ & $I$ & $I$ & $I$\\
$I$ & $I$ & $I$ & $I$ & $Z$ & $I$ & $I$ & $I$ & $Z$ & $I$\\
$I$ & $I$ & $I$ & $I$ & $Z$ & $I$ & $I$ & $I$ & $I$ & $Z$\\
$I$ & $I$ & $I$ & $I$ & $I$ & $Z$ & $I$ & $I$ & $Z$ & $I$\\
$I$ & $I$ & $I$ & $I$ & $I$ & $Z$ & $I$ & $I$ & $I$ & $Z$\\
$I$ & $I$ & $I$ & $I$ & $I$ & $I$ & $I$ & $I$ & $Z$ & $Z$\\
$I$ & $I$ & $I$ & $I$ & $Z$ & $Z$ & $I$ & $I$ & $Z$ & $I$\\
$I$ & $I$ & $I$ & $I$ & $Z$ & $Z$ & $I$ & $I$ & $I$ & $Z$\\
$I$ & $I$ & $I$ & $I$ & $Z$ & $I$ & $I$ & $I$ & $Z$ & $Z$\\
$I$ & $I$ & $I$ & $I$ & $I$ & $Z$ & $I$ & $I$ & $Z$ & $Z$\\
$I$ & $I$ & $I$ & $I$ & $Z$ & $Z$ & $I$ & $I$ & $Z$ & $Z$%
\end{tabular}
.\label{eq:non-cat-added-row-comb2}%
\end{equation}
By inspecting~(\ref{eq:non-cat-added-row-comb2}) it is obvious that a
catastrophic cycle does not happen. So the encoder which performs the
following transformation is non-catastrophic:
\[%
\begin{tabular}
[c]{cccccc|cc|cc}%
\multicolumn{6}{c}{Mem.} & \multicolumn{2}{|c|}{Anc.} &
\multicolumn{2}{|c}{Info.}\\\hline\hline
$I$ & $I$ & $I$ & $I$ & $I$ & $I$ & $Z$ & $I$ & $I$ & $I$\\
$Z$ & $I$ & $I$ & $I$ & $I$ & $I$ & $I$ & $I$ & $I$ & $I$\\
$I$ & $I$ & $X$ & $I$ & $I$ & $I$ & $I$ & $I$ & $I$ & $I$\\
$I$ & $I$ & $I$ & $Z$ & $I$ & $I$ & $I$ & $I$ & $I$ & $I$\\
$I$ & $I$ & $I$ & $I$ & $Z$ & $I$ & $I$ & $I$ & $I$ & $I$\\\hline
$I$ & $I$ & $I$ & $I$ & $I$ & $I$ & $I$ & $Z$ & $I$ & $I$\\
$I$ & $Z$ & $I$ & $I$ & $I$ & $I$ & $I$ & $I$ & $I$ & $I$\\
$I$ & $I$ & $I$ & $X$ & $I$ & $I$ & $I$ & $I$ & $I$ & $I$\\
$I$ & $I$ & $Z$ & $I$ & $I$ & $I$ & $I$ & $I$ & $I$ & $I$\\
$I$ & $I$ & $I$ & $I$ & $I$ & $Z$ & $I$ & $I$ & $I$ & $I$\\\hline
$I$ & $I$ & $I$ & $I$ & $I$ & $I$ & $I$ & $I$ & $X$ & $I$\\
$I$ & $I$ & $I$ & $I$ & $I$ & $I$ & $I$ & $I$ & $I$ & $X$%
\end{tabular}
\rightarrow%
\begin{tabular}
[c]{cccc|cccccc}%
\multicolumn{4}{c|}{Phys.} & \multicolumn{6}{|c}{Mem.}\\\hline\hline
$X$ & $X$ & $X$ & $X$ & $Z$ & $I$ & $I$ & $I$ & $I$ & $I$\\
$I$ & $I$ & $X$ & $X$ & $I$ & $I$ & $X$ & $I$ & $I$ & $I$\\
$I$ & $X$ & $I$ & $X$ & $I$ & $I$ & $I$ & $Z$ & $I$ & $I$\\
$I$ & $I$ & $X$ & $X$ & $I$ & $I$ & $I$ & $I$ & $Z$ & $I$\\
$X$ & $X$ & $X$ & $X$ & $I$ & $I$ & $I$ & $I$ & $I$ & $I$\\\hline
$Z$ & $Z$ & $Z$ & $Z$ & $I$ & $Z$ & $I$ & $I$ & $I$ & $I$\\
$I$ & $I$ & $Z$ & $Z$ & $I$ & $I$ & $I$ & $X$ & $I$ & $I$\\
$I$ & $Z$ & $I$ & $Z$ & $I$ & $I$ & $Z$ & $I$ & $I$ & $I$\\
$I$ & $I$ & $Z$ & $Z$ & $I$ & $I$ & $I$ & $I$ & $I$ & $Z$\\
$Z$ & $Z$ & $Z$ & $Z$ & $I$ & $I$ & $I$ & $I$ & $I$ & $I$\\\hline
$I$ & $I$ & $I$ & $I$ & $I$ & $I$ & $I$ & $I$ & $Z$ & $I$\\
$I$ & $I$ & $I$ & $I$ & $I$ & $I$ & $I$ & $I$ & $I$ & $Z$%
\end{tabular}
.
\]

\end{document}